\newtheorem{theorem}{Theorem}
\newtheorem{lemma}{Lemma}
\newtheorem{proposition}{Proposition}
\newtheorem{observation}{Observation}
\newtheorem{definition}{Definition}
\renewcommand{\paragraph}[1]{\medskip\noindent\textbf{#1}~}
\newcounter{open}
\newenvironment{open}{\refstepcounter{open}\begin{tcolorbox}[colback=gray!10, colframe=black, rounded corners]\textbf{Open Question \theopen:}}{\end{tcolorbox}}
\renewcommand*{\le}{\leqslant}
\renewcommand*{\leq}{\leqslant}
\renewcommand*{\ge}{\geqslant}
\renewcommand*{\geq}{\geqslant}
\renewcommand{\epsilon}{\varepsilon}
\renewcommand{\succeq}{\succcurlyeq}
\renewcommand{\bar}{\overline}
\newcommand{\N}{\mathbb{N}}
\newcommand{\R}{\mathbb{R}}
\newcommand{\floor}[1]{\left\lfloor{#1}\right\rfloor}
\newcommand{\ceil}[1]{\left\lceil{#1}\right\rceil}
\newcommand{\set}[1]{\left\{{#1}\right\}}
\newcommand{\card}[1]{\left|{#1}\right|}
\newcommand{\sd}{\textrm{SD}}
\newif\ifacronyms
\newlength{\abbrvskip}
\newcommand{\psd}{P-SD}
\newcommand{\pef}{P-EF}
\newcommand{\osd}{O-SD}
\newcommand{\oef}{O-EF}
\newcommand{\usd}{U-SD}
\newcommand{\uef}{U-EF}
\newcommand{\cspp}{\textsc{CutAndChoose++}}
\title{Temporal Fair Division}
\author{Benjamin Cookson \and Soroush Ebadian \and Nisarg Shah}
\date{%
    University of Toronto\\%
    \{bcookson,soroush,nisarg\}@cs.toronto.edu
}
\begin{document}

\maketitle

\begin{abstract}
We study temporal fair division, whereby a set of agents are allocated a (possibly different) set of goods on each day for a period of days. We study this setting, as well as a number of its special cases formed by the restrictions to two agents, same goods on each day, identical preferences, or combinations thereof, and chart out the landscape of achieving two types of fairness guarantees simultaneously: fairness on each day (per day) and fairness over time (up to each day, or the weaker version, overall).

In the most general setting, we prove that there always exists an allocation that is stochastically-dominant envy-free up to one good (SD-EF1) per day and proportional up to one good (PROP1) overall, and when all the agents have identical preferences, we show that SD-EF1 per day and SD-EF1 overall can be guaranteed. For the case of two agents, we prove that SD-EF1 per day and EF1 up to each day can be guaranteed using an envy balancing technique. We provide counterexamples for other combinations that establish our results as among the best guarantees possible, but also leaving open some tantalizing questions. 
\end{abstract}

\section{Introduction}\label{sec:intro}

How to divide a set of goods amongst a set of agents \emph{fairly} has been an enigma for centuries. There has been remarkable progress on this question in the last decade~\citep{ABFV22}. In the most prominent model, there is a set of $n$ agents $N$, each having an (additive) valuation over a set of goods $M$. The goal is to find an allocation $A = (A_1,\ldots,A_n)$ which partitions $M$ into pairwise-disjoint bundles, one allocated to each agent $i \in N$. 

This one-shot model fails to capture numerous real-world fair division scenarios in which goods are divided over time, e.g., food bank deliveries~\citep{lee2019webuildai}, resource allocation in data centers~\citep{ghodsi2013choosy}, allocation of advertising slots~\citep{MSVV07}, nurse shift scheduling~\citep{miller1976nurse}, and organ transplants~\citep{bertsimas2013fairness}. Compared to the one-shot setting, fair division of goods over time has received relatively little attention. 

Inspired by this, there has been a flurry of recent works that consider \emph{online fair division}, where agents or goods arrive over time and the principal needs to make allocations in an online fashion in the absence of any information regarding future arrivals~\citep{KPS14,benade2023fair}. The limit of feasible fairness guarantees have been explored under various adversary models~\citep{zeng2020fairness,benade2023fair}. 

However, in practice it is rarely the case that we have absolutely no information about the future. Significantly better guarantees have been established when even partial information about the future is available, either in the form of distributional knowledge~\citep{bogomolnaia2022fair,AKCM24} or machine-generated predictions~\citep{gkatzelis2021fair,banerjee2022online,BGHJ+23}.  But this work has left a very basic question wide open: \emph{How fair can we be if we had full information about the future?} 

To address this, we introduce the model of \emph{temporal fair division}, where a set of agents $N$ are allocated a set of goods $M_t$ on day $t$, over a period of days $t \in \set{1,\ldots,k}$, and the agents' valuations over the whole set of goods $M = \cup_{t=1}^k M_t$ are available upfront. At first glance, it may seem that this is just a traditional fair division problem where the set of goods $M$ needs to be divided amongst the set of agents $N$. The twist, however, is that in temporal fair division, agents anticipate fairness to prevail not solely at the end of the entire time horizon, but also at or within various interim time intervals. For example, the principal may be confident, based on their knowledge of the future, the allocation will eventually turn out to be fair, but that may not be assurance enough to the agents.

This leads us to seek \emph{temporal fairness notions} in our temporal fair division setting. Specifically, we take prominent fairness notions from one-shot fair division, and seek them on three temporal scales:
\begin{enumerate}
    \item[(1)] \emph{Per day:} The allocation of the set of goods $M_t$ on each day $t$ should be fair. 
    \item[(2a)] \emph{Overall:} The allocation of the whole set of goods $M$ in the end should be fair.
    \item[(2b)] \emph{Up to each day:} The allocation of the set of goods $\cup_{r=1}^t M_r$ up to each day $t$ should be fair. 
\end{enumerate}
Clearly, up to each day fairness (2b) is stronger than overall fairness (2a). Solely achieving per day fairness (1) or overall fairness (2a) can be reduced to one-shot fair division. Hence, we seek per day and overall fairness simultaneously (1+2a), or per day and up to each day fairness simultaneously (1+2b), or solely up to each day fairness (2b). Our main research question is to...
\begin{quote}
    \emph{...explore the limits of temporal fairness that can be guaranteed in temporal fair division.}    
\end{quote}

\subsection{Our Results \& Techniques}

\begin{table*}[ht]
\centering
\renewcommand{\arraystretch}{1.2}
\resizebox{\textwidth}{!}{%
\begin{tabular}{|l|l|l|l|l|}
\hline
& SD-EF1 up to each day & EF1 up to each day & SD-EF1(PROP1) Overall & EF1(PROP1) Overall \\
\hline
\textbf{General Setting} \\
\hline
SD-EF1 Per Day & X & X & X & \cellcolor{YellowGreen}$\checkmark^*$ (\Cref{thm:gen-PROP1-overall}) \\
EF1 Per Day & X & X & X & $\checkmark^*$ \\
$\emptyset$ & X & X~\citep{he2019informed} & \checkmark~\citep{aziz2020exanteexpost} & \checkmark~\citep{LMMS04} \\
\hline
\textbf{Two Agents} \\
\hline
SD-EF1 Per Day & X & \cellcolor{YellowGreen}\checkmark (\Cref{thm:2-sdef1per-ef1each})& X & \checkmark \\
EF1 Per Day & X & \checkmark & \cellcolor{YellowGreen}X (\Cref{thm:2-ef1per-sdef1over-NO}) & \checkmark \\
$\emptyset$ & \cellcolor{YellowGreen}X (\Cref{thm:2-sdef1upto-NO}) & \checkmark & \checkmark & \checkmark \\
\hline
\textbf{Identical Orderings} \\
\hline
SD-EF1 Per Day & X & ? & \cellcolor{YellowGreen}\checkmark (\Cref{thm:idpref-sdef1per-sdef1over}) & \checkmark \\
EF1 Per Day & X & ? & \checkmark & \checkmark \\
$\emptyset$ & \cellcolor{YellowGreen}X (\Cref{thm:idpref-iddays-sdef1uptoeach-NO}) & ? & \checkmark & \checkmark \\
\hline
\textbf{Identical Days} \\
\hline
SD-EF1 Per Day & X & ? & \cellcolor{YellowGreen}$\checkmark^*$ (\Cref{thm:iddays-SDPROP1-overall}) & $\checkmark^*$ \\
EF1 Per Day & X & ? & $\checkmark^*$ & $\checkmark^*$ \\
$\emptyset$ & \cellcolor{YellowGreen}X (\Cref{thm:idpref-iddays-sdef1uptoeach-NO}) & ? & \checkmark & \checkmark \\
\hline
\end{tabular}
}
\caption{Possibilities, impossibilities, and open questions in temporal fair division. \checkmark indicates a possibility result; X indicates an impossibility. ? indicates an open question. $\checkmark^*$ indicates achieving (SD-)PROP1 in place of (SD-)EF1 (unless indicated by a $*$, all other results are for (SD-)EF1 rather than (SD-)PROP1). Green highlights indicate the main results of this paper; non-highlighted cells are either open, already known, or implied by other results.}
\label{table1}
\end{table*}

We chart out the landscape of the aforementioned temporal fair division model in a general setting, with $n$ agents having additive, heterogeneous preferences. Further, we identify three relevant restrictions where we can circumvent some of the impossibilities of the general setting, and achieve very strong results. Those are: (1) when there are only two agents; (2) when all agents have the same ordering over the goods; and (3) when an identical set of goods arrives each day.

\begin{figure}[htb!]
    \centering
    \begin{tikzpicture}
    \tikzset{
        mynode/.style={
            draw,
            rectangle,
            rounded corners=5pt,
            inner sep=5pt,
            minimum width=2.5cm
        },
        myarrow/.style={
            ->,
            thick,
            -{Latex[length=2mm]}
        },
        every node/.style={font=\small}
    }
    \node[mynode] (X1) at (-2,0) {SD-EF1 per day};
    \node[mynode] (X2) at (-2,-3) {EF1 per day};
    \draw[myarrow] (X1) -- (X2);
    \ifacronyms
    \node[left=\abbrvskip of X1]{\psd};
    \node[left=\abbrvskip of X2]{\pef};
    \fi
    
    \node[mynode] (Y1) at (2.8,0) {SD-EF1 up to each day};
    \node[mynode] (Y2) at (0.5,-1.5) {SD-EF1(PROP1) overall};
    \node[mynode] (Y3) at (5,-1.5) {EF1 up to each day};
    \node[mynode] (Y4) at (2.8,-3) {EF1(PROP1) overall};
    \draw[myarrow] (Y1) -- (Y2);
    \draw[myarrow] (Y1) -- (Y3);
    \draw[myarrow] (Y2) -- (Y4);
    \draw[myarrow] (Y3) -- (Y4);
    \ifacronyms
    \node[left=\abbrvskip of Y1]{\usd};
    \node[left=\abbrvskip of Y2]{\osd};
    \node[right=\abbrvskip of Y3]{\uef};
    \node[left=\abbrvskip of Y4]{\oef};
    \fi
    
    \end{tikzpicture}
    \caption{Hierarchy of temporal fairness notions.}
    \label{fig:hierarchy}
\end{figure}
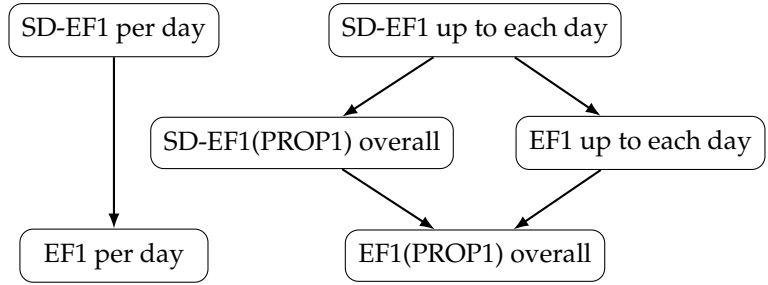

In these settings, we seek the fairness guarantees of EF1, SD-EF1, PROP1, and SD-PROP1 at the temporal scales of per day, overall, and up to each day. The various temporal fairness definitions are depicted in \Cref{fig:hierarchy}, with arrows indicating logical implications. We discover several surprising results, and develop novel algorithmic tools along the way, which may be of independent interest. 

In \Cref{sec:general}, we present an algorithm for finding temporally fair allocations in our most general setting. Specifically, we show how to obtain an allocation that is PROP1 overall, and SD-EF1 per day in polynomial time (\Cref{thm:gen-PROP1-overall}).

In \Cref{sec:two,sec:identical,sec:samedays}, we look at the restricted settings of two agents, identical orderings, and identical days, respectively. In these settings, we provide algorithms that give very strong fairness guarantees that are impossible in the general case (\Cref{thm:2-sdef1per-ef1each,thm:idpref-sdef1per-sdef1over,thm:iddays-SDPROP1-overall}), while also showing which fairness desiderata is still too strong even after applying these restrictions (\Cref{thm:2-ef1per-sdef1over-NO,thm:2-sdef1upto-NO,thm:idpref-iddays-sdef1uptoeach-NO}). For the two-agent case, we provide a complete picture of what is possible and impossible. In the case of identical orderings and identical days, we provide a near complete picture, while leaving open some interesting  questions surrounding the powerful notion of ``up to each day'' fairness. 

Finally, in \Cref{sec:laminar}, we consider broader interpretations of the temporal fairness model. We abstract the ideas of temporal fairness to explore achieving fairness over some set of goods $M$ while also achieving fairness over some collection of subsets of $M$. We talk about the implications of our results to this broader model, and provide extensions of some of our results.

Most of our results, along with several open questions, are summarized in \Cref{table1}.

\subsection{Related Work}\label{sec:related}
Our temporal fair division model is related to (but separate from) several fair division models studied in the literature. 

\paragraph{Repeated fair division.} The repeated fair division model of \citet{igarashi2024repeated}, which is the case of identical days in our more general model, is the most closely related to our work (though they deal with mixed goods and chores rather than just goods). Some of their results are for two agents (still with identical days). In \Cref{app:sec-4}, we show some results similar to their two agent results in our model, namely, that SD-EF1 per day and SD-EF1 up to each day can be achieved simultaneously, while also achieving SD-EF on every even day. This result is the only overlap between our works, and we present it again because we obtain a shorter proof with a much simpler algorithm when dealing with instances with only goods. The rest of their results with two or more agents seek exact fairness guarantees, such as (exact) envy-freeness overall, in limited cases such as when the number of days is a multiple of the number of agents. 

\paragraph{Online fair division.} In the online fair division model, goods arrive one by one and must be irrevocably allocated to an agent upon arrival with no knowledge of agent preferences over the goods to arrive later. Typically, one seeks to maintain a certain level of fairness. Clearly, any online fair division algorithm can be simulated in our temporal fair division model to achieve the same guarantee up to each day. One online fair division paper of particular note is that of \citet{he2019informed}. They introduce the ``Informed Model'' of online fair division, where irrevocable allocation decisions must be made as goods arrive one at a time in adversarial order, but the allocation algorithm is given all goods and the order they will arrive in upfront. The main goal in the informed model is to achieve an allocation that remains EF1 after each good is allocated. Clearly, this is equivalent to achieving EF1 up to each day in temporal fair division when each day only contains a single good. \citeauthor{he2019informed} conclude that it is impossible to allocate the goods in such a way that EF1 is always maintained. By corollary, EF1 up to each day is also infeasible. In another paper, \citet{benade2023fair} show that $O(\sqrt{k \log n})$ envy can be maintained up to $k$ days, and also point out that randomized algorithms may have much greater power against a nonadaptive adversary, who sets the full instance before the algorithm starts making random choices, with no super-constant envy lower bound known for this case. Online fair division with a nonadaptive adversary is still a stronger model than temporal fair division due to the fact that the algorithm does not have knowledge of what goods will arrive in future time periods. 

\paragraph{Repeated (or many-to-many) matching.} Our work is also inspired by the earlier work of \citet{gollapudi2020almost}, who consider the repeated two-sided matching problem, where there are $n$ agents on each side of a two-sided market with agents on each side having preferences over those on the other side, and the goal is to compute a perfect matching on each day over a period of days. They also seek guarantees such as EF1 up to each day. However, their positive results are only for binary valuations, and they leave achieving EF1 (for both sides) up to each day for general additive valuations as an open question. Finally, note that repeated perfect matching effectively produces a many-to-many matching. \citet{FMS21} study how to achieve EF1 (for both sides) in this setting, which can be viewed as an EF1 overall guarantee. They show how to achieve it when agents on each side have identical preferences, but leave it open for the case of general additive valuations. Note that unlike in fair division, EF1 overall is not straightforward in their case because EF1 needs to be achieved for agents on both sides simultaneously. 

\paragraph{Constrained fair division.} We remarked that achieving an overall fairness guarantee can be reduced to the one-shot fair division model, taking an instance with the set of all goods $M$. When we additionally seek a per day fairness guarantee, this can be modeled as a constraint on the space of feasible allocations, and the question becomes whether there is a \emph{constrained allocation} that still achieves the desired fairness guarantee. This model of constrained fair division has also been studied in the literature. \citet{biswas2018fair} study a model with cardinality constraints, where $M$ is partitioned into categories $(C_1,\ldots,C_p)$ and an allocation $A$ is feasible only if $|A_i \cap C_\ell| \le \ceil{|C_\ell|/n}$ for all $i,\ell$. That is, the allocation should divide the goods from each category as evenly as possible (in a ``balanced'' manner). As we remark in \Cref{sec:identical}, when agents have identical orderings over the goods, the SD-EF1 per day constraint can be reduced to a cardinality constraint, immediately yielding an allocation that is SD-EF1 per day and EF1 overall. However, for this case, we are able to achieve the stronger guarantee of SD-EF1 per day and SD-EF1 overall. Cardinality constraints (partition matroid constraints) have been generalized to matroid constraints, and the existence of an EF1 allocation subject to matroid feasibility constraints is a major open question~\citep{biswas2018fair,dror2023fair}. Finally, the bihierarchy framework of \citet{BCKM13} can also be viewed as a method for finding a constrained allocation, which we use in some of our results. Although, our most interesting results deal with sets of constraints that go beyond bihierarchies. 

Finally, contemporarily to and independently of our work, \citet{elkind2024temporal} also study the same temporal fair division model as ours, but their results have no overlap with ours. In particular, they focus only on EF1 up to each day, make the same observation as we do that the algorithm of \citet{he2019informed} effectively achieves this for allocating goods to two agents, and extend this in three ways: 
\begin{enumerate}
    \item They observe that the essentially same algorithm of \citet{he2019informed} extends to allocating chores to two agents as well. In comparison, we extend it to add stronger per day fairness guarantees and from ``up to each day'' to arbitrary laminar constraints.  
    \item They show that achieving both EF1 and PO up to each day is impossible, even with just two agents.
    \item They notice that for $n \ge 2$ agents and two identical days, performing a round-robin allocation (say in the order of agents $1$ through $n$) followed by a round-robin allocation in the reverse order (agents $n$ through $1$) achieves EF1 up to both days.\footnote{This is simply because no agent $i$ envies any agent $j$ with $j > i$ on day $1$, so ignoring the good picked by agent $j$ on day $2$ eliminates any envy from agent $i$ towards agent $j$, and a symmetric argument eliminates any envy from agent $j$ towards agent $i$ by ignoring the good picked by agent $i$ on day $1$.} As we observe, the existence of such an allocation for more than two days remains open.
\end{enumerate}
They also derive a number of hardness results for checking the existence of a temporally fair allocation when such existence is not guaranteed. 

\section{Preliminaries}\label{sec:prelim}

\subsection{Model}
For any $r \in \N$, define $[r] \triangleq \{1,2,\dots,r\}$. A multiset is a set that allows repetitions.

\paragraph{Agents, goods, and valuations.} Let $N = [n]$ be a set of agents who are allocated a set of goods on each day over $k$ consecutive days. For $t \in [k]$, denote by $M_t$ the set of goods to be allocated on day $t$, $\bar{M}_t = \cup_{r \in [t]} M_r$ the set of goods up to day $t$, and $M = \bar{M}_k = \cup_{t \in [k]} M_t$ the set of all goods. We can view $(M_1,\ldots,M_k)$ as a partition of $M$. Each agent $i \in N$ has an additive valuation function $v_i : 2^M \to \R_{\ge 0}$, where $v_i(\set{g})$ (henceforth, with a slight abuse of notation, written as $v_i(g)$) is her utility for receiving good $g \in M$ and $v_i(S) = \sum_{g \in S} v_i(g)$ for all $S \subseteq M$. Collectively, $(N,(M_1,\ldots,M_k),\set{v_i}_{i \in N})$ form an instance of \emph{temporal fair division}. An instance with $k=1$ is a (regular) fair division instance, so a temporal fair division instance can be viewed as a sequence of fair division instances in which the same agents participate.

\paragraph{Preferences.} Define $\succeq_i$ (resp., $\succ_i$) as the weak (resp., strict) ordering over the goods in $M$ induced by $v_i$, where, for all $g,g' \in M$, $g \succeq_i g'$ if and only if $v_i(g) \geq v_i(g')$ and $g \succ_i g'$ if and only if $v_i(g) > v_i(g')$. For all $S \subseteq M$ and all $r \in \N$, define $T_i(S,r)$ to be the $r$ most preferred goods among the goods in $S$ according to the ordering $\succeq_i$; all ties are broken consistently across $i$, $S$, and $r$.\footnote{That is, we first use an arbitrary global ordering over $M$ as the tiebreaker to convert the weak ordering $\succeq_i$ of every agent $i \in N$ into her strict ordering over $M$, and then compute all $T_i(S,r)$-s according to these strict orderings. Our negative results do not depend on this tie-breaking and positive results hold regardless of it.} 

\paragraph{Allocations.} An allocation $A = (A_1,\ldots,A_n)$ is a partition of $M$ into $n$ pairwise-disjoint bundles, where $A_i$ is the bundle allocated to agent $i$. For $S \subseteq M$, let $A_S = (A_{S,1},\ldots,A_{S,n})$ be the allocation of the goods in $S$ that is induced by $A$ (i.e., for each good $g \in M$ and agent $i \in N$, $g \in A_{S,i}$ if and only if $g \in S$ and $g \in A_i$). For $t \in [k]$, we refer to $A_{M_t}$ as the allocation on day $t$ and $A_{\bar{M}_t}$ as the allocation up to day $t$.

\paragraph{Restrictions.} We study three restrictions of this general setup (and their combinations).

\begin{enumerate}
    \item \emph{Two agents:} $|N|=2$.
    \item \emph{Identical valuations/orderings:} Under identical valuations, $v_i = v$ for all agents $i \in N$. Under identical orderings, $\succeq_i = \succeq$ for all agents $i \in N$ (each agent has the same weak ordering over the goods). Here, we simply write $v$, $\succeq$, and $T(S,r)$, skipping the agent in the subscript. For this case, our results deal with desiderata which depend only on the orderings; thus, no distinction between valuations and orderings is necessary.\footnote{In other words, our positive results hold even under identical orderings (weaker restriction), while our negative results hold even under identical valuations (stronger restriction).}
    \item \emph{Identical days:} Informally, copies of the same goods are allocated on each day. Formally, for all days $t,t' \in [k]$, there is a bijection $f_{t,t'} : M_t \rightarrow M_{t'}$ such that $v_i(g) = v_i(f_{t,t'}(g))$ for all agents $i \in N$ and goods $g \in M_t$.
\end{enumerate}

\subsection{Fairness Desiderata}\label{sec:prelim-fairness}

Let us introduce envy-freeness, proportionality, and their variants, which are arguably the most appealing fairness desiderata and play a key role throughout our work. Later, we will introduce their temporal extensions. Other notions referred to in specific sections will be introduced therein.  

\begin{definition}[Envy-Freeness (EF)]
    An allocation $A$ of a set of goods $S$ is envy-free (EF) if for all $i,j \in N$, $v_i(A_i) \geq v_i(A_j)$, i.e., no agent strictly prefers another agent's allocation to her own.
\end{definition}

Unfortunately, EF cannot be guaranteed with indivisible goods (e.g., take the prototypical example of a single good and two agents). Hence, the following relaxation has been widely studied. 

\begin{definition}[Envy-Freeness Up to One Good (EF1)]
    An allocation $A$ of a set of goods $S$ is envy-free up to one good (EF1) if for all $i,j \in N$ with $A_j \neq \emptyset$, there exists a $g \in A_j$ such that $v_i(A_i) \geq v_i(A_j \setminus \set{g})$, i.e., no agent envies another agent if some good from the latter agent's bundle is removed.
\end{definition}

In addition to the above envy-based fairness notions, we will also introduce a weaker notion of measuring fairness that does not require directly comparing one agent's bundle to another's.

\begin{definition}[Proportionality (PROP)]
    An allocation $A$ of a set of goods $S$ is proportional (PROP), if for all $i \in N$, $v_i(A_i) \geq \frac{1}{n}v_i(S)$, i.e., every agent gets a bundle that they view as being worth at least a $\frac{1}{n}$ share of the total set of goods $S$.
\end{definition}

However, just like EF, PROP is too strong to be universally guaranteed (the same counterexample as above will suffice to show this). Therefore, we also introduce an analogous weakening of PROP.

\begin{definition}[Proportionality Up to One Good (PROP1)]
    An allocation $A$ of a set of goods $S$ is proportional up to one good (PROP1) if for all $i \in N$ such that $A_i \not= S$, there exists a good $g \in S \setminus A_i$, such that $v_i(A_i \cup \set{g}) \geq \frac{1}{n}v_i(S)$.
\end{definition}

It is well known that EF1 is a stronger notion than PROP1 \citep{CFS17}.

Given a set of goods $S$ and a good $g \in S$, define an agent $i$'s top-set with respect to $S$ as $H_i(S,g) = \set{g' \in S : g' \succeq_i g}$. When given only a weak ordering $\succeq_i$ over a set of goods $S$, we can compare two bundles $X,Y \subseteq S$ using the \emph{stochastic dominance} (SD) relation: $X \succeq_i^\sd Y$ if for all $g \in S$, $\card{X \cap H_i(S,g)} \ge \card{Y \cap H_i(S,g)}$. That is, $X$ has at least as many goods weakly preferred to any good as $Y$ has. It is known that $X \succeq_i^\sd Y$ if and only if $v_i(X) \ge v_i(Y)$ for all (additive) valuations $v_i$ over $S$ that would induce $\succeq_i$. Hence, using the SD comparison in the EF and EF1 definitions yields their stronger counterparts, which have also been studied extensively~\citep{AGMW15,FMS21,AFSV23}.

\begin{definition}[SD-EF]
    An allocation $A$ of a set of goods $S$ is stochastically-dominant envy-free (SD-EF) if for all $i,j \in N$, $A_i \succeq_i^\sd A_j$.
\end{definition}

\begin{definition}[SD-EF1]
    An allocation $A$ of a set of goods $S$ is stochastically-dominant envy-free up to one good (SD-EF1) if for all $i,j \in N$ with $A_j \neq \emptyset$, there exists a $g \in A_j$ such that $A_i \succeq_i^\sd A_j \setminus \set{g}$.
\end{definition}

We also introduce a stochastic-dominance extension of PROP1. 
\begin{definition}[SD-PROP1]
    An allocation $A$ of a set of goods $S$ is stochastically-dominant proportional up to one good (SD-PROP1) if, for all $i \in N$ with $A_i \neq S$, there exists a $g \in S \setminus A_i$ such that $|(A_i \cup \set{g}) \cap H_i(S,g')| \ge \ceil{\nicefrac{|H_i(S,g')|}{n}}$ for all $g' \in S$.
\end{definition}
Informally, $A_i$, after adding at most one good to it, must contain at least $\ceil{\nicefrac{k}{n}}$ goods among the $k$ most preferred goods of agent $i$ in $S$, for each $k \in [|S|]$.

Just as EF1 implies PROP1, we have that SD-EF1 implies SD-PROP1, and similarly to SD-EF1, if an allocation $A$ is SD-PROP1 for certain orderings $\set{\succeq_i}_{i \in N}$, then $A$ will be PROP1 for any additive valuation functions that induce $\set{\succeq_i}_{i \in N}$. Both these facts are proven in \Cref{app:sec-2}.

Finally, we establish some necessary and sufficient conditions for the existence of SD-EF1 allocations. These technical results help simplify many of the proofs that rely on stochastic-dominance, as they relate SD-EF1 to the $T_i(S,g)$ function defined earlier.

\begin{restatable}{proposition}{sdefprop}\label{prop:sd-ef1}
    Let $A$ be an allocation of a set of goods $S$. 
    \begin{itemize}
        \item (Sufficiency) If $\card{A_i \cap T_i(S,r)} \ge \card{A_j \cap T_i(S,r)} - 1$ for all $i \in N$ and $r \in [|S|]$, then $A$ is SD-EF1. If $n=2$, the condition can be written as $\card{A_i \cap T_i(S,r)} \ge \floor{r/n}$. If $T_i(S,r) = T_j(S,r)$ for all $i,j \in N$ and $r \in [|S|]$, the condition can be written as $\card{A_i \cap T_i(S,r)} \in \set{\floor{r/n},\ceil{r/n}}$.
        \item (Necessity) If $A$ is SD-EF1, then $\card{A_i \cap T_i(S,r)} \ge \floor{r/n}$ for all $i \in N$ and $r \in [|S|]$ conditioned on $g \succ_i g'$ for all $g \in T_i(S,r)$ and $g' \in S\setminus T_i(S,r)$. Further, if $T_i(S,r) = T_j(S,r)$ for all $i,j \in N$ and $r \in [|S|]$, then the condition can be written as $\card{A_i \cap T_i(S,r)} \in \set{\floor{r/n},\ceil{r/n}}$. 
    \end{itemize}
\end{restatable}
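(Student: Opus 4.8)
The plan is to reduce everything to a single bridge observation relating the tie-broken prefix sets $T_i(S,r)$ to the weak top-sets $H_i(S,g)$ that define the SD relation. Under consistent tie-breaking, every good in $H_i(S,g)$ is weakly preferred to $g$ while every good outside is strictly worse, so the tie-broken strict order ranks all of $H_i(S,g)$ above all of $S \setminus H_i(S,g)$; hence $H_i(S,g) = T_i(S,\card{H_i(S,g)})$. Conversely, if $T_i(S,r)$ is \emph{clean} (the strict-gap condition that $g \succ_i g'$ for all $g \in T_i(S,r)$, $g' \in S \setminus T_i(S,r)$), then taking $g^\ast$ to be its least-preferred good gives $H_i(S,g^\ast) = T_i(S,r)$. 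Thus the SD comparison $X \succeq_i^\sd Y$, which ranges over all $H_i(S,g)$, is implied by $\card{X \cap T_i(S,r)} \ge \card{Y \cap T_i(S,r)}$ holding for every $r$ (since each $H$ is some $T$), and for the necessity direction it suffices to read off the SD inequalities at the clean values of $r$.

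For \textbf{sufficiency}, I would fix a pair $i,j$ and let $g$ be the most preferred good of $A_j$ according to $\succeq_i$, of rank $\rho$. Removing $g$ drops $\card{A_j \cap T_i(S,r)}$ by exactly one for every $r \ge \rho$ and leaves it at zero for $r < \rho$ (no good of $A_j$ outranks its own top good). Combined with the hypothesis $\card{A_i \cap T_i(S,r)} \ge \card{A_j \cap T_i(S,r)} - 1$, this yields $\card{A_i \cap T_i(S,r)} \ge \card{(A_j \setminus \set{g}) \cap T_i(S,r)}$ for all $r$, and the bridge observation upgrades this to $A_i \succeq_i^\sd A_j \setminus \set{g}$, i.e.\ SD-EF1. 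The two special forms then follow by counting: since the bundles partition $S$, we always have $\sum_{\ell} \card{A_\ell \cap T_i(S,r)} = r$. For $n=2$ this turns $\card{A_j \cap T_i(S,r)} = r - \card{A_i \cap T_i(S,r)}$ into the equivalent threshold $\card{A_i \cap T_i(S,r)} \ge \floor{r/n}$; under identical orderings the pairwise condition is exactly the statement that the $n$ counts differ by at most one, which together with their summing to $r$ forces each into $\set{\floor{r/n},\ceil{r/n}}$.

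For \textbf{necessity}, SD-EF1 gives, for each pair $i,j$, a good $g_{ij} \in A_j$ with $A_i \succeq_i^\sd A_j \setminus \set{g_{ij}}$; evaluating this at a clean $T_i(S,r) = H_i(S,g^\ast)$ and noting that deleting one good changes an $H$-count by at most one yields $\card{A_i \cap T_i(S,r)} \ge \card{A_j \cap T_i(S,r)} - 1$ for every $j$. Summing over $j \ne i$ and substituting $\sum_{j \ne i}\card{A_j \cap T_i(S,r)} = r - \card{A_i \cap T_i(S,r)}$ gives $n\,\card{A_i \cap T_i(S,r)} \ge r-n+1$, hence $\card{A_i \cap T_i(S,r)} \ge \ceil{(r-n+1)/n} = \floor{r/n}$. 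Under identical orderings the same pairwise bound says the $n$ equal counts lie within one of each other, and with their sum equal to $r$ this pins each to $\set{\floor{r/n},\ceil{r/n}}$, matching the stated refinement.

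The one genuinely delicate point, and the step I expect to be the main obstacle, is the tie-breaking bookkeeping in the bridge observation: the SD relation is defined through the \emph{weak} top-sets $H_i$, whereas $T_i(S,r)$ uses the globally tie-broken \emph{strict} order, so I must be careful that the $T$-condition is genuinely stronger than SD (it can fail among goods the agent regards as tied) and that the clean strict-gap hypothesis in the necessity direction is precisely what is needed to express $T_i(S,r)$ as a legitimate $H_i(S,g^\ast)$. Everything else is elementary counting over the partition $\set{A_\ell}_{\ell \in N}$.
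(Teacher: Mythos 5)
Your proposal is correct and follows essentially the same route as the paper: your ``bridge observation'' is exactly the paper's key lemma that $H_i(S,g) = T_i(S,\card{H_i(S,g)})$ (and that a clean $T_i(S,r)$ equals some $H_i(S,g^\ast)$), your sufficiency argument removes agent $i$'s top-ranked good of $A_j$ just as the paper does, and the special forms rest on the same partition-counting. The only differences are cosmetic --- you argue sufficiency directly rather than by contradiction, and derive the necessity bound $\floor{r/n}$ by summing the pairwise inequalities instead of the paper's pigeonhole argument --- so nothing of substance separates the two proofs.
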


The proof of \Cref{prop:sd-ef1}, along with some additional discussion on its usefulness is provided in \Cref{app:sec-2}.    

\subsection{Temporal Fairness}\label{sec:prelim-temporal}

In a temporal fair division instance given by a set of goods $M$ partitioned as $(M_1,\ldots,M_k)$ across $k$ days, we can ask for fairness to hold at different levels of granularity, yielding various temporal extensions of the fairness desiderata introduced above. These extensions also apply to any other type of desiderata (e.g., efficiency). 

\begin{definition}[Per Day Fairness]
    For desideratum $X$, allocation $A$ satisfies $X$ \textit{per day} if $A_{M_t}$ satisfies $X$ for all $t \in [k]$. 
\end{definition}

\begin{definition}[Overall Fairness]
    For desideratum $X$, allocation $A$ satisfies $X$ \emph{overall} if $A_M = A$ satisfies $X$.
\end{definition}

\begin{definition}[Up To Each Day Fairness]
    For desideratum $X$, allocation $A$ satisfies $X$ \textit{up to each day} if $A_{\bar{M}_t}$ satisfies $X$ for all $t \in [k]$.
\end{definition}

Note that `up to each day' is a strengthening of `overall', while `per day' is incomparable to those two. Plugging in our fairness desiderata into these three temporal extensions gives us the hierarchy of fairness guarantees depicted in \Cref{fig:hierarchy}. Because SD-EF1 is achievable for (regular) fair division (e.g., via a simple round-robin procedure~\citep{CKMP+19}), SD-EF1 per day and SD-EF1 overall are both individually achievable, implying the same for EF1, PROP1, and SD-PROP1. 


\section{General Preferences}\label{sec:general}

In this section, we present temporal fair division results in the most general setting: an arbitrary set of goods arrives each day, and each agent has arbitrary additive preferences over them. Let us present our main result for this general setting. 

\begin{theorem}\label{thm:gen-PROP1-overall}
    For any temporal fair division instance, an allocation that is SD-EF1 per day and PROP1 overall exists and can be computed in polynomial time.
\end{theorem}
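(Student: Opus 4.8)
The plan is to construct the allocation day by day, ensuring SD-EF1 \emph{per day} by allocation on each individual day, while simultaneously steering the cumulative allocation towards PROP1 \emph{overall}. The key insight is that PROP1 is a much weaker target than EF1: by \Cref{prop:sd-ef1}, SD-EF1 on a single day $t$ requires $\card{A_i \cap T_i(M_t,r)} \ge \card{A_j \cap T_i(M_t,r)} - 1$ for all $r$, which a round-robin allocation on $M_t$ already guarantees. The difficulty is that independently running round-robin on each day gives no control over the cumulative bundles, so a naive per-day approach need not be PROP1 overall. I would therefore exploit the freedom in \emph{which agent goes first} (or more generally the picking order) on each day to correct cumulative imbalances.

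\textbf{Key steps.} First I would recall that a round-robin allocation on a single day's goods $M_t$, under any fixed agent order, is SD-EF1 on that day, and moreover gives each agent at least $\floor{|M_t|/n}$ goods and at most $\ceil{|M_t|/n}$ goods counted within any top-set $T_i(M_t,r)$. Second, I would set up a potential/deficit argument tracking, for each agent $i$, how far $A_i$ currently is from its proportional share $\tfrac{1}{n}v_i(M)$ accumulated so far. The core idea is that on each day, the agent who receives a good \emph{first} in the round-robin gets (weakly) the most valuable available good by their own valuation, so by prioritizing agents who are currently furthest behind their proportional target, the deficit can be kept bounded by a single good's value. Third, I would argue that maintaining ``every agent is within one good of its running proportional share'' as an invariant across days, combined with additivity, yields PROP1 overall at the end: there exists a single good $g$ (the best good denied to a behind agent) whose addition brings the agent up to the $\tfrac{1}{n}$ threshold.

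\textbf{Main obstacle.} The hard part will be reconciling the per-day SD-EF1 constraint with the overall PROP1 target \emph{simultaneously}, since the agent who most needs a good for proportionality (measured over all goods $M$) may not be the one a round-robin on $M_t$ would naturally favor. Specifically, PROP1 overall is measured against $v_i(M)$, which includes goods on \emph{future} days, so I must use the upfront knowledge of all of $M$ to define each agent's target, and show that a per-day round-robin — which only sees $M_t$ — can still be ordered to chip away at a global deficit. I expect the crux to be a charging argument showing that across all days, each agent $i$ accumulates enough value so that $v_i(A_i) + \max_{g \notin A_i} v_i(g) \ge \tfrac1n v_i(M)$; the per-day ``within one good'' slack must be shown not to compound across days into an unbounded overall violation. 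One clean route is to select, on each day, the picking order greedily by current normalized deficit and then prove that the single worst-case missing good suffices globally, rather than one missing good \emph{per day}; establishing this non-compounding is the technical heart of the argument.
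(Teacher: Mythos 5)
Your high-level framing --- that the final allocation should be built from per-day picking sequences (which automatically give SD-EF1 per day), with the freedom in choosing each day's picking order used to secure the overall guarantee --- matches the structure of the paper's algorithm: the last phase of \Cref{alg:one} is exactly a per-day serial dictatorship. However, the mechanism you propose for choosing the orders, greedy prioritization by running deficit, fails, and the step you defer as ``the technical heart'' (non-compounding of the per-day slack) is precisely where it breaks. Concretely, take $n=2$ and $k=5$ days. Days $1,\dots,4$ each contain goods $a_t,b_t$ with $v_1(a_t)=1$, $v_2(a_t)=0.1$, and $v_1(b_t)=v_2(b_t)=0$; day $5$ contains goods $c,d$ with $v_2(c)=6$ and $v_1(c)=v_1(d)=v_2(d)=0$. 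Then $v_1(M)=4$ (share $2$) and $v_2(M)=6.4$ (share $3.2$). Agent $2$'s deficit starts at $3.2$ and decreases by only $0.1$ per day, while agent $1$'s deficit stays at $2$; hence the deficit rule lets agent $2$ pick first on \emph{every} day, and she takes her favorite available good each time: $a_t$ on days $1$--$4$ and $c$ on day $5$. Agent $1$ ends with $\set{b_1,\dots,b_4,d}$, worth $0$, and the best good outside her bundle is worth $1$, so $v_1(A_1) + \max_{g \notin A_1} v_1(g) = 1 < 2 = \tfrac12 v_1(M)$: PROP1 fails, even though every per-day allocation is SD-EF1. The normalized-deficit variant you mention at the end fails for the same reason (set $v_2(a_t)=1$ and $v_2(c)=(k-1)^2$; an agent whose value is concentrated on a huge future good retains the larger normalized deficit for essentially all days while repeatedly snatching goods the other agent needs, and any day-one tie can be handled by mirroring the construction). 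The shortfall compounds by one good per day, and no single good can repair it.

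The paper avoids this trap by abandoning local correction entirely. It first homogenizes each day ordinally --- replacing $M_t$ by virtual goods on which all agents share a common order while each agent keeps her own multiset of values --- then computes a single \emph{global} EF1 allocation of the virtual goods subject to the cardinality constraint of at most one virtual good per agent from each block of $n$ (\Cref{thm:cc}, due to Biswas and Barman), and only then converts that allocation into per-day picking sequences. The global EF1 guarantee in the auxiliary instance is what rules out compounding; no greedy per-day rule is invoked anywhere. Moreover, even once one has an allocation that pointwise dominates a PROP1 allocation, transferring PROP1 is delicate because PROP1 is not monotone: improving an agent's bundle can shrink the ``best missing good'' term. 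The paper handles this with a dedicated bijection argument (\Cref{lem:prop1-helper,lem:alg1-prop1}), a subtlety your sketch does not address. To salvage a day-by-day argument you would need an ordering rule with a provable global invariant, and the counterexamples above show that the natural deficit-based rules admit none.
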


We find such an allocation using \Cref{alg:one}. The derivation of \Cref{thm:gen-PROP1-overall} can be divided into three conceptual steps. 

\newcommand{\ccalg}{\operatorname{BarmanBiswasCC}}

\begin{algorithm}[t]
\textbf{Input} {A Temporal Fair Division Instance $(N,M=\set{M_1,\dots,M_k},v)$}
    
\textbf{Output} {An allocation $A$ of the set of goods $M = \cup_{t \in [k]} M_t$}
\caption{SD-EF1 Per Day + PROP1 Overall}\label{alg:one}
\begin{algorithmic}[1]
\COMMENT{\small // Identical ordering transformation}
\STATE $v'_i = \emptyset$ for all $i \in N$
\FOR{$t \in [k]$}
    \STATE $M'_t \gets \{g'_{t, 1}, \ldots, g'_{t, |M_t|}\}$
    \FOR{$i \in N$} 
        \STATE $o_{i, t}$ be the goods $M_t$ in non-increasing order of $v_i$
        \STATE $v'_i \gets v'_i \cup \{g'_{t, j} \to v_i(o_{i, t}(j))\}$ for all $j \in [|M_t|]$
    \ENDFOR
\ENDFOR

\vspace{0.5em}
\COMMENT{\small // Invoking the result on EF1 subject to cardinality constraints}
\FOR{$t \in [k]$}
    \STATE Divide $M'_t$ into groups of size $n$ (last one may have less than $n$), i.e., $C_{t, 1} \gets \{g'_1, \ldots, g'_n\}$, $C_{t, 2} \gets \{g'_{n + 1}, \ldots, g'_{2n}\},$ $\ldots$, $C_{t, \lceil |M'_t| / n \rceil} \gets \{g'_{k'n}, \ldots g'_{|M'_t|}\}$
\ENDFOR
\STATE $A' \gets \ccalg(\bigcup_{t \in [k]} \bigcup_{j} C_{t, j}, v')$

\vspace{0.5em}
\COMMENT{\small // Final allocation with daily picking sequences based on $A'$}
\STATE $A \gets \emptyset$
\FOR{$t \in [k]$}
    \FOR{$j \in [|M_t|]$}
        \STATE $i \gets $ Agent allocated $g_{t,j}'$ in $A'$
        \STATE $g \gets i$'s favourite unallocated good from $M_t$
        \STATE $A_i \gets A_i \cup \set{g}$
    \ENDFOR
\ENDFOR
\RETURN $A$
\end{algorithmic}
\end{algorithm}

\paragraph{Identical ordering transformation.} \Cref{alg:one} begins by creating an auxiliary temporal fair division instance as follows. For each day $t$, it creates a new instance for that day with a set of goods $M'_t$ and valuations $v'$ such that agents have identical orderings but with the same set of utility values as they had previously. More formally, let $M'_t = \{g'_{t,1}, \ldots, g'_{t, |M'_t|}\}$. Then, for each agent $i \in N$, $v'_i(g'_{t, 1}) \ge v'_i(g'_{t, 2}) \ldots \ge v'_i(g'_{t, |M'_t|})$ (common ordering) and there exists a bijection $o_{i, t}$ between $M_t$ and $M'_t$ such that $v_i(g) = v'_i(o_{i, t}(g))$ for all $g \in M_t$ (same utility values). This technique has been used previously for designing algorithms to achieve (approximate) maximin share fairness (MMS)~\citep{BL14}, but we use it with a novel and nontrivial analysis to ensure PROP1.

\paragraph{Connection to cardinality constraints.}
\Cref{alg:one} invokes a key subroutine due to \citet{biswas2018fair} that returns an EF1 allocation subject to cardinality constraints summarized below.

\begin{theorem}[Theorem 1 of \citep{biswas2018fair}]
\label{thm:cc}
Given $p$ disjoint sets of goods $C_1, \ldots, C_p$ and $n$ agents with heterogeneous additive valuations, there always exists an EF1 allocation $A$ such that $\lfloor |C_\ell|/n \rfloor \le |A_i \cap C_\ell| \le \lceil |C_\ell|/n \rceil$ for every agent $i \in N$ and $\ell \in [p]$, and such an allocation can be computed in polynomial time.
\end{theorem}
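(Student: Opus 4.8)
The plan is to prove the theorem constructively, via a category-aware version of envy-cycle elimination that allocates the goods of one category at a time in balanced ``rounds.'' Throughout, I maintain the invariant that the current partial allocation is (i) EF1 and (ii) \emph{balanced} on every category processed so far, i.e. $|A_i \cap C_\ell| \in \{\lfloor |C_\ell|/n\rfloor, \lceil |C_\ell|/n\rceil\}$ for every agent $i$ and every completed category $\ell$. Categories are processed in an arbitrary order, so all the content lies in showing that one new category can be absorbed without breaking either property.

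To process a category $C_\ell$, I allocate its goods in $\lfloor |C_\ell|/n\rfloor$ full rounds, followed (if $n \nmid |C_\ell|$) by a single partial round that hands out the remaining $r_\ell = |C_\ell| \bmod n$ goods. At the start of each round I first run ordinary envy-cycle elimination on the current allocation, rotating whole bundles along each directed envy cycle so that every agent on the cycle receives the bundle it envied; this preserves EF1 (the standard Lipton et al.\ fact), and, because it only permutes entire bundles among agents, it leaves each category's multiset of per-agent counts unchanged and hence preserves balance. After elimination the envy graph (with edge $i \to j$ iff $i$ strictly envies $j$) is acyclic, so I fix a topological order, in which every envier precedes the agent it envies and the un-envied ``sources'' come first. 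In a full round the agents, in this order, each pick their favourite still-unallocated good of $C_\ell$; in the partial round only the first $r_\ell$ agents pick. Each full round gives every agent exactly one good of $C_\ell$, so after all the rounds every agent holds $\lfloor |C_\ell|/n\rfloor$ or $\lceil |C_\ell|/n\rceil$ goods of $C_\ell$, which is exactly (ii).

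The crux is showing that a single round preserves EF1, and this is where the topological order does the work. Fix agents $i,j$. If $i$ did not envy $j$ before the round, then $A_j$ grows by at most the one good $g_j$ that $j$ picks, so discarding $g_j$ restores the pre-round inequality $v_i(A_i) \ge v_i(A_j)$ and EF1 holds. If $i$ did envy $j$, then $i$ precedes $j$ in the topological order and hence picks first; the good $g_j$ that $j$ eventually takes was still available at $i$'s turn, so $v_i(g_i) \ge v_i(g_j)$, and adding this to the witnessing EF1 inequality $v_i(A_i) \ge v_i(A_j \setminus \{g\})$ gives $v_i(A_i \cup \{g_i\}) \ge v_i((A_j \cup \{g_j\}) \setminus \{g\})$, preserving EF1 with the same discarded good $g$. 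The partial round is analogous: an unserved agent sits strictly after every served agent in the order and therefore cannot envy a served agent, so each of the four served/unserved pairings reduces to one of these two cases.

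The main obstacle, and the step I would check most carefully, is exactly this per-round EF1 argument combined with the interaction between cycle elimination and the balance constraint. I must verify that cycle-swapping never pushes a per-category count outside $\{\lfloor |C_\ell|/n\rfloor, \lceil |C_\ell|/n\rceil\}$ (it cannot, since a swap only permutes an already-balanced multiset, and at the start of any round all agents have equal counts in the category being filled), and that the topological order is always consistent with ``envier before envied.'' Everything else is routine: cycle elimination terminates, and the running time is polynomial, since there are at most $\sum_\ell \lceil |C_\ell|/n\rceil \le |M|$ rounds, each performing one cycle-elimination and $n$ greedy picks.
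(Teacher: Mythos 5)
The paper never proves this statement---it is imported verbatim as Theorem~1 of \citet{biswas2018fair}---so there is no internal proof to compare against; your argument is instead a correct, self-contained reconstruction of the Biswas--Barman proof. Their algorithm likewise processes categories one at a time, eliminates envy cycles by rotating whole bundles, and allocates each category greedily in a topological (``envier before envied'') order of the envy graph. The one place you deviate is granularity: Biswas and Barman de-cycle once per category and then run the entire round-robin for that category in the resulting fixed order (proving EF1 via a block-comparison over all rounds at once), whereas you re-run cycle elimination and recompute the order before every round. Your variant is sound: the per-round case analysis is airtight (non-envier $i$: discard $j$'s new pick $g_j$; envier $i$: $i$ precedes $j$, so $g_j$ was available at $i$'s turn, giving $v_i(g_i)\ge v_i(g_j)$ and the old EF1 witness $g$ survives by additivity; partial round: served agents form a prefix of the topological order, so an unserved agent cannot envy a served one), and bundle rotations only permute the multiset of per-category counts, all of which lie in $\{\lfloor |C_\ell|/n\rfloor,\lceil |C_\ell|/n\rceil\}$ for completed categories and are all equal within the category currently being filled, so balance is preserved. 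What your version buys is a cleaner one-round invariant that avoids the shifted-block argument; what it costs is more (still polynomially many) cycle-elimination calls, so the polynomial running-time claim stands either way.
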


\Cref{alg:one} invokes the algorithm of \Cref{thm:cc} on the following instance. Fix a day $t$. Recall that agents have identical orderings for $M'_t$ given as $g'_{t,1} \succeq_i \ldots \succeq_i g'_{t,|M'_t|}$ for all $i \in N$. Divide $M'_t$ into groups of size $n$ in the decreasing order of value, breaking ties arbitrarily and letting the last group have possibly fewer than $n$ goods: that is, let $C_{t, 1} = \{g'_{t,1}, \ldots, g'_{t,n}\}$, $C_{t, 2} = \{g'_{t, n + 1}, \ldots, g'_{t, 2n}\}$, and so on. By \Cref{thm:cc}, we find an allocation $A'$ that is an EF1 allocation for $\bigcup_{t \in [k]} M'_t$ and $v'$, and that each agent is allocated at most one good from $C_{t, j}$ for all $t$ and $j$.

\paragraph{Final Allocation.} \Cref{alg:one} then takes the allocation $A'$ and, for each day $M'_t = \{g'_{t,1} \ldots, g'_{t, |M'_t|}\}$, allocates $M_t$ 
through a ``serial dictatorship'' with the picking sequence derived from $A'$. First, the agent that is allocated $g'_{t, 1}$ in $A'$ will pick their favourite good from $M_t$ (the original set of goods for day $t$); next, the owner of $g'_{t, 2}$ picks their favourite good among the remaining goods of $M_t$; and so on. We now prove that the resulting allocation $A$ is SD-EF1 per day and PROP1 overall.

\begin{lemma}\label{lem:gen-alg1-sdef1}
\Cref{alg:one} returns an allocation that is SD-EF1 per day.
\end{lemma}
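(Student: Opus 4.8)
The plan is to verify, on each day separately, the sufficiency condition of \Cref{prop:sd-ef1}. Fix a day $t$, write $S = M_t$, and fix two agents $i,j \in N$; it suffices to show $\card{A_i \cap T_i(S,r)} \ge \card{A_j \cap T_i(S,r)} - 1$ for every $r \in [|S|]$ (the case $A_j \cap S = \emptyset$ being trivial, since then there is nothing for $i$ to envy on day $t$). The first step is to read off the combinatorial structure that \Cref{thm:cc} imposes on the daily picking sequence. Each full group $C_{t,s}$ has size exactly $n$, so the cardinality guarantee forces $\card{A'_i \cap C_{t,s}} = 1$ for every agent; translated to the serial-dictatorship order that the algorithm derives from $A'$, this means that within each \emph{block} of $n$ consecutive picks on day $t$ every agent picks exactly once (and at most once in the final, possibly partial, block). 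Crucially, the order of the agents \emph{within} a block may change from block to block, so I cannot appeal to a fixed round-robin order.

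Next I set up the comparison. Let $x_1,x_2,\ldots$ denote the goods that agent $i$ picks in blocks $1,2,\ldots$, and let $y_1,y_2,\ldots$ denote those picked by agent $j$. Because each good that $i$ picks in a later block was already unallocated (hence available) when $i$ made her earlier picks, and $i$ always takes her most preferred available good, I get $x_1 \succeq_i x_2 \succeq_i \cdots$. The key claim --- the heart of the argument --- is the ``shift-by-one-block'' inequality $x_s \succ_i y_{s+1}$ whenever both goods exist: the good $y_{s+1}$ is picked only in block $s+1$, so it is still available at the moment $i$ makes her block-$s$ pick, and since $i$ greedily takes her favourite available good, her choice $x_s$ is at least as preferred as $y_{s+1}$. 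This single inequality is what neutralizes the unknown within-block ordering, and I expect it to be the main conceptual step.

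Finally I turn these inequalities into the desired counting bound. Fix $r$ and consider any good $y_s \in A_j$ with $s \ge 2$ that lies in $T_i(S,r)$. Then $x_{s-1} \succ_i y_s$, and since $T_i(S,r)$ is a prefix of agent $i$'s (tie-broken) preference order, $x_{s-1}$ also lies in $T_i(S,r)$; the map $s \mapsto s-1$ is injective, so these witnesses are distinct goods of $A_i$. Hence every good of $A_j \cap T_i(S,r)$ other than $j$'s first pick $y_1$ is matched to a distinct good of $A_i \cap T_i(S,r)$, giving $\card{A_i \cap T_i(S,r)} \ge \card{A_j \cap T_i(S,r)} - 1$. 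As $r$, $i$, $j$, and $t$ were arbitrary, \Cref{prop:sd-ef1} yields that $A$ is SD-EF1 per day. The only remaining care is bookkeeping: the final partial block means $i$ and $j$ may make different numbers of picks, but the pairing only ever invokes $x_{s-1}$ for an existing $y_s$ with $s-1$ indexing a full block, so it is unaffected; and ties are handled by breaking them consistently in both the picking rule and the definition of $T_i$, as the model permits.
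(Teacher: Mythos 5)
Your proposal is correct and follows essentially the same route as the paper's proof: both verify the sufficiency condition of \Cref{prop:sd-ef1} day by day, using the fact that \Cref{thm:cc} forces each agent to pick exactly once in every full block of $n$ consecutive picks, together with the greedy-availability observation that a good picked by agent $j$ in a later block was still unallocated when agent $i$ made her earlier pick. The only difference is packaging: you turn that observation into an explicit shift-by-one injection $y_s \mapsto x_{s-1}$, whereas the paper runs a counting argument by contradiction at the last pick $r'$ where $j$ takes a good from $T_i(M_t,r)$ --- the underlying combinatorial fact (and the shared reliance on tie-breaking in the picking rule being consistent with $T_i$) is identical.
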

\begin{proof}
    Let $A$ be the allocation returned by \Cref{alg:one}. Due to \Cref{prop:sd-ef1}, to prove that $A$ is SD-EF1 per-day, it is sufficient to show that $\card{A_{M_t,i} \cap T_i(M_t,r)} \ge \card{A_{M_t,j} \cap T_i(M_t,r)} - 1$ for all $i,j \in N$, $t \in [k]$, and $r \in [|M_t|]$.

    Let $A'$ represent the allocation that the algorithm finds over the identical ordering transformation instance $M'$. In each day $t \in [k]$, we have that $M'_t$ is partitioned into sets $C_{t, 1} = \{g'_{t,1}, \ldots, g'_{t,n}\}$, $C_{t, 2} = \{g'_{t, n + 1}, \ldots, g'_{t, 2n}\}$, and so on, in order of utility value. Without loss of generality, assume that the labeling of the goods is also consistent with the tie-break ordering of the $T$ function. The allocation $A'_{M'_t}$ will guarantee that no agent receives more than $1$ good from each set $C_{t,l}$ for all $l$. Note that this will imply that for all $i \in N$ and $r \in [|M'_t|]$, we will have $|A'_{M'_t,i} \cap T(M'_t,r)| \in \set{\floor{r/n},\ceil{r/n}}$. By the sufficiency condition of \Cref{prop:sd-ef1}, this means that $A'_{M'_t}$ will be SD-EF1.

    Next, \Cref{alg:one} uses a picking order procedure to construct the final allocation $A_{M_t}$ from $A'_{M'_t}$. It can be seen that each agent is assigned exactly one ``pick'' over the goods in $M_t$ for each good they received in the allocation $A'_{M'_t}$, with the ordering of these picks corresponding to the common preference ordering over $M'_t$. Therefore, we know that for any $r \in [|M_t|]$, after the $r$-th pick of the procedure, each agent will have received either $\floor{r/n}$ or $\ceil{r/n}$ picks. It can also be seen that after the $r$-th pick of the procedure, $\card{A_{M_t,i} \cap T_i(M_t,r)} \geq \floor{r/n}$ will be true for each agent $i$. This is because at each pick $r$, the picking agent will select their most preferred good from $M_t$ that has not yet been picked. After pick $r$, only $r$ goods from $M_t$ have been assigned, and $\card{T_i(M_t,r)} = r$, so each of an agent's picks up to and including the $r$-th pick of the procedure will all have been used to select an item from their top $r$ goods from $M_t$. 

    For contradiction, assume our original claim is not true, the allocation on some day is not SD-EF1, and therefore, there exists some agents $i,j \in N$, some day $t \in [k]$, and some $r \in [|M_t|]$, such that $\card{A_{M_t,i} \cap T_i(M_t,r)} < \card{A_{M_t,j} \cap T_i(M_t,r)} - 1$.

    Let $r'$ be the last pick in the picking sequence where agent $j$ picked a good from $T_i(M_t,r)$. We know that $\card{A'_{M'_t,i} \cap T(M'_t,r')} \in \set{\floor{r'/n},\ceil{r'/n}}$, so agent $i$ must have received at least $\floor{r'/n}$ picks prior to pick $r'$, and each of those picks must have been used to select a good from $T_i(M_t,r)$ (Agent $i$ would never have used one of these picks to select a good not from $T_i(M_t,r)$, since we know there was at least one good from $T_i(M_t,r)$ available, the good that agent $j$ selected with pick $r'$). Similarly, we know that $\card{A'_{M'_t,j} \cap T(M'_t,r')} \in \set{\floor{r'/n},\ceil{r'/n}}$, so agent $j$ could only have had a maximum of $\ceil{r'/n}$ picks up to and including the $r'$-th overall pick. Since $r'$ is the last pick where agent $j$ selected a good from $T_i(M_t,r)$, that means $A_j$ can only contain at most $\ceil{r'/n}$ goods from $T_i(M_t,r)$. This gives us a contradiction since $\floor{r'/n} \geq \ceil{r'/n} - 1$.
\end{proof}

Intuitively, it is easy to see that $A_{M_t}$ will be EF1 for each day $t$, due to the way the allocation $A'$ is constructed, it can be seen that the picking order will be ``recursively-balanced'', which is well known to yield SD-EF1~\citep{aziz2020exanteexpost}.

To prove that $A$ is PROP1 overall, we use the following technical lemma.

\begin{lemma}
\label{lem:prop1-helper}
Let $V$ be a multiset of $m$ real values, and $A = \set{a_1,\ldots, a_k}$ and $A' = \set{a'_1, \ldots, a'_k}$ be two subsets of $V$ with equal size such that $a_j \ge a'_j$ for all $j \in [k]$. Let $g = \max\{x : x \in V \setminus A\}$ and $g' = \max\{x : x \in V \setminus A'\}$. Then, there exists a bijection $z$ from $A \cup \{g\}$ to $A' \cup \{g'\}$ such that $x \ge z(x)$ for all $x \in A \cup \{g\}$.
\end{lemma}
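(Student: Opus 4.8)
The plan is to reduce the existence of the desired bijection to a threshold-counting condition. For a multiset $S$ of reals and $t \in \R$, write $N_S(t) = \card{\set{x \in S : x \ge t}}$ (counting multiplicities). A standard rearrangement fact is that for two equal-size multisets $B, B'$ a bijection $z : B \to B'$ with $x \ge z(x)$ for all $x$ exists if and only if, sorting both decreasingly as $b_1 \ge \cdots \ge b_{k+1}$ and $b'_1 \ge \cdots \ge b'_{k+1}$, we have $b_i \ge b'_i$ for every $i$ (the bijection then sends the $i$-th largest of $B$ to the $i$-th largest of $B'$), which is in turn equivalent to $N_B(t) \ge N_{B'}(t)$ for all $t \in \R$. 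I would first record this equivalence with a one-line argument, so that for $B = A \cup \set{g}$ and $B' = A' \cup \set{g'}$ it suffices to prove $N_{A \cup \set{g}}(t) \ge N_{A' \cup \set{g'}}(t)$ at every threshold $t$.

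Next I would extract from the hypothesis that $N_A(t) \ge N_{A'}(t)$ for all $t$: since $a_j \ge a'_j$, the index set $\set{j : a'_j \ge t}$ is contained in $\set{j : a_j \ge t}$, so the given pairing directly yields count domination between $A$ and $A'$ at every threshold (no re-sorting needed). Writing the target as $N_A(t) + \mathbbm{1}[g \ge t] \ge N_{A'}(t) + \mathbbm{1}[g' \ge t]$, the only case not settled immediately by $N_A(t) \ge N_{A'}(t)$ is $g' \ge t > g$, where the right indicator is $1$ and the left is $0$; there I must establish the strict gap $N_A(t) \ge N_{A'}(t) + 1$.

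The crux is exactly this case, and it is resolved by relating both counts back to $V$. Since $g = \max(V \setminus A)$ and $g < t$, no element of $V \setminus A$ is $\ge t$, so the entire $\ge t$ part of $V$ lies inside $A$; together with $A \subseteq V$ this gives $N_A(t) = N_V(t)$. On the other hand $g' = \max(V \setminus A') \ge t$ exhibits an element of $V \setminus A'$ that is $\ge t$, so at least one element of $V$ with value $\ge t$ is absent from $A'$, giving $N_{A'}(t) \le N_V(t) - 1$. Chaining these, $N_A(t) = N_V(t) \ge N_{A'}(t) + 1$, exactly as required, while the remaining cases ($g' < t$, or $g \ge t$) follow directly from $N_A(t) \ge N_{A'}(t)$. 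This completes the threshold domination and hence produces the bijection. I expect the main obstacle to be identifying and justifying this single critical case via $V$; the reduction to thresholds and the other cases are routine. (I would also note that $V \setminus A$ and $V \setminus A'$ are nonempty, so that $g$ and $g'$ are well-defined, as is implicit in the statement.)
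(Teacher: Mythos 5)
Your proof is correct, and it reaches the lemma by a genuinely different route than the paper. The paper argues constructively in two cases: when $g \ge g'$ it pairs $g$ with $g'$ and $a_j$ with $a'_j$; when $g < g'$ it takes $r$ to be the rank of $g$ in $V$, argues that maximality of $g$ over $V \setminus A$ forces the top $r-1$ elements of $V$ into $A$, so that $\set{a_1,\ldots,a_{r-1}} \cup \set{g}$ is exactly the multiset of the top $r$ values of $V$ and hence admits a dominating matching onto $\set{a'_1,\ldots,a'_{r-1}} \cup \set{g'}$, finishing with $a_j \mapsto a'_j$ on the lower ranks. You instead reduce existence of the bijection to threshold-count domination, $N_{A\cup\set{g}}(t) \ge N_{A'\cup\set{g'}}(t)$ for all $t$, and dispatch the single hard case $g' \ge t > g$ by comparing both counts to $N_V(t)$. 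The crux is the same in both arguments --- $g = \max(V\setminus A)$ pins the entire portion of $V$ above the relevant cutoff inside $A$, while $g'$ witnesses one such element missing from $A'$ --- but your packaging buys real robustness. The paper's rank construction implicitly requires the $a_j$ to be indexed in non-increasing order (justifiable without loss of generality, but only via the same rearrangement fact you state up front), and it needs a careful tie-break for the rank of $g$: as written (``$g$ ranked lowest possible'') it actually breaks when $V \setminus A$ contains two or more copies of the value $g$ --- e.g., $V=\set{10,5,5,5}$, $A=\set{10,5}$, $A'=\set{5,5}$ gives $g=5$, $g'=10$, $r=4$, so $r-1 > k$ and the claimed containment of the top $r-1$ elements in $A$ fails; ranking $g$ \emph{highest} possible repairs it. Your threshold argument never meets these issues, since $a_j \ge a'_j$ yields $N_A(t) \ge N_{A'}(t)$ with no sorting and no tie-breaking. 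The price is proving the (easy) equivalence between dominating bijections and count domination, where the paper's route exhibits the bijection explicitly.
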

\begin{proof}
    In the case where $g \ge g'$, then $z$ can be constructed by simply mapping $g$ to $g'$ and $a_j$ to $a'_j$ for all $j \in [k]$.

    Now, the case where $g < g'$. Suppose $g$ is the $r$th largest element of $V$ (ties broken such that $g$ is ranked lowest possible). Since $g$ is the maximum among $V \setminus A$, the first $r-1$ elements of $V$ are in $A$. Therefore, $\{a_1, \ldots, a_{r - 1}\} \cup \{g\}$ are the top $r$ elements of $V$, and we can simply create a bijection from those elements to $\{a'_1, \ldots a'_{r - 1}\} \cup \{g'\}$ as desired. For the remaining $k - r + 1$ bottom ranks of both $A$ and $A'$, we simply use $z(a_j) = a'_j$ since we know from the statement that $a_j \ge a'_j$, which completes the proof.
\end{proof}

\begin{lemma}
\label{lem:alg1-prop1}
\Cref{alg:one} returns an allocation that is PROP1 overall.
\end{lemma}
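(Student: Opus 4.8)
The plan is to compare the value agent $i$ derives from the final allocation $A$ to the value she derives from the auxiliary allocation $A'$, and then transfer the proportionality guarantee. First, since $A'$ is EF1 over $M' = \bigcup_{t} M'_t$ by \Cref{thm:cc}, it is PROP1 over $M'$ (EF1 implies PROP1~\citep{CFS17}). Moreover, the identical-ordering transformation preserves each agent's multiset of values on each day, so $v'_i(M') = \sum_{t} v'_i(M'_t) = \sum_{t} v_i(M_t) = v_i(M)$. Thus it suffices to carry the PROP1 bound for $A'$ (with respect to $v'$) over to $A$ (with respect to $v$). To do this cleanly I would work with the multiset $V$ of all values $\set{v_i(g) : g \in M}$, which coincides with $\set{v'_i(g') : g' \in M'}$, and compare the sub-multisets of values agent $i$ obtains under $A$ versus under $A'$, namely $V_i = \set{v_i(h) : h \in A_i}$ and $V'_i = \set{v'_i(g') : g' \in A'_i}$ (note $|V_i| = |V'_i|$).

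The crux is to show that $V_i$ dominates $V'_i$ coordinate-wise after sorting both in non-increasing order. I would prove this day by day, reusing the picking analysis of \Cref{lem:gen-alg1-sdef1}. Fix a day $t$ and suppose agent $i$ owns the goods $g'_{t,j_1},\ldots,g'_{t,j_m}$ in $A'$ with $j_1 < \cdots < j_m$; these are exactly her picking positions in the day-$t$ serial dictatorship. At the pick at position $j_\ell$, exactly $j_\ell - 1$ goods of $M_t$ have already been removed, so even if those were her $j_\ell - 1$ most valuable goods, the favorite good she now picks has value at least that of her $j_\ell$-th best good in $M_t$, which is precisely $v'_i(g'_{t,j_\ell})$. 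This produces a value-dominating bijection between the goods $i$ picks on day $t$ and $A'_{M'_t,i}$; concatenating over all days yields a bijection $\phi : A_i \to A'_i$ with $v_i(h) \ge v'_i(\phi(h))$ for all $h \in A_i$, and a standard exchange argument then upgrades this to the required sorted domination.

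Finally, I would instantiate \Cref{lem:prop1-helper} with the multiset $V$ above, its set $A$ taken to be $V_i$, and its set $A'$ taken to be $V'_i$. Setting $g = \max(V \setminus V_i)$ and $g' = \max(V \setminus V'_i)$, PROP1 of $A'$ gives $v'_i(A'_i) + g' \ge \tfrac1n v'_i(M') = \tfrac1n v_i(M)$, since the most valuable good outside $A'_i$ is the best candidate to add. The value-dominating bijection $z$ supplied by \Cref{lem:prop1-helper} then yields $v_i(A_i) + g = \sum(V_i \cup \set{g}) \ge \sum(V'_i \cup \set{g'}) = v'_i(A'_i) + g' \ge \tfrac1n v_i(M)$. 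Because $g = v_i(\hat g)$ for some good $\hat g \in M \setminus A_i$, this reads $v_i(A_i \cup \set{\hat g}) \ge \tfrac1n v_i(M)$, which is exactly PROP1 for agent $i$ (the case $A_i = M$ is trivial, and $A_i \neq M$ guarantees both $g$ and $g'$ exist since $|A_i| = |A'_i|$).

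The main obstacle is the per-day position-counting argument in the second paragraph: verifying that the $\ell$-th pick necessarily lands a good of value at least $v'_i(g'_{t,j_\ell})$ and assembling these inequalities into a genuine value-dominating bijection. This is the bridge between the combinatorial picking procedure and the value-based proportionality bookkeeping, and it is precisely where \Cref{lem:prop1-helper} earns its keep: the good $\hat g$ we may add to $A_i$ need not correspond to the good $g'$ that witnesses PROP1 for $A'_i$, and the lemma is what reconciles this mismatch between the two ``extra goods.''
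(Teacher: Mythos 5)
Your proposal is correct and follows essentially the same route as the paper's own proof: the per-day picking-position argument showing $v_i(g_j) \ge v'_i(g'_j)$, the resulting value-dominating bijection between $A_i$ and $A'_i$, and the invocation of \Cref{lem:prop1-helper} to reconcile the two ``extra goods'' before applying EF1 $\Rightarrow$ PROP1 to $A'$ and the value-preservation $v'_i(M') = v_i(M)$. The only cosmetic difference is that you phrase the comparison in terms of sorted value multisets, whereas the paper indexes the goods directly, but the underlying argument is identical.
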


\begin{proof}
Fix an agent $i \in N$. Take a day $t \in [k]$. Rename the goods so that $A'_i \cap M'_t = \{g'_1, \ldots, g'_{|A'_i \cap M'_t|}\}$ are the goods that $i$ is allocated in $A'$ in a non-increasing order of $v'_i$. Similarly, let $A_i \cap M_t = \{g_1, \ldots, g_{|A_i \cap M_t|}\}$ be the goods $i$ picks according to the picking sequence in order. That is, $g_1$ is the good picked corresponding to $g'_1$, $g_2$ corresponding to $g'_2$, and so on. 

Towards invoking \Cref{lem:prop1-helper}, a helpful observation is that
$v_i(g_j) \ge v'_i(g'_j)$
for all $j \in [|A_i \cap M_t|]$. 
Suppose $g'_j$ is the $r$th preferred good among $M'_t$ for $i$. Since $M_t$ and $M'_t$ share the same multiset of utility values and $g_j$ is the top pick  of $i$ when $r - 1$ goods are picked, $g_j$ is at least as good as the $r$-th good among $M_t$ (and hence, $M'_t$). This argument, combined across all days, implies existence of a bijection $z_i: A_i \to A'_i$ such that $v_i(g) \ge v'_i(z_i(g))$ for all $g \in A_i$.

By invoking \Cref{lem:prop1-helper} with $A \gets A_i$ and $A' \gets A'_i$ over the multiset $V$ being $i$'s utility values, we have that 
\[
v_i(A_i) + \max_{g \notin A_i} v_i(g) \ge 
v'_i(A'_i) + \max_{g \notin A'_i} v'_i(g) 
\]

Every EF1 allocation is also PROP1 \citep{CFS17}, therefore, since $A'$ is EF1 (\Cref{thm:cc}), we have that
\[
v'_i(A'_i) + \max_{g \notin A'_i} v'_i(g) \ge \frac{1}{n} v_i(M') = \frac{1}{n} v_i(M),
\]
the last equality being true from the way we constructed $M'$. Combining the two inequalities above, we get $v_i(A_i) + \max_{g \notin A_i} v_i(g) \ge \frac{1}{n}v_i(M)$. Thus, $A$ is PROP1.
\end{proof}

It is worth noting that PROP1 is not a monotonic property, i.e., if $v_i(A_i) \ge v_i(A'_i)$ and $A'_i$ is PROP1, it is possible that $A_i$ is not PROP1 (as the best good for $i$ in $M \setminus A_i$ could be worth less than the best good in $M \setminus A'_i$). This is why we needed to use a more involved argument in \Cref{lem:prop1-helper,lem:alg1-prop1}.


\section{Two Agents}\label{sec:two}

In this section, we consider temporal fair division with two agents. We provide a complete picture of temporal fairness notions that can be guaranteed in this case. We establish a strong positive result, then show that it is the best possible by producing counterexamples for stronger desiderata.

\subsection{Possibilities}\label{sec:two-positive} 

Our main goal in this section is to show that SD-EF1 per day and EF1 up to each day can be achieved for two agents. We begin by introducing an \emph{envy-balancing lemma}, a powerful tool for finding temporal allocations to two agents. We later use this lemma to derive not only the aforementioned guarantee, but also other appealing guarantees.

\begin{definition}[Cancelling Allocations]
    We say that allocations $B$ and $B'$ of a set of goods $S$ to two agents \emph{cancel out} if 
    $v_i(B_i)+v_i(B'_i) \ge v_i(B_{3-i})+v_i(B'_{3-i}), \forall i \in [2].$
    In words, they cancel out if hypothetically allocating two copies of each good in $S$, one according to $B$ and the other according to $B'$, achieves (exact) envy-freeness.
\end{definition}

\begin{algorithm}
    \caption{Envy-Balancing Algorithm}\label{alg:envy-balancing}
    \textbf{Input} {A Temporal Fair Division Instance $(N,M=\set{M_1,\dots,M_k},v)$; For $t \in [k]$, a pair of allocations $(B^1_t,B^2_t)$ of the set of goods $M_t$ that cancel out, with labels assigned in such a way that if neither of the allocations are envy-free, then $v_1(B^1_{t,1}) - v_1(B^1_{t,2}) \geq 0$ and $v_2(B^2_{t,2}) - v_2(B^2_{t,1}) \geq 0$.}
    
	\textbf{Output} {An allocation $A$ of the set of goods $M = \cup_{t \in [k]} M_t$}
    \begin{algorithmic}[1]
        \STATE $F \gets \emptyset$, $S \gets \emptyset$, $e_1 \gets 0$, $e_2 \gets 0$
        \FOR{$t \in [k]$}
            \IF{$v_1(B^1_{t,1}) \geq v_1(B^1_{t,2}) \land v_2(B^1_{t,2}) \geq v_2(B^1_{t,1})$}
                \COMMENT {\small // If $B^1_t$ is EF}
                \STATE $F \leftarrow F \cup \{B^1_t\}$
            \ELSIF{$v_1(B^2_{t,1}) \geq v_1(B^2_{t,2}) \land v_2(B^2_{t,2}) \geq v_2(B^2_{t,1})$}
                \COMMENT {\small // If $B^2_t$ is EF}
                \STATE $F \leftarrow F \cup \{B^2_t\}$
            \ELSE
                \IF{$e_1 \leq 0$}
                    \COMMENT {\small // Agent $1$ feels envy in $A_S$, or $S$ is empty}
                    \STATE $S \leftarrow S \cup \{B^1_t\}$
                    \STATE $e_1 \leftarrow e_1 + (v_1(B^1_{t,1}) - v_1(B^1_{t,2}))$
                    \STATE $e_2 \leftarrow e_2 + (v_2(B^1_{t,2}) - v_2(B^1_{t,1}))$
                \ELSE
                    \COMMENT {\small // If Agent $2$ feels envy in $A_S$}
                    \STATE $S \leftarrow S \cup \{B^2_t\}$
                    \STATE $e_1 \leftarrow e_1 + (v_1(B^2_{t,1}) - v_1(B^2_{t,2}))$
                    \STATE $e_2 \leftarrow e_2 + (v_2(B^2_{t,2}) - v_2(B^2_{t,1}))$
                \ENDIF
            \ENDIF

            \IF{$e_1 \geq 0 \land e_2 \geq 0$}
                \COMMENT {\small // If $A_S$ is EF}
                \STATE $F \leftarrow F \cup S$
                \STATE $S \leftarrow \emptyset$, $e_1 \leftarrow 0$, $e_2 \leftarrow 0$
            \ELSIF{$e_1 \leq 0 \land e_2 \leq 0$}
                \COMMENT {\small // If both agents either feel envy, or are indifferent in $A_S$}
                \STATE $F \leftarrow F \cup \text{SWAP}(S)$
                \STATE $S \leftarrow \emptyset$, $e_1 \leftarrow 0$, $e_2 \leftarrow 0$
            \ENDIF
        \ENDFOR
        \STATE $F \leftarrow F \cup S$
        \STATE $A \gets$ allocation in which $M_t$ is allocated according to the allocation of $M_t$ in $F$, for each $t \in [k]$
        \RETURN $A$
    \end{algorithmic}
\end{algorithm}

\begin{lemma}[Envy-Balancing Lemma]\label{lem:envy-balancing}
    Suppose that for each day $t \in [k]$, we are given two EF1 allocations $B_t$ and $B'_t$ of the set of goods $M_t$ that cancel out. Then, we can compute, in polynomial time, an allocation $A$ of the set of all goods $M = \cup_{t \in [k]} M_t$ that is EF1 up to each day and $A_{M_t} \in \set{B_t,B'_t}$ for each day $t \in [k]$. 
\end{lemma}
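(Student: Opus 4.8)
The plan is to reduce ``EF1 up to each day'' to a per-segment statement and an additive composition fact. \Cref{alg:envy-balancing} carves the timeline into consecutive \emph{segments}: single days whose allocation is already EF and are dropped into $F$, and maximal runs of days that are accumulated in the open block $S$ and then closed (placed in $F$) either as-is or after $\mathrm{SWAP}$. The first ingredient I would establish is a composition fact for two agents: if the goods of $\bar{M}_t$ are partitioned into groups, all but the last of which admit an envy-free sub-allocation (each agent weakly prefers her own share) and the last of which is EF1, then $A_{\bar{M}_t}$ is EF1. This holds because for two agents agent $i$'s envy $v_i(A_{\bar{M}_t,3-i})-v_i(A_{\bar{M}_t,i})$ is additive over the groups, the envy-free groups contribute nonpositive terms, and the single witnessing good of the EF1 group (which lies in the other agent's overall bundle) certifies EF1 for the union. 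Since every \emph{closed} segment will be shown to be envy-free as a whole, and $\bar{M}_t$ always consists of finitely many closed segments followed by a prefix of the current segment, this fact reduces the whole lemma to one claim: \emph{every segment, at each of its own prefixes and in the orientation with which it enters $F$, is EF1}.

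Next I would set up the invariant on the open block $S$. Writing $e_1,e_2$ for the accumulated advantages $\sum_{B\in S}\bigl(v_i(B_i)-v_i(B_{3-i})\bigr)$, the rule always appends the daily allocation favoring the currently \emph{behind} agent ($B^1_t$ when $e_1\le 0$, $B^2_t$ when $e_1>0$; the labeling hypothesis makes $B^1_t$ favor agent~$1$ and $B^2_t$ favor agent~$2$ whenever neither is EF). I would prove by induction that after each day $S$ is either empty (just reset) or in a strictly mixed-sign state ($e_1,e_2$ of opposite signs), and that the behind agent's running envy is at most the envy she incurs on the single most recent appended day; being EF1, that day bounds her envy by one good sitting in the other agent's bundle. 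The two cases to check are clean: when an agent is made \emph{newly} behind her previous advantage was nonnegative, so her new deficit is at most that day's one-good envy; and when the already-behind agent is merely boosted further, her envy only shrinks, so the earlier certificate good still works. This invariant immediately yields EF1 at all prefixes for segments closed by $e_1\ge 0\wedge e_2\ge 0$ (they are in fact envy-free) and for the trailing never-closed segment.

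The main obstacle is the $\mathrm{SWAP}$ step, i.e.\ segments closed by $e_1\le 0\wedge e_2\le 0$. For the \emph{full} segment swapping two nonpositive advantages trivially gives envy-freeness, but EF1 is required at every \emph{prefix} in the swapped orientation: at a prefix with original advantages $(P_1,P_2)$ the swap leaves agent $i$ envious exactly when $P_i>0$, and EF1 then demands that this surplus be erasable by deleting one good of the opposite original bundle. The invariant of the previous paragraph controls only \emph{deficits}, whereas the swap converts former \emph{surpluses} into envy, and a surplus accrued on a ``boost'' day is not bounded by the per-day EF1 guarantee by itself. The heart of the argument must therefore be a two-sided strengthening of the invariant showing that along any segment ending with both advantages nonpositive, the ahead agent's surplus at every intermediate prefix is also at most one good of her own bundle. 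This is precisely where the cancelling-out hypothesis must do work (each day the two options offset, $d_i(B^1_t)+d_i(B^2_t)\ge 0$ for both $i$), together with the fact that the threshold $e_1\ge 0\wedge e_2\ge 0$ fires the instant the block would turn envy-free and so prevents a large one-sided surplus from surviving until the moment of swapping. I expect reconciling a possibly large single-day surplus with the one-good EF1 requirement after swapping to be the most delicate point of the proof, and the place where the exact interaction between the cancelling allocations and the reset conditions is indispensable; once it is pinned down, the composition fact assembles the per-segment guarantees into EF1 for $A_{\bar{M}_t}$ at every $t$, with polynomial running time being immediate from the single pass over the days.
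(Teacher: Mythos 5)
Your setup --- the segment decomposition, the additive ``EF part plus EF1 part'' composition fact, and the mixed-sign invariant showing that the behind agent's deficit in the open block $S$ is always certified by one good in the other agent's bundle --- matches the paper's argument for why $A_{F}$ stays EF and $A_{S}$ stays EF1. But the SWAP case, which you correctly single out as the crux, is left unproven in your proposal, and the route you sketch for it rests on a false claim. Your proposed ``two-sided strengthening'' --- that along any segment ending with both advantages nonpositive, the ahead agent's surplus at every intermediate prefix is at most the value of one good of her own bundle --- fails. Concretely: on day $1$ let $M_1=\{a_1,\dots,a_{10},g^*\}$ with $v_1(a_j)=1$, $v_1(g^*)=3$, $v_2(a_1)=2$, $v_2(g^*)=1$, $v_2(a_j)=0$ for $j\ge 2$; take $B^1_1=(\{a_1,\dots,a_{10}\},\{g^*\})$ (advantages $+7$ and $-1$, EF1 via $a_1$) and $B^2_1=(\{a_6,\dots,a_{10}\},\{g^*,a_1,\dots,a_5\})$ (advantages $-3$, EF1 via $g^*$, and $+3$); both are EF1 and they cancel out. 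On day $2$ let $M_2=\{x,y\}$ with $v_1(x)=8$, $v_1(y)=0$, $v_2(x)=3/2$, $v_2(y)=1$, and $B^1_2=(\{x\},\{y\})$, $B^2_2=(\{y\},\{x\})$; again both EF1 and cancelling. The algorithm appends $B^1_1$ (giving $e=(7,-1)$), then $B^2_2$ (giving $e=(-1,-1/2)$), so this two-day segment closes with a SWAP; yet at the intermediate prefix the ahead agent's surplus is $7$ while every good in her bundle is worth at most $1$ to her (and at most $2$ to the other agent). So no per-prefix surplus bound of the kind you want exists. There is also a structural problem: a SWAP does not exchange bundles; it replaces each day's allocation $B^j_r$ by the \emph{other} cancelling allocation $B^{3-j}_r$, which is in general a different partition of $M_r$ (this generality is essential, e.g., for the EFX and EF1+PO applications). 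Hence even a bound of the form ``surplus $\le$ one good of her own bundle'' would not yield an EF1 certificate after swapping, because that good need not lie in the opponent's post-swap bundle.

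The mechanism that actually closes this case --- and it is the paper's key idea --- never bounds surpluses at all: maintain, in parallel with ``$A_{S_t}$ is EF1,'' the invariant that $A_{\mathrm{SWAP}(S_t)}$ is also EF1 at every step. Since $A_{S_t}$ and $A_{\mathrm{SWAP}(S_t)}$ cancel out (summing the per-day cancellation inequalities), the agent who is behind in $S_t$ is ahead in $\mathrm{SWAP}(S_t)$ and vice versa; moreover, when the algorithm appends $B^i_t$ (the allocation preferred by the agent $i$ who is behind in $S$), it implicitly appends $B^{3-i}_t$ to $\mathrm{SWAP}(S)$, which is exactly the allocation preferred by the agent who can be behind in $\mathrm{SWAP}(S)$. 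So the swapped block evolves under the very same ``boost the behind agent'' dynamics, and your own two-case deficit argument (a newly-behind agent's deficit is bounded by that day's EF1 certificate; an already-behind agent's deficit only shrinks, so her old certificate persists) applies verbatim to the swapped orientation. With both invariants in hand, whichever orientation a segment enters $F$ with is EF1 at each of its prefixes, and your composition fact finishes the proof. In your current write-up this parallel invariant is absent, and it is precisely the missing piece.
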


For readers familiar with the ``informed'' model of online fair division from \citet{he2019informed}, this can be seen as a generalization of their two agent algorithm. Their algorithm shows how to achieve EF1 up to each day when allocating a single good during each time step. We provide a similar guarantee while allocating batches of goods and simultaneously maintaining fairness over the batches. 

We prove \Cref{lem:envy-balancing} from scratch, without invoking the algorithm of \citet{he2019informed}, but one may observe that it could also be proved by reducing our setting to that of \citet{he2019informed}. Specifically, on every day $t \in [k]$, when given two allocations $B_t$ and $B'_t$ over $M_t$ that cancel out, we can reduce the problem of deciding whether the goods in $M_t$ should be allocated according to $B_t$ or according to $B'_t$ to deciding whether a single good should be allocated to agent $1$ or to agent $2$ in the reduced setting. The key to this reduction is that whenever one of the allocations $B_t$ or $B'_t$ would cause some agent $i$ to feel envy on day $t$, we can set agent $i$'s valuation for the single good on day $t$ in the reduced setting to be the negation of the envy they feel when that unpreferred allocation is chosen; when this single good is given to the other agent $3-i$ in the reduced setting, it will accurately represent the envy of agent $i$. Due to the fact that $B_t$ and $B'_t$ cancel out, allocating the single good to agent $i$ in the reduced setting will only ever underestimate the positive marginal utility they have for their own bundle in the preferred allocation over that of the other agent, meaning that under the actual utilities everyone will be better off compared to the reduced setting.

We choose to prove \Cref{lem:envy-balancing} directly rather than through this reduction as it provides a more straightforward view of the new fairness guarantees of our algorithm, and better deals with some of the subtleties present in the reduction, such as showing that EF1 up to each day in the reduced setting still translates to EF1 up to each day in the original setting when the meaning ``one good'' changes between the two settings. We also later use this proof as the basis of our proof of \Cref{lem:laminar-envy-balancing}, an extension of the Envy-Balancing Lemma which does not follow from the results of \citet{he2019informed}.

\begin{proof}[Proof of \Cref{lem:envy-balancing}]
     For any allocation $A$, we will use the following language to describe the agents' relative valuations of their bundles compared to the other agent:
    \begin{itemize}
        \item For any allocation $A$ where $v_i(A_{3-i}) > v_i(A_i)$ for some agent $i \in [2]$, the negative value $v_i(A_{i}) - v_i(A_{3-i})$ will be referred to as the ``envy'' felt by agent $i$ in $A$.
        \item Similarly, for any allocation $A$ where $v_i(A_i) > v_i(A_{3-i})$ for some agent $i \in [2]$, the positive value $v_i(A_{i}) - v_i(A_{3-i})$ will be referred to as the ``surplus utility'' felt by agent $i$ in $A$.
    \end{itemize}

    For each $t \in [k]$, let $B^1_t, B^2_t$ be two allocations over $M_t$ that are both EF1 and cancel out. Since these allocations cancel out, we can make the following assumption without loss of generality:

    If neither of $B^1_t$ or $B^2_t$ are EF, then we have the following four inequalities:
    \begin{align*}
    &v_1(B^1_{t,1}) > v_1(B^1_{t,2}), &v_2(B^1_{t,1}) > v_2(B^1_{t,2}),\\
    &v_1(B^2_{t,1}) < v_1(B^2_{t,2}), &v_2(B^2_{t,1}) < v_1(B^2_{t,2}).
    \end{align*}

    In words, both agents prefer $B^1_{t,1}$ to $B^1_{t,2}$, and $B^2_{t,2}$ to $B^2_{t,1}$. This can be assumed due to the fact that the allocations cancelling out allows us to know that exactly one agent feels envy in each allocation (if they both felt envy in some allocation then the other allocation would have to be EF in order to cancel out), and the same agent cannot feel envy in both allocations (or else their envy would clearly not cancel out). For simplicity we assume that in this case, for each $i \in [2]$, agent $i$ feels envy in allocation $B^{3-i}_{t}$, and allocation $B^i_t$ can be thought of as agent $i$'s ``preferred'' allocation. We can further conclude all the above inequalities must be strict, since if any agent was indifferent between the bundles of one allocation, they could not feel any envy in the other allocation, making one of the two allocations EF.

    \Cref{alg:envy-balancing} functions by examining each day in order, and picking one allocation from each day's pair. When the algorithm selects an allocation for some day, it puts it into one of two sets. $F$ is the ``Final'' set. If an allocation is put into $F$, that means that it will be in the final allocation returned by the algorithm. $S$ is the ``Swap'' set. If an allocation is put into $S$, that means that it may be changed at some point in the future. Specifically, the algorithm may perform a SWAP on $S$. This means that for every day $t \in [k]$, if an allocation from day $t$ is in $S$, that allocation will be replaced with the other allocation from day $t$ that is not in $S$. The final allocation returned by the algorithm is induced by all the per day allocations currently in $F \cup S$ after the algorithm's final iteration. We will refer to $S_t$ and $F_t$ as the contents of the sets $S$ and $F$ directly after iteration $t$ of the algorithm has completed, and will refer to SWAP$(S)$ as the contents of $S$ if a SWAP were performed on it. With slight abuse of notation, for any set $T$ containing allocations over some days, we will refer to $A_{T}$ as the allocation induced by combining all the per day allocations in $T$.

    For a day $t$ where one of $(B^1_t,B^2_t)$ is an EF allocation, the algorithm will add that allocation to $F$. Otherwise, the algorithm will check which agent is currently feeling envy in $A_S$, and will add that agent's preferred allocation to $S$ (if neither agent feels envy, then the algorithm defaults to adding $B^1_t$). The key part of the algorithm is what happens when either $A_S$ becomes EF, or if neither agent feels surplus utility in $A_S$. When $A_S$ is EF, then the algorithm moves the contents of $S$ into $F$, locking in those allocations as final. If neither agent feels surplus utility in $A_S$, then the algorithm performs a SWAP on $S$, and then moves the newly swapped contents of $S$ into $F$. As will be seen in the analysis below, this process ensures $A_F$ always remains EF, and $A_S$ always remains EF1. Note that this process also ensures that if one of the allocations for some day $t$ is EF, no allocation for that day will ever appear in $S$. Therefore, all allocations in $S$ will conform to the properties we were able to assume above.

    Clearly \Cref{alg:envy-balancing} will produce an allocation $A$ such that $A_{M_t} \in \set{B^1_t,B^2_t}$ for each day $t \in [k]$, making it EF1 per day. We will show that it also produces an allocation that is EF1 up to each day.

    To do this, we will first note that for both agents $i \in [2]$, for any possible set $S$ during the runtime of the algorithm, if an agent does not feel surplus utility in the allocation $A_S$, then they will not feel envy in the allocation $A_{\text{SWAP}(S)}$. We can show this formally by showing that the two allocations $A_S$ and $A_{\text{SWAP}(S)}$ cancel out. Let $D \subseteq [k]$ be the set of days such that an allocation for day $t$ appears in $S$. From the fact that we know the pair of allocations $(B^i_t,B^{3-i}_t)$ on each day cancels out, we have:

    \[\sum_{t \in D}{(v_i(B^i_{t,i}) + v_i(B^{3-i}_{t,i}))} \geq \sum_{t \in D}{(v_i(B^i_{t,3-i}) + v_i(B^{3-i}_{t,3-i}))}\]

    Then note that we have $v_i(A_{S,i}) + v_i(A_{\text{SWAP}(S),i}) = \sum_{t \in D}{(v_i(B^i_{t,i}) + v_i(B^{3-i}_{t,i}))}$, and $v_i(A_{S,3-i}) + v_i(A_{\text{SWAP}(S),3-i}) = \sum_{t \in D}{(v_i(B^i_{t,3-i}) + v_i(B^{3-i}_{t,3-i}))}$. This is because for each $t \in D$, both $B^i_t$ and $B^{3-i}_t$ are contained in exactly one of $S$ or SWAP$(S)$. This directly implies that $A_S$ and $A_{\text{SWAP(S)}}$ cancel out.

    Along with the above facts, proving the following inductive hypothesis will be sufficient to show that the allocation returned by \Cref{alg:envy-balancing} will be EF1 up to each day: 

    For all $t \in [k]$, if the following conditions hold after the $(t-1)$th iteration of the algorithm, they will hold after the $t$th iteration.
    \begin{itemize}
        \item $A_{F_t}$ is an EF allocation.
        \item $A_{S_t}$ is an EF1 allocation.
        \item $A_{\text{SWAP}(S_t)}$ is an EF1 allocation.
    \end{itemize}

    This inductive statement being true implies that the final allocation will be EF1 up to each day due to the fact that in the final allocation $A$ outputted by the algorithm, for any $t \in [k]$, it must be true that the allocation $A_{\bar{M}_t} \in \set{A_{F_t \cup S_t}, A_{F_t \cup \text{SWAP}(S_t)}}$. This is because the only way the algorithm ever interacts with allocations in $S$ from previous iterations is by performing a SWAP on $S$, or moving the entire contents of $S$ into $F$. Due to the fact that we know $A_{F_t}$ will be EF and $A_{S_t}$ and $A_{\text{SWAP}(S_t)}$ must be EF1, we can conclude that $A_{\bar{M}_t}$ must be EF1 as well.

    We will show that the inductive statement holds by analyzing each possible state the algorithm can be in after some iteration $t$.

    First, we will show that this holds for some obvious cases. 
    \begin{itemize}
        \item \paragraph{Day $t$ has an EF Allocation} When the pair $(B^1_t,B^2_t)$ for some day $t$ contains an EF allocation, then the algorithm simply adds that allocation to $F$. This clearly maintains the envy-freeness of $F_t$, and the contents of $S_t$ will be the same as $S_{t-1}$.
        \item \paragraph{During iteration $t$, some allocation $B^i_t$ is added to $S$ that causes $A_{S_{t-1} \cup \set{B^i_t}}$ to be EF} Directly after $B^i_t$ has been added to $S$, $A_{S}$ will be an EF allocation, and the algorithm will move the entire current contents of $S$ to $F$. $A_{F_t}$ will remain EF since the algorithm is adding an EF allocation to it. $A_{S_t}$ and $A_{\text{SWAP}(S_t)}$ will trivially meet their conditions since $S_t$ will be empty.
        \item \paragraph{During iteration $t$, an allocation $B^i_t$ is added to $S$ that causes neither agent to feel surplus utility in $A_{S_{t-1} \cup \set{B^i_t}}$} Similar to above, directly after $B^i_t$ has been added to $S$, both agents will either feel envy in $A_S$, or will be indifferent between the two bundles in $A_S$. In this case, the algorithm will perform a SWAP on $S$. The allocation induced by this newly swapped $S$ will be EF. The algorithm will then move the contents of $S$ to $F$. $A_{F_t}$, $A_{S_t}$, and $A_{\text{SWAP}(S_t)}$ will meet the required conditions for the same reasons as in the case above.
        \item \paragraph{$S_{t-1} = \emptyset$} Finally, in the case where $S$ is empty at the beginning of iteration $t$, and neither of the two allocation for day $t$ are EF, then the algorithm will add the allocation $B^1_t$ to $S$. $A_{F_t}$ will be EF since it was not altered, $A_{S_t}$ will be EF1 since $B^1_t$ is EF1, and $\text{SWAP}(A_{S_t})$ will be EF1 since $B^2_t$ is EF1.
    \end{itemize}

    From this, whenever the algorithm executes iteration $t$ and was not in one of the above cases, we can conclude the following:
    \begin{itemize}
        \item Neither of $B^1_t$ or $B^2_t$ are EF.
        \item $S_{t-1}$ was not empty. Since we know that during iteration $t-1$, if an allocation is added to $S$ that makes neither agent feel envy or neither agent feel surplus utility, then the contents of $S$ would be moved to $F$, causing $S_{t-1}$ to be empty. This allows us to conclude that exactly one agent must feel envy in $A_{S_{t-1}}$, and the other must feel surplus utility (notably, neither agent can be indifferent between the bundles).
        \item $S_t$ will not be empty, due to the fact that $S$ only becomes empty when either $A_S$ is EF, or neither agent feels surplus utility in $A_S$. Similarly to the above point, this allows us to conclude that exactly one agent must feel envy in $A_{S_{t}}$, while the other feels surplus utility. It also allows us to conclude that a SWAP was not performed in iteration $t$, as a SWAP only occurs in the case where neither agent feels surplus utility in $A_S$.
    \end{itemize}

    We can show that in this case as well, the inductive step holds.

    In this case, there will be one agent $i \in [2]$ who feels envy in $A_{S_{t-1}}$, while the other agent feels surplus utility. During iteration $t$, the algorithm will select agent $i$'s preferred allocation $B^i_t$, and add it to $S$. By our hypothesis, we have that $A_{S_{t-1}}$ is EF1, so there must be some good $g \in A_{S_{t-1},3-i}$ that can be taken away to eliminate all agent $i$'s envy in that allocation. Because we add in agent $i$'s preferred allocation from day $t$, we know that $v_i(A_{S_t,i}) - v_i(A_{S_t,3-i}) > v_1(A_{S_{t-1},i}) - v_1(A_{S_{t-1},3-i})$, meaning that $A_{S_t}$ must still be EF1 with respect to agent $i$, as we can still remove $g$ from $A_{S_t,3-i}$ to eliminate all envy. We can also show that $A_{S_t}$ will be EF1 with respect to agent $3-i$. $B^{i}_t$ is agent $3-i$'s unpreferred allocation, however, we know that it is EF1. Therefore, there exists some $g \in B^i_{t,i}$ such that $v_{3-i}(B^i_{t,3-i}) \geq v_{3-i}(B^i_{t,i}) - v_{3-i}(g)$. Combining this with the fact that $v_{3-i}(A_{S_{t-1},3-i}) > v_{3-i}(A_{S_{t-1},i})$, we get that $v_{3-i}(A_{S_{t-1},3-i}) + v_{3-i}(B^i_{t,3-i}) > v_{3-i}(A_{S_{t-1},i}) + v_{3-i}(B^i_{t,i}) - v_{3-i}(g)$. Since we know that $g \in A_{S_{t},i}$, this gives us EF1 as desired.

    The proof that $A_{\text{SWAP}(S_t)}$ is EF1 can be done similarly. Notice that SWAP$(S_t)$ will be equal to $\text{SWAP}(S_{t-1}) \cup \{B^{3-i}_t\}$. We know $A_{\text{SWAP}(S_{t-1})}$ will be EF1, and in $A_{\text{SWAP}(S_{t-1})}$, we know that agent $i$ will feel surplus utility. Agent $3-i$ may feel envy in $A_{\text{SWAP}(S_{t-1})}$, but since $B^{3-i}_t$ is their preferred allocation, they cannot be the reason why $A_{\text{SWAP}(S_t)}$ is not EF1. Agent $i$ also cannot be the reason why $A_{\text{SWAP}(S_t)}$ is not EF1, since there must be some good $g \in B^{3-i}_t$ that will eliminate all of agent $i$'s envy over $A_{\text{SWAP}(S_t)}$ when removed.

    Finally, note that $A_{F_t}$ will be EF since we have $A_{F_t} = A_{F_{t-1}}$. This shows the the inductive step will hold in this final case.
    
    As a base case, we look at the beginning of the algorithm, where both $S$ and $F$ will be empty, and thus all the conditions in the inductive statement will trivially be true.
\end{proof}

Note that the lemma achieves EF1 up to each day while not only retaining the EF1 per day property of the input allocations, but in fact by using exactly one of the two input allocations on each day. Thus, if the per day allocations given as input satisfy properties stronger than EF1, those properties are also retained per day; this is important as we will use this lemma to derive such stronger per day guarantees.

We are now ready to show that the pair of allocations required by the envy-balancing lemma --- both satisfying EF1 and cancelling each other out --- exists and can be computed in polynomial time. In fact, we find a single partition $(B_{t,1},B_{t,2})$ of the goods in $M_t$ such that both $B_t = (B_{t,1},B_{t,2})$ and $B'_t = (B_{t,2},B_{t,1})$ satisfy the stronger property of SD-EF1. 

\begin{lemma}\label{lem:2-opposite-sdef1}
    Given the preferences of two agents over a set of goods $M_t$, one can efficiently compute a partition $(B_{t,1},B_{t,2})$ of $M_t$ such that both $B_t = (B_{t,1},B_{t,2})$ and $B'_t = (B_{t,2},B_{t,1})$ are SD-EF1 allocations. 
\end{lemma}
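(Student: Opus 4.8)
The plan is to recast ``both $(B_{t,1},B_{t,2})$ and $(B_{t,2},B_{t,1})$ are SD-EF1'' as a single balanced-prefix condition on the partition, and then realize that condition as a proper $2$-coloring of an auxiliary graph that turns out to always be bipartite.

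First I would invoke the $n=2$ sufficiency clause of \Cref{prop:sd-ef1}. Writing $X = B_{t,1}$ and $Y = B_{t,2}$, the allocation $(X,Y)$ is SD-EF1 provided $\card{X \cap T_1(M_t,r)} \ge \floor{r/2}$ and $\card{Y \cap T_2(M_t,r)} \ge \floor{r/2}$ for all $r \in [|M_t|]$, while the swapped allocation $(Y,X)$ is SD-EF1 provided $\card{Y \cap T_1(M_t,r)} \ge \floor{r/2}$ and $\card{X \cap T_2(M_t,r)} \ge \floor{r/2}$. Since $X$ and $Y$ partition the $r$ goods of $T_i(M_t,r)$, we have $\card{X \cap T_i(M_t,r)} + \card{Y \cap T_i(M_t,r)} = r$; hence both of an agent's inequalities hold simultaneously if and only if $\card{X \cap T_i(M_t,r)}$ and $\card{Y \cap T_i(M_t,r)}$ both lie in $\set{\floor{r/2},\ceil{r/2}}$. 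So it suffices to find a partition $(X,Y)$ that, for each agent $i$ and each prefix length $r$, splits $T_i(M_t,r)$ as evenly as possible.

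Second I would convert this balanced-prefix requirement into a local pairing condition. Fix agent $i$ and let $g^i_1 \succ_i g^i_2 \succ_i \cdots$ be her tie-broken strict ranking of $M_t$. The even prefixes force $\card{X \cap T_i(M_t,2j)} = j$ for all $j$, which is exactly the statement that each consecutive pair $\set{g^i_{2j-1}, g^i_{2j}}$ contributes one good to $X$ and one to $Y$; conversely, if every such pair is bichromatic then every prefix is automatically balanced (odd prefixes being off by at most one, and any leftover last good when $|M_t|$ is odd being unconstrained). Thus the problem reduces to $2$-coloring the goods so that, for \emph{both} agents, each of these consecutive pairs receives one good of each color. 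I would encode this as a graph $G$ on vertex set $M_t$ with an edge $\set{g^i_{2j-1}, g^i_{2j}}$ for each agent $i$ and each $j$; a partition realizing the balanced condition is precisely a proper $2$-coloring of $G$.

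The crux, and the step I expect to be the main obstacle, is showing that $G$ is always bipartite (so that a feasible partition is guaranteed to exist). Each good lies in at most one pair per agent, so every vertex has degree at most $2$, and whenever a vertex has degree $2$ its two incident edges come from different agents. Hence $G$ decomposes into paths and cycles along which the agent label of the edges must alternate, so every cycle has even length and $G$ contains no odd cycle. A proper $2$-coloring (computable by breadth-first search on each component, coloring isolated or unpaired vertices arbitrarily) then yields the desired partition $(B_{t,1},B_{t,2})$, and by the reduction of the first paragraph both $(B_{t,1},B_{t,2})$ and $(B_{t,2},B_{t,1})$ are SD-EF1. Once the bounded-degree, alternating-label structure is identified the bipartiteness argument is short, but it is exactly what makes the existence claim (and the polynomial-time computation) go through.
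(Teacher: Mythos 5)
Your proof is correct and takes essentially the same route as the paper's: the same auxiliary graph whose edges join the two goods in each consecutive ranked pair $T_i(M_t,2r) \setminus T_i(M_t,2(r-1))$ for each agent, the same observation that this graph is a union of two matchings and hence bipartite, and the same appeal to the sufficiency condition of \Cref{prop:sd-ef1} applied to the resulting $2$-coloring. The only difference is presentational: you spell out the bipartiteness argument (degree at most two, alternating agent labels along paths and cycles, so no odd cycle) that the paper compresses into the one-line fact that a union of two matchings is bipartite.
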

\begin{proof}
    Draw a graph $G = (M_t,E)$ with the goods in $M_t$ as the nodes. For each agent $i \in [2]$ and $r \in [\floor{\frac{|M_t|}{2}}]$, draw an edge between the two goods in $T_i(M_t,2r) \setminus T_i(M_t,2(r-1))$. The edges added for each agent form a matching, so this graph is a union of two matchings, and hence, a bipartite graph. Thus, it admits a 2-coloring $c : M_t \to \set{1,2}$, which can be computed efficiently. Define $B_t = (B_{t,1},B_{t,2})$, where $B_{t,i} = \set{g \in M_t : c(g)=i}$ for each $i \in [2]$. 
    
    Note that due to the way we added the edges, each agent $i \in [2]$ receives exactly $r$ of her $2r$ most favorite goods, for each $r \in [\floor{\frac{|M_t|}{2}}]$. This meets the sufficiency condition from \Cref{prop:sd-ef1}, implying that $B_t$ is SD-EF1. 

    It is easy to see that the same reasoning also shows $B'_t = (B_{t,2},B_{t,1})$ is also SD-EF1.
\end{proof}

For readers familiar with the bihierarchy matrix decomposition theorem of \citet{BCKM13}, it is worth remarking that \Cref{lem:2-opposite-sdef1} can also be derived as a corollary. Specifically, we can define a binary variable $x_g \in \set{0,1}$ to indicate whether good $g$ should be allocated to agent $1$ (with $1-x_g$ denoting whether it should be allocated to agent $2$, and write the set of constraints: 
\begin{align*}
&\forall r \in \left[\floor{\frac{|M_t|}{2}}\right] : \textstyle\sum_{g \in T_1(M_t,2r) \setminus T_1(M_t,2(r-1))} x_g = 1,\\
&\forall r \in \left[\floor{\frac{|M_t|}{2}}\right] : \textstyle\sum_{g \in T_2(M_t,2r) \setminus T_2(M_t,2(r-1))} (1-x_g) = 1.
\end{align*}
It is easy to notice that this constraint set forms a ``bihierarchy'', and since it admits a fractional solution ($x_g = 1/2$ for all $g$), the result of \citet{BCKM13} implies the existence and polynomial-time compatibility of an integral allocation satisfying them, which is what we need. However, we provide a more direct proof for our specific constraint set because it is simpler to understand and leads to a faster algorithm.

We can plug in these allocations into the envy-balancing lemma (\Cref{lem:envy-balancing}) to get our desired main result. 

\begin{theorem}\label{thm:2-sdef1per-ef1each}
    For temporal fair division with $n=2$ agents, an allocation that is SD-EF1 per day and EF1 up to each day exists and can be computed in polynomial time. 
\end{theorem}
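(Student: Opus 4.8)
\textbf{Proof proposal for \Cref{thm:2-sdef1per-ef1each}.}

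The plan is to combine the two lemmas developed immediately above into a single pipeline. First, for each day $t \in [k]$, I would invoke \Cref{lem:2-opposite-sdef1} to efficiently compute a partition $(B_{t,1},B_{t,2})$ of $M_t$ such that both $B_t = (B_{t,1},B_{t,2})$ and $B'_t = (B_{t,2},B_{t,1})$ are SD-EF1 allocations. This gives me, for every day, a pair of allocations that are each SD-EF1 (and hence EF1). The second ingredient I need is that this pair cancels out in the sense of the Cancelling Allocations definition: since $B'_t$ is just $B_t$ with the two bundles swapped, we have $v_i(B_{t,i}) + v_i(B'_{t,i}) = v_i(B_{t,i}) + v_i(B_{t,3-i}) = v_i(M_t)$, which equals $v_i(B_{t,3-i}) + v_i(B'_{t,3-i})$, so the cancelling condition holds with equality for both agents. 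Thus the pair $(B_t,B'_t)$ satisfies exactly the hypotheses of the envy-balancing lemma.

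Next I would feed these $k$ cancelling pairs of EF1 allocations into \Cref{lem:envy-balancing}. That lemma outputs, in polynomial time, an allocation $A$ of $M = \cup_{t \in [k]} M_t$ that is EF1 up to each day and satisfies $A_{M_t} \in \set{B_t, B'_t}$ for every day $t$. This immediately delivers the EF1-up-to-each-day half of the claim. For the per-day half, I rely on the remark following the envy-balancing lemma: since $A$ uses exactly one of the two input allocations on each day, any per-day property stronger than EF1 that both input allocations share is inherited by $A_{M_t}$. Because both $B_t$ and $B'_t$ were constructed to be SD-EF1, the chosen allocation $A_{M_t}$ is SD-EF1, giving SD-EF1 per day. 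The running time is polynomial since each of the $k$ calls to \Cref{lem:2-opposite-sdef1} and the single call to \Cref{lem:envy-balancing} runs in polynomial time.

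The only genuine subtlety — and the step I would be most careful about — is the labeling convention in the input to \Cref{alg:envy-balancing}, which requires that when neither allocation in a pair is envy-free, the labels are assigned so that $v_1(B^1_{t,1}) - v_1(B^1_{t,2}) \ge 0$ and $v_2(B^2_{t,2}) - v_2(B^2_{t,1}) \ge 0$. Since $B'_t$ is merely $B_t$ with the bundles swapped, I can always relabel the pair $\set{B_t, B'_t}$ to meet this convention: designate as $B^1_t$ whichever of the two allocations agent $1$ weakly prefers (i.e., gives agent $1$ its larger-valued bundle) and $B^2_t$ the other; the cancelling property then forces agent $2$'s preference to point the opposite way, matching the required sign conditions. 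This relabeling is free because the two allocations are reflections of one another, so no structural work is needed — the heavy lifting is entirely contained in the two lemmas, and the theorem follows by their composition.
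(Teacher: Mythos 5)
Your proposal is correct and takes essentially the same approach as the paper: both compute, via \Cref{lem:2-opposite-sdef1}, a single per-day partition whose two assignments are each SD-EF1, note that these bundle-swapped allocations trivially cancel out (with equality), and then feed the $k$ pairs into the envy-balancing lemma (\Cref{lem:envy-balancing}) to obtain EF1 up to each day while inheriting SD-EF1 per day. Your explicit verification of the cancelling condition and of the labeling convention required by \Cref{alg:envy-balancing} are details the paper leaves implicit (the labeling issue is absorbed into the proof of the envy-balancing lemma itself), but your handling of them is sound.
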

\begin{proof}
     Consider the allocations $B_t$ and $B'_t$ generated by \Cref{lem:2-opposite-sdef1}. Each is SD-EF1 and because they use the same partition of $M_t$ into bundles but do the opposite assignments, they trivially cancel out. Thus, due to \Cref{lem:envy-balancing}, these can be combined to compute an allocation that is SD-EF1 per day and EF1 up to each day. 
\end{proof}

There are several other desirable fairness desiderata which, just like SD-EF1, are strengthenings of EF1. For any such desideratum, if we can prove that for any set of goods $S$, there always exist $2$ allocations that achieve that desideratum and cancel out, then that directly implies that is it possible to achieve that desideratum per day alongside EF1 up to each day for instances with two agents. In \Cref{app:sec-4}, we provide a discussion of some of these desiderata (namely EFX and EF1+PO), and for each, prove that the desired $2$ allocations that cancel out always exist.

\subsection{Impossibilities}\label{sec:two-negative}

We have shown that when there are only two agents, the strong guarantee of EF1 up to each day can be obtained along with the guarantee of SD-EF1 per day. However, one wonders if even stronger guarantees are possible, such as strengthening EF1 up to each day to SD-EF1 up to each day. We find that not only is this strengthening impossible, but even if we relax SD-EF1 up to each day to SD-EF1 overall, it is impossible to achieve alongside EF1 per day. Together, these two impossibility results prove that our guarantee of SD-EF1 per day and EF1 up to each day from \Cref{thm:2-sdef1per-ef1each} is the strongest possible in the hierarchy shown in \Cref{fig:hierarchy}. 

\begin{theorem}\label{thm:2-sdef1upto-NO}
    For temporal fair division with $n=2$ agents, SD-EF1 up to each day cannot be guaranteed.
\end{theorem}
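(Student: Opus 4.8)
The plan is to exhibit a single instance with two agents and \emph{identical} valuations (hence identical orderings, so that $T_1(S,r)=T_2(S,r)=:T(S,r)$ for all $S,r$) on which no allocation is SD-EF1 up to each day. I would use an instance with one good per day, so that $\bar{M}_t$ gains exactly one good each day and the cumulative ranking can be controlled by choosing the arrival order relative to the value order. The engine of the argument is the necessity direction of \Cref{prop:sd-ef1} specialized to $n=2$ with identical orderings: if $A_{\bar{M}_t}$ is SD-EF1, then for every $r$ we have $\card{A_i \cap T(\bar{M}_t,r)} \ge \floor{r/2}$ for both $i \in [2]$; since the top $r$ goods are fully allocated, their counts sum to $r$, so the top $r$ goods must be split $\floor{r/2}$--$\ceil{r/2}$. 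In particular, for each even $r=2j$ the two goods at cumulative ranks $2j-1$ and $2j$ must go to \emph{different} agents. (Distinct values ensure the strict-separation hypothesis of the necessity clause holds at every boundary.)

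Concretely, I would take four goods with distinct values $z<y<x<w$, arriving in the order $x$ (day $1$), $z$ (day $2$), $y$ (day $3$), $w$ (day $4$). Reading off the ``rank $2j-1$/$2j$ must differ'' constraints day by day: day $1$ imposes nothing; after day $2$ the cumulative set is $\set{x,z}$ with $x\succ z$, so the $r=2$ condition forces $x$ and $z$ to different agents; after day $3$ the set is $\set{x,y,z}$ with top two $\set{x,y}$, forcing $x$ and $y$ to different agents; and after day $4$ the set is $\set{w,x,y,z}$ with $w\succ x\succ y\succ z$, whose balance conditions at $r=2$ and $r=4$ force $\set{w,x}$ to be split and hence $\set{y,z}$ to be split as well, so $y$ and $z$ go to different agents.

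Viewing the assignment as a $2$-coloring of the goods, these necessary conditions require $x,y,z$ to be pairwise differently colored --- an odd cycle (triangle) of ``must differ'' constraints, which is impossible with only two colors. Formally, $x\neq y$ and $x\neq z$ force $y=z$, contradicting $y\neq z$. Hence no allocation of this instance is SD-EF1 up to each day, proving the claim; and since the instance uses identical valuations, the impossibility does not rely on heterogeneity.

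The step I expect to be most delicate is arranging the arrival order so that the three forced ``differ'' constraints actually close into a triangle. The edge $x\neq z$ can only be created while $y$ is still absent (so that $x,z$ occupy the top two ranks), which is why $z$ must arrive before $y$; and the edge $y\neq z$ requires an \emph{even} number of higher-valued goods above $y$ (so that $y,z$ land at an even rank boundary rather than straddling one), which is exactly what the late arrival of the top good $w$ supplies. Reconciling these adjacency and parity requirements is the crux; once the arrival order above is fixed, the contradiction follows immediately from \Cref{prop:sd-ef1}.
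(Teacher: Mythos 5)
Your proof is correct and takes essentially the same approach as the paper's: both construct a four-good instance with identical valuations, apply the necessity direction of \Cref{prop:sd-ef1} to the cumulative sets to extract three pairwise ``must go to different agents'' constraints (the third obtained, exactly as in the paper, from splitting the top pair plus the balance condition at $r=4$), and conclude by the impossibility of 2-coloring a triangle. The only differences are cosmetic --- the paper delivers two goods on day~1 and uses three days, while you deliver one good per day over four days, and your last-arriving good is the top-ranked one rather than the second-ranked.
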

\begin{proof}
    Consider the following instance in which four goods arrive over three days: $M_1=\set{g_1,g_4}$, $M_2=\set{g_3}$, and $M_3=\set{g_2}$. Two agents have identical valuations given by $v(g_1)=4$, $v(g_2)=3$, $v(g_3)=2$, and $v(g_4)=1$ (since we seek SD-EF1, only the fact that the agents strictly prefer $g_1 \succ g_2 \succ g_3 \succ g_4$ matters). 

    For an allocation $A$ to be SD-EF1 up to each day, $A_{\bar{M}_t}$ must be SD-EF1 for each $t \in [3]$, where $\bar{M}_1 = \set{g_1,g_4}, \bar{M}_2 = \set{g_1,g_3,g_4}$, and  $\bar{M}_3 = M = \set{g_1,g_2,g_3,g_4}$.

    By the necessity condition of \Cref{prop:sd-ef1}, we can make the following claims:
    \begin{itemize}
        \item For $A_{\bar{M}_1}$ to be SD-EF1, $g_1$ and $g_4$ must be given to different agents.
        \item For $A_{\bar{M}_2}$ to be SD-EF1, $g_1$ and $g_3$ must be given to different agents. 
        \item For $A_{\bar{M}_3}$ to be SD-EF1, $g_1$ and $g_2$ must be given to different agents. Since each agent must also get $2$ goods from $T(M,4) = M$, it follows that $g_3$ and $g_4$ must also be given to different agents.
    \end{itemize}
    
    We now have the requirements that $g_1$, $g_3$, and $g_4$ must all be pairwise given to different agents, which is impossible since there are only two agents. Hence, in this instance, there is no allocation that is SD-EF1 up to each day.
\end{proof}

\begin{theorem}\label{thm:2-ef1per-sdef1over-NO}
    For temporal fair division with $n=2$ agents, EF1 per day and SD-EF1 overall cannot be guaranteed simultaneously. 
\end{theorem}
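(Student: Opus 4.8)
The plan is to exhibit a single instance with two agents for which no allocation is simultaneously EF1 per day and SD-EF1 overall, and to verify this by a finite case analysis. The starting point is that the impossibility must genuinely use heterogeneous preferences: under identical orderings even SD-EF1 per day together with SD-EF1 overall is achievable (\Cref{thm:idpref-sdef1per-sdef1over}), so the two agents must disagree in their orderings. The engine of the argument is the necessity direction of \Cref{prop:sd-ef1}: for two agents an allocation $A$ is SD-EF1 over a set $S$ only if $\card{A_i \cap T_i(S,r)} \ge \floor{r/2}$ for both agents and every $r$, and when the agents' top-$r$ sets coincide this tightens to $\card{A_i \cap T_i(S,r)} \in \set{\floor{r/2},\ceil{r/2}}$, forcing exact balance and hence forcing certain pairs of goods to be split across the two agents. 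I would use these counting conditions to pin down what any SD-EF1-overall allocation must look like, and then show that the surviving candidates all fail EF1 on some day.

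On the EF1-per-day side I would use two structural facts. First, on any day consisting of two goods that both agents value positively, EF1 forces a split (one good to each agent), since giving both to one agent leaves the other envious even after a single good is removed. Second --- and this is the lever that two-good days do not provide --- on a day with three or more goods, EF1 forbids leaving an agent with only a single good that is their \emph{least} preferred good of that day: if agent $i$ receives one good $x$ while agent $j$ receives a bundle $B$ with $v_i(x) < v_i(B) - \max_{g \in B} v_i(g)$ (for a three-good day, $v_i(x) < \min_{g \in B} v_i(g)$), then removing any one good from $B$ still leaves $i$ envious. Thus on a multi-good day the agent who ends up with a lone good must receive one of their more preferred goods of that day, which constrains how the day's goods may be partitioned. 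The instance should therefore be built around a day with at least three goods, so that these per-day constraints bite while the overall counting conditions of \Cref{prop:sd-ef1} force exactly the partition of that day that EF1 prohibits.

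Concretely, I would look for valuations under which the SD-EF1-overall counting conditions force one agent to collect the high-ranked goods of a multi-good day (to meet that agent's prefix inequality at some threshold $r$), while simultaneously forcing the lone good left for the other agent on that day to be that agent's worst good of the day, thereby violating EF1 there --- and crucially such that every other admissible partition of the days breaks some prefix inequality $\card{A_i \cap T_i(M,r)} \ge \floor{r/2}$ for one of the two agents. The main obstacle is precisely this last requirement: both properties are individually very permissive (SD-EF1 overall only demands at least $\floor{r/2}$ of each prefix, which an ``interleaved'' allocation tends to satisfy, and balanced per-day allocations are automatically EF1), so the difficulty is engineering the preferences --- neither identical nor complementary, with the goods grouped across days so that no single allocation is prefix-balanced for \emph{both} agents while respecting the per-day grouping --- so that the two feasible regions are disjoint. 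Once such an instance is fixed, the proof reduces to a careful enumeration of the finitely many EF1-per-day-admissible day partitions, checking each against the counting conditions of \Cref{prop:sd-ef1}, in the same style as the impossibility proof of \Cref{thm:2-sdef1upto-NO}.
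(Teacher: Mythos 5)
Your proposal correctly identifies the architecture of the paper's argument: a single concrete two-agent instance, per-day EF1 constraints that force the top goods of each day to be split (both of your structural facts about two-good and three-good days are correct and are exactly what the paper exploits), and the necessity direction of \Cref{prop:sd-ef1} to pin down what any SD-EF1-overall allocation must look like. But the proposal stops precisely where the proof must begin: no instance is ever exhibited. For an impossibility theorem whose entire content is the existence of a counterexample, ``I would look for valuations under which\dots'' is a research plan, not a proof. You yourself flag the engineering of the preferences as ``the main obstacle,'' and that obstacle \emph{is} the theorem; nothing in the outline certifies that the two feasible regions can actually be made disjoint. Indeed, your own observation that each property is individually permissive is evidence that the construction is genuinely delicate, so the claim cannot be accepted on the strength of the strategy alone.

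For comparison, the paper's instance (eight goods over days $M_1=\{g_1,g_5,g_7\}$, $M_2=\{g_2,g_4,g_6\}$, $M_3=\{g_3,g_8\}$, with agent $1$ ranking $g_1\succ g_2\succ\cdots\succ g_8$ and agent $2$ ranking $g_2\succ g_3\succ g_1\succ g_4\succ g_6\succ g_8\succ g_5\succ g_7$) rests on a design trick your outline does not articulate: the two agents' orderings \emph{restricted to each day} are identical, while their overall orderings interleave the days differently. This makes the per-day EF1 analysis immediate --- the pairs $(g_1,g_5)$, $(g_2,g_4)$, $(g_3,g_8)$ must each be split --- and pushes all of the disagreement into the overall counting argument: \Cref{prop:sd-ef1} forces agent $1$ to hold $g_1$ and agent $2$ to hold $g_3$ in any SD-EF1-overall allocation; the per-day splits then deny agent $1$ the good $g_5$ and agent $2$ the good $g_8$, so the top-$6$ prefix inequality forces \emph{both} agents to need $g_6$, a contradiction. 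Note also that the paper's contradiction has a different shape from the one you aim for: rather than SD-EF1 overall forcing an EF1-violating partition of some day, the two families of constraints jointly demand that the same good go to both agents. If you want to complete your proposal, the identical-within-day/different-across-day construction is the missing ingredient to reach for.
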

\begin{proof}
    Consider the following instance, in which eight goods arrive over three days: $M_1=\{g_1,g_5,g_7\}$, $M_2=\{g_2,g_4,g_6\}$, and $M_3=\{g_3,g_8\}$. The valuations are as follows:
    \[
    \begin{array}{@{}llllllll@{}}
    v_1(g_1)=8 & v_1(g_2)=7 & v_1(g_3)=6 & v_1(g_4)=5 & v_1(g_5)=4 & v_1(g_6)=3 & v_1(g_7)=2 & v_1(g_8)=1\\
    v_2(g_2)=8 & v_2(g_3)=7 & v_2(g_1)=6 & v_2(g_4)=5 & v_2(g_6)=4 & v_2(g_8)=3 & v_2(g_5)=2 & v_2(g_7)=1    
    \end{array}
    \]

    Notice that although each agent has a different strict ordering over $M$, their orderings restricted to the goods in any day $M_t$ are identical. Since there are only $2$ or $3$ goods given on any day and each agent has strict preferences over them, for allocation $A_{M_t}$ to be EF1, the same agent cannot receive both the items from $T(M_t,2)$ (otherwise, the other agent, who receives at most one good from $M_t$ that they value strictly less than each good in $T(M_t,2)$, would envy them even after the removal of one of the goods).

    Thus, EF1 per day requires that among the pairs of goods $(g_1,g_5)$, $(g_2,g_4)$, and $(g_3,g_8)$, the two goods in each pair go to different agents.

    Now consider what is required in order for the allocation $A_M$ to be SD-EF1 overall while adhering to these constraints. Since both agents have strict preferences over the goods, by the necessity condition of \Cref{prop:sd-ef1}, we can say that for both $i \in \{1,2\}$, $|A_{M,i} \cap T_i(M,r)| \ge \floor{r/n}$ must be true for all $r \in [|M|]$. 
    
    This means that each agent must receive at least one of their top $2$ goods. Specifically, agent $1$ must receive one of $T_1(M,2) = \{g_1,g_2\}$, and agent $2$ must receive at least one of $T_2(M,2) = \{g_2,g_3\}$. 
    
    It also means that each agent must receive at least $2$ of their top $4$ goods. Observe that $T_i(M,4) = \{g_1,g_2,g_3,g_4\}$ for both agents, so, in order to satisfy this, each agent must receive \emph{exactly} $2$ goods from $\{g_1,g_2,g_3,g_4\}$. 
    
    Notice that under these restrictions, in any allocation where agent $1$ receives $g_2$, they cannot receive $g_3$ (or they would have both of agent $2$'s top $2$ goods) or $g_4$ (due to the per day constraints). Since the allocation over the top $4$ goods must be balanced (each agent receiving exactly two of them), the only possible allocation $A_{T_1(M,4)}$ in this scenario would be $(\{g_1,g_2\},\{g_3,g_4\})$. Using the same logic, when agent $2$ receives $g_2$, the only possible allocation is $(\{g_1,g_4\},\{g_2,g_3\})$. Since one of the agents must be given $g_2$, these are the only two ways that the top $4$ goods can be allocated. Note that in both of these allocations, agent 1 gets $g_1$ and agent 2 gets $g_3$.

    Finally, consider how the remaining goods $\{g_5,g_6,g_7,g_8\}$ must be allocated to guarantee SD-EF1 overall. Each agent must receive at least $3$ of their top $6$ goods, and since we know that each agent has exactly $2$ of their top $4$ goods, that means that agent $1$ must receive at least one of $\{g_5,g_6\}$, and agent $2$ must receive at least one of $\{g_6,g_8\}$. Since we know that agent 1 must be allocated $g_1$, the per day constraints say they cannot receive $g_5$, so they must receive $g_6$. Similarly, agent $2$ is known to have $g_3$, so they cannot receive $g_8$, which means they must also receive $g_6$. Clearly, both agents cannot receive $g_6$, meaning in this instance, there is no allocation that is EF1 per day and SD-EF1 overall. 
\end{proof}

\subsection{Two Agents and Further Restrictions}

As the final part of this section, we note that when further restrictions are placed on two agent instances, we receive even stronger positive results. Particularly, in \Cref{app:sec-4}, we prove and provide a discussion of the following theorem.

\begin{theorem}\label{2-iddays-SDEF1}
    For temporal fair division with $n=2$ agents and identical days, an allocation that is SD-EF1 per day, SD-EF1 up to each day, and SD-EF up to each even day exists and can be computed in polynomial time. 
\end{theorem}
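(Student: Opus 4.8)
The plan is to exploit the identical-days structure to fix a single canonical partition of each day's goods into two sides, and then merely alternate which agent receives which side across days. Concretely, I would first invoke \Cref{lem:2-opposite-sdef1} on the day-$1$ goods $M_1$ to obtain a partition $(L_1,R_1)$ for which \emph{both} orientations $(L_1,R_1)$ and $(R_1,L_1)$ are SD-EF1. Using the identical-days bijections $f_{1,t}\colon M_1\to M_t$, I transfer this partition to every day, setting $L_t = f_{1,t}(L_1)$ and $R_t = f_{1,t}(R_1)$; since $f_{1,t}$ preserves each agent's valuations (hence orderings), both orientations remain SD-EF1 on day $t$. The allocation $A$ is then defined by assigning, on each odd day $t$, the side $L_t$ to agent $1$ and $R_t$ to agent $2$, and on each even day the reverse. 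Per-day SD-EF1 is then immediate from \Cref{lem:2-opposite-sdef1}, and the whole procedure is clearly polynomial time (one call to the lemma, then alternation).

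The second step is the SD-EF guarantee up to each even day. After an even number of days $2\ell$, agent $1$ holds the $L$-sides from the odd days and the $R$-sides from the even days, i.e.\ exactly $\ell$ copies of the $L$-side and $\ell$ copies of the $R$-side; agent $2$ holds exactly the same type-multiset (with the roles of odd and even days interchanged). Because identical days make every copy of a given good indifferent to every agent, $A_{\bar{M}_{2\ell},1}$ and $A_{\bar{M}_{2\ell},2}$ contain the same number of goods weakly preferred to any threshold for each agent, so $A_i \succeq_i^{\sd} A_j$ holds in both directions; this is exactly SD-EF. Crucially, this argument survives the case $|M_t|$ odd (where $|L_t|\neq|R_t|$), since the swap still guarantees that over any two consecutive days each agent receives exactly one copy of each side.

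The third and most delicate step is SD-EF1 up to each odd day. After $2\ell+1$ days I would decompose each agent's bundle into a balanced part from the first $2\ell$ days plus the single extra side from day $2\ell+1$. By the previous step the balanced parts are identical type-multisets across the two agents, so they contribute equally above every preference threshold and cancel out of the SD comparison; what remains is precisely the day-$(2\ell+1)$ split $(L_{2\ell+1},R_{2\ell+1})$, which is SD-EF1 by construction. To conclude I would use the elementary fact that $\succeq_i^{\sd}$ is preserved under appending a common bundle to both sides (the count above each threshold rises equally), lifting the witness $g$ of day-level SD-EF1 to show $A_{\bar{M}_{2\ell+1},i} \succeq_i^{\sd} A_{\bar{M}_{2\ell+1},j}\setminus\set{g}$. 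The main obstacle is making this cancellation rigorous at the level of the SD relation rather than cardinal values: one must verify that indifferent copies truly contribute identical threshold counts, and that the good $g$ witnessing day-level SD-EF1 remains a valid witness once the balanced parts are appended. I expect the cleanest formalization to route entirely through the threshold-count characterization of $\succeq_i^{\sd}$ and \Cref{prop:sd-ef1}, avoiding any appeal to a specific additive valuation.
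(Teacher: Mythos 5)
Your proposal is correct and follows essentially the same route as the paper's proof: invoke \Cref{lem:2-opposite-sdef1} once on $M_1$, alternate the two opposite orientations across odd and even days, and use the fact that the allocations completely cancel out after every even day to get SD-EF up to each even day and SD-EF1 up to each odd day. Your write-up merely spells out in more detail (via threshold counts) the cancellation argument that the paper states in one line.
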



\section{Identical Orderings}\label{sec:identical}

We next look at instances where agents have identical orderings over all goods in $M$. Not only are results in this scenario practical useful, as there many real life scenarios where participants agree on the the ordinal ranking of goods, but results under identical orderings are also very technically useful. As can be seen from our main result in \Cref{sec:general}, reducing a general setting to one where agents have similar orderings over the goods can lead to fairness guarantees in the original setting. We will show in future sections that the possibility results we develop here can be applied as black-boxes to achieve strong results in more scenarios where agents have heterogeneous orderings.

\paragraph{Possibilities.}
Our main result for the case of identical orderings is the following theorem:

\begin{theorem}\label{thm:idpref-sdef1per-sdef1over}
    For temporal fair division with identical preferences, an allocation that is SD-EF1 per day and SD-EF1 overall exists and can be computed in polynomial time.
\end{theorem}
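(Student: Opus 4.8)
The plan is to exploit the characterization of SD-EF1 under identical orderings from \Cref{prop:sd-ef1}, which (since $T_i(S,r)=T(S,r)$ for all agents) says that an allocation of a set $S$ is SD-EF1 if and only if $|A_i \cap T(S,r)| \in \{\lfloor r/n\rfloor, \lceil r/n\rceil\}$ for every agent $i$ and every $r \in [|S|]$. First I would record the clean combinatorial meaning of this condition: order the goods of $S$ by the common ordering and cut them into consecutive blocks of size $n$ (the last possibly smaller); the condition holds exactly when every \emph{full} block is shared among the $n$ agents with each agent receiving exactly one good, and the goods of the final \emph{partial} block go to distinct agents. Under identical orderings an allocation is nothing more than an assignment of one agent to each good, and because the within-day order of $M_t$ is just the global order restricted to $M_t$, SD-EF1 overall becomes the block condition for the global blocks $D_1, D_2, \ldots$ of $M$, while SD-EF1 per day becomes the block condition for the per-day blocks $C_{t,1}, C_{t,2}, \ldots$ of each $M_t$.

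The key step is to reduce the simultaneous satisfaction of these two block conditions to \emph{bipartite edge coloring}. I would build a bipartite multigraph $B$ whose left vertices are the global blocks $\{D_\ell\}$ and whose right vertices are the per-day blocks $\{C_{t,j}\}$, placing one edge for each good $g$ joining the global block containing $g$ to the per-day block containing $g$. Then two goods lie in a common global block (resp. common per-day block) precisely when their edges meet at a left (resp. right) vertex, so a proper edge coloring of $B$ using the $n$ agents as colors is exactly an agent-assignment in which no two goods of the same block receive the same agent. Every global block and every per-day block contains at most $n$ goods, so $B$ has maximum degree at most $n$; by K\"onig's bipartite edge-coloring theorem it is properly $n$-edge-colorable in polynomial time. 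Interpreting the colors as agents, each degree-$n$ vertex (a full block) sees all $n$ colors and hence gives each agent exactly one good, while each lower-degree vertex (a partial block) distributes its goods to distinct agents --- which is exactly the pair of block conditions identified above.

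Combining these, the coloring yields an allocation that is SD-EF1 per day and SD-EF1 overall, computable in polynomial time. The main obstacle is conceptual rather than technical: recognizing that the two families of balance constraints, which are \emph{not} laminar with respect to one another (global and daily blocks cross, so the bihierarchy machinery of \citet{BCKM13} does not apply directly), can nonetheless be encoded as the two vertex sides of a single bipartite graph, turning a constrained-allocation question into an edge-coloring question. The remaining checks --- that the within-day order coincides with the restricted global order, that partial blocks are correctly handled by low-degree vertices, and that parallel edges (goods sharing both a global and a daily block) are legitimately forced to distinct agents by properness --- are routine once the reduction is in place.
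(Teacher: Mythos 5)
Your proposal is correct and takes essentially the same route as the paper: the paper builds the identical two block families ($P_1$ per day, $P_2$ overall), forms the same goods-as-edges bipartite graph between them, and extracts the allocation from a decomposition into $n$ perfect matchings. The only difference is cosmetic --- the paper pads with dummy goods and empty sets to make the graph $n$-regular before decomposing, whereas you invoke K\"onig's edge-coloring theorem directly on the max-degree-$n$ multigraph, which amounts to the same argument.
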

\begin{proof}
    Start by constructing two set families, $P_1$ and $P_2$. 

    \begin{align*}
        P_1 &= \{T(M_t,nr) \setminus T(M_t,n(r-1)) : r \in \left[\ceil{\frac{|M_t|}{n}}\right], t \in [k]\},\\
        P_2 &= \{T(M,nr) \setminus T(M,n(r-1)) : r \in \left[\ceil{\frac{|M|}{n}}\right]\}.
    \end{align*}

    In words, $P_2$ splits the entire set of goods $M$ into the agents' most preferred $n$ goods, their next most preferred $n$ goods, etc. $P_1$ does a similar partitioning, but splits the goods from each day separately.
    
    Because preferences orderings of all agents are identical, we can use the sufficiency condition of \Cref{prop:sd-ef1}, which states that $\card{A_i \cap T(M,r)} \in \set{\floor{r/n},\ceil{r/n}}$ for all $i \in N$ and $r \in [|M|]$ implies SD-EF1 overall. Clearly, if each agent receives at most $1$ good from each set in $P_2$, this will be true. Similarly, if each agent receives at most $1$ good from each set in $P_1$, then for all $i \in N$, $t \in [k]$, and $r \in [|M_t|]$, we would have $\card{A_i \cap T(M_t,r)} \in \set{\floor{r/n},\ceil{r/n}}$, implying SD-EF1 per day. Therefore, any allocation which gives each agent at most $1$ good from each of the sets in $P_1 \cup P_2$ will be SD-EF1 per day and SD-EF1 overall. Our goal is to find an allocation that meets these constraints.

    We will first start by adjusting the structure of the set families slightly to make the problem easier to work with. Note that it must be the case that $|P_1| \geq |P_2|$. $P_2$ will contain $\ceil{m/n}$ sets, $\floor{m/n}$ sets of exactly $n$ goods, and if $m/n$ is not an integer, then $1$ additional set will be included containing the remaining goods. $P_1$ will contain at least $\ceil{m/n}$ sets, since it will be a disjoint partition of all the goods in $m$, with each set having a maximum set size of $n$, but may contain up to $k$ sets with size less than $n$ due to the fact that it is partitioning each day individually.

    First, we will create $|P_1|-|P_2|$ empty sets and place them in $P_2$, making it so that $|P_1| = |P_2|$. Then, if $|P_1| > \frac{m}{n}$, we will create $n|P_1| - m$ dummy goods (goods with $0$ value to every agent), and place each dummy good into a set from $P_1$ and $P_2$ in such a way that all $|P_1|$ sets in both families contain exactly $n$ goods each. Note that the restriction that each agent must receive no more than $1$ good from each of these updated families creates is a stronger constraint than what is needed to guarantee SD-EF1. Any allocations that meets the constraints on the updated set families can be clearly shown to meet the original SD-EF1 constraints simply by removing all the dummy goods. After these updates, both $P_1$ and $P_2$ will contain the same number of sets, and the size of every set in both families will be exactly $n$.
    
    Next, construct a graph where each set from $P_1$ or $P_2$ is a vertex, and put an edge between two sets in $P_1$ and $P_2$ if they share a good (there can be multiple edges between vertices). Label each edge to correspond to the good it represents. Since both $P_1$ and $P_2$ are disjoint partitions over the set of goods $M + \text{``the dummy goods''}$, there will be no edge between two sets from the same family, making this a bipartite graph where each vertex has a degree of exactly $n$. Therefore, this bipartite graph can be deconstructed into $n$ perfect matchings. Let each of these matchings denote a bundle of the items corresponding to the edges in that matching. These $n$ bundles will form a disjoint partition over the full set of goods $M$, and each bundle will contain at most $1$ good from each set in $P_1 \cup P_2$.
    
    Since there are $n$ bundles, we can create an allocation by assigning each one arbitrarily to an agent. Any allocation formed this way will meet the constraints for both $P_1$ and $P_2$, and thus will be be SD-EF1 per day and SD-EF1 overall.
\end{proof}

To an initiated reader, the problem of finding an allocation that meets the above constraints may be immediately reminiscent of the bihierarchy framework of \citet{BCKM13}. We require that from each set of (at most) $n$ goods out of a family of sets, each agent receives (at most) one good. The sets produced by each desideratum are mutually non-overlapping, forming a ``hierarchy'', but the sets produced by one desideratum can be overlapping with those produced by the other, resulting in two different hierarchies. However, the problem is that with $n > 2$ agents, we have a third set of constraints: each good must be assigned to (exactly) one agent. This forms a third hierarchy (which cannot be assimilated into either of the two previous hierarchies), preventing one from applying the bihierarchy framework. 

\paragraph{Impossibilities.}

\begin{theorem}\label{thm:idpref-iddays-sdef1uptoeach-NO}
    For temporal fair division with identical days and identical preferences, SD-EF1 up to each day cannot be guaranteed.
\end{theorem}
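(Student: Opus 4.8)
The plan is to reduce the non-existence of an SD-EF1-up-to-each-day allocation to a two-dimensional discrepancy lower bound for permutations. First I would set up a clean instance exploiting both restrictions simultaneously: take $n$ agents, $k=n$ days, and on each day a copy of the same $n$ goods carrying $n$ distinct values $v_1 > \cdots > v_n$ (this is identical days with identical preferences). Up to day $t$, the set $\bar M_t$ consists of exactly $t$ copies of each value, so the only clean breaks occur at positions $t, 2t, \ldots, nt$. Applying the necessity direction of \Cref{prop:sd-ef1} in its common-orderings form ($T_i(\bar M_t,r)=T_j(\bar M_t,r)$), SD-EF1 up to day $t$ forces, for every class cutoff $c \in [n]$, that the $tc$ goods lying in the top $c$ value classes among the first $t$ days be split so that each agent holds either $\lfloor tc/n\rfloor$ or $\lceil tc/n\rceil$ of them. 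Thus an SD-EF1-up-to-each-day allocation is exactly an assignment of the $n\times n$ array of goods, indexed by (day, class), to the $n$ agents such that every top-left sub-array $[1..t]\times[1..c]$ is balanced across agents.

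Second, I would observe that this balance requirement is extremely rigid. Applying it to the rectangles $[1..n]\times[1..c]$ and $[1..t]\times[1..n]$, whose sizes $nc$ and $tn$ are multiples of $n$ and hence pin each agent's count exactly, inductively forces every class-column and every day-row to be a permutation of the agents; consequently each agent's goods form a permutation matrix $\pi_a$ on the grid. The balance of a general anchored rectangle $[1..t]\times[1..c]$ then says precisely that $\bigl|\#\{i \le t : \pi_a(i) \le c\} - tc/n\bigr| < 1$ for every $t,c$ and every agent $a$ --- that is, each $\pi_a$ is a permutation of $[n]$ whose anchored-rectangle (two-dimensional) discrepancy is below $1$.

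Third, I would invoke (or reprove) the classical fact that no such permutation exists once $n$ is large: the anchored two-dimensional discrepancy of any permutation of $[n]$ is $\Omega(\log n)$ (a Schmidt/van der Corput-type bound). Hence for $n$ large enough that this lower bound exceeds $1$, even a single agent's permutation must have some top-left rectangle off-balance, contradicting SD-EF1 up to the corresponding day. This produces an instance admitting no SD-EF1-up-to-each-day allocation and proves the theorem.

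The main obstacle is the discrepancy lower bound itself, and it is genuinely essential: the reduction shows the statement \emph{fails} for small $n$, since for $n \le 5$ one can exhibit perfectly balanced $n\times n$ colorings (e.g.\ via suitable linear or Latin-square patterns), and any such balanced $n\times n$ tile can be repeated to balance grids of every size --- so the impossibility relies on the discrepancy strictly growing past the constant $1$. I would therefore either cite the $\Omega(\log n)$ permutation-discrepancy bound directly, or, to keep the instance explicit and self-contained, fix a concrete $n$ and give a short counting argument that sums the balance constraints over a nested (dyadic) family of anchored rectangles to certify that some rectangle of some agent must deviate by at least $1$.
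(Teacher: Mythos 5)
Your proposal is correct, but it takes a genuinely different route from the paper. The paper's proof exhibits a tiny concrete instance---$12$ agents, $4$ identical days, $6$ goods per day with distinct values---and derives a contradiction by elementary pigeonhole: the balance constraints forced by the necessity direction of \Cref{prop:sd-ef1} at days $2$, $3$, and $4$ imply that the agent receiving the single good $g_{4,3}$ (class $4$, day $3$) is forbidden from holding any good of $\bar{M}_2$, yet every one of the $12$ agents holds exactly one good of $\bar{M}_2$. Your argument instead sets up an $n \times n$ day-by-class grid with $n$ agents, correctly shows (again via the necessity condition at the strict class boundaries, which your instance respects) that any SD-EF1-up-to-each-day allocation forces each agent's goods to be a permutation matrix whose anchored-rectangle counts deviate from $tc/n$ by less than $1$, and then invokes Schmidt's $\Omega(\log n)$ lower bound on the star discrepancy of planar point sets. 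This is sound: the permutation point set is a special case of Schmidt's theorem, and the passage from real anchored boxes to integer-cornered boxes loses only an additive constant (at most about $3$), so for $n$ large enough the bound exceeds $1$ and no agent's permutation can exist. What your approach buys is structural insight---it identifies the obstruction as a classical irregularities-of-distribution phenomenon, explains why your square family admits balanced solutions for small $n$ (your $n \le 5$ observation is right; e.g., $\pi = (2,4,1,3)$ works for $n=4$), and shows the impossibility is asymptotically unavoidable. What it costs is self-containedness and size: you lean on a deep external theorem (or on a dyadic counting argument you only sketch), and with the known constants in Schmidt's bound the resulting counterexample has an astronomical number of agents, versus the paper's $12$. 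Two small wording cautions: the theorem itself is an existential impossibility, so it does not ``fail for small $n$''---only your particular family fails to witness it there---and you should state explicitly that only one agent's permutation needs to violate the discrepancy bound to obtain the contradiction (which is how your argument indeed concludes).
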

\begin{proof}
    Consider the following instance in which twenty-four goods arrive over four identical days: $M_t=\set{g_{1,t},g_{2,t},g_{3,t},g_{4,t},g_{5,t},g_{6,t}}$ for all $t \in [4]$. Twelve agents have identical valuations given by $v(g_{1,t})=6$, $v(g_{2,t})=5$, $v(g_{3,t})=4$, $v(g_{4,t})=3$, $v(g_{5,t})=2$, $v(g_{6,t})=1$ (since we seek SD-EF1, it only matters that the agents strictly prefer $g_{1,t} \succ g_{2,t} \succ g_{3,t} \succ g_{4,t} \succ g_{5,t} \succ g_{6,t}$ for all $t \in [4]$, and are indifferent between any two goods $g_{l,t},g_{l,t'}$ for all $l \in [6], t,t' \in [4]$. 

    Consider how the requirement of SD-EF1 up to each day restricts how each good can be allocated:
    \begin{itemize}
        \item The bundle $\bar{M}_2$ will contain $12$ goods. By the necessity condition of \Cref{prop:sd-ef1}, we know for $A_{\bar{M}_2}$ to satisfy SD-EF1, it must be the case that each agent receives exactly $1$ good from $T(\bar{M}_2,12) = \bar{M}_2$.
        \item For $A_{\bar{M}_3}$ to be SD-EF1, it must be the case that each agent receives exactly $1$ of the goods in $T(\bar{M}_3,12)$. This is because $T(\bar{M}_3,12)$ will contain goods $g_{1,t},g_{2,t},g_{3,t},g_{4,t}$ for all $t \in [3]$. The next good in the agents' ordering will be a copy of $g_{5,t}$ for some $t$, so all the goods in $T(\bar{M}_3,12)$ are strictly preferred over the remaining goods in $\bar{M}_3 \setminus T(\bar{M}_3,12)$.
        \item For $A_{\bar{M}_4}$ to be SD-EF1, it must be the case that each agent gets exactly $1$ of the goods in $T(\bar{M}_4,12)$. This is because $T(\bar{M}_4,12)$ will contain goods $g_{1,t},g_{2,t},g_{3,t}$ for all $t \in [4]$. Similar to the previous case, the next good in the agents' ordering will be a copy of $g_{4,t}$ for some $t$, so it must be the case that all the goods in $T(\bar{M}_4,12)$ are strictly preferred over the remaining goods in $\bar{M}_4 \setminus T(\bar{M}_4,12)$. Since there are $24$ total goods in $\bar{M}_4$, we can also say that each agent must receive exactly $2$ goods from the set $T(\bar{M}_4,24) = \bar{M}_4$ . Since no agent can have more than $1$ good from $T(\bar{M}_4,12)$, the only way to satisfy this condition is to give each agent exactly $1$ good from $T(\bar{M}_4,24) \setminus T(\bar{M}_4,12)$.
    \end{itemize}

    \usetikzlibrary{arrows.meta} 
    \usetikzlibrary{matrix,shapes,arrows,fit}
    \begin{figure}[htb]
        \centering
        \begin{tikzpicture}
          \matrix (M) [%
            matrix of nodes, column sep=3pt, row sep=3pt
          ]
          {%
            { }& $t=1$ & $t=2$ & $t=3$ & $t=4$\\[0.2cm]
             $g_1$ & $1$ & $7$ & $/$ & $/$ \\
             $g_2$ & $2$ & $8$ & $/$ & $/$ \\
             $g_3$ & $3$ & $9$ & $/$ & $/$  \\
             $g_4$ & $4$ & $10$ & $/$ & $/$ \\
             $g_5$ & $5$ & $11$ & $/$ & $/$ \\
             $g_6$ & $6$ & $12$ & $/$ & $/$ \\
           };
           \node[draw=blue,rounded corners = 1ex,fit=(M-2-3)(M-7-2),inner sep = 4pt] {};
           \node[draw=green!50!black,rounded corners = 1ex,fit=(M-5-2)(M-2-4),inner sep = 2pt] {};
           \node[draw=red,rounded corners = 1ex,fit=(M-4-2)(M-2-5),inner sep = 0pt] {};
           \node[draw=red,rounded corners = 1ex,fit=(M-7-2)(M-5-5),inner sep = 0pt] {};
        \end{tikzpicture}
        \caption{Restrictions placed on the allocation by SD-EF1 up to any day}
        \label{fig:12-6-6}
    \end{figure}

    \Cref{fig:12-6-6} makes these restrictions clear. The restrictions over $\bar{M}_2$ are shown in Blue, $\bar{M}_3$ in Green, and $\bar{M}_4$ in Red. Each box represents a group of $12$ goods that all must go different agents, and the entries in the first two table columns shows a hypothetical allocation of $\bar{M}_2$. Notice that the good $g_{4,3}$ is contained in a restriction from $\bar{M}_3$ and $\bar{M}_4$. Due to its $\bar{M}_3$ restriction, it cannot be given to an agent that has been assigned a good from $T(\bar{M}_2,8)$. Due to its $\bar{M}_4$ restriction, it cannot be given to an agent that been assigned a good from $T(\bar{M}_2,12)\setminus T(\bar{M}_2,6)$. However, these two groups together make up the entire set of goods $\bar{M}_2$, which contains $12$ goods, each assigned to one of the $12$ agents. So there is no agent we can assign $g_{4,3}$ to that will lead to the satisfaction of SD-EF1 for $A_{\bar{M}_2}$, $A_{\bar{M}_3}$ and $A_{\bar{M}_4}$.
\end{proof}


\section{Identical Days}\label{sec:samedays}

In \Cref{sec:general}, we showed that in the general model, we can achieve SD-EF1 per day and PROP1 overall. In this section, we will show if we assume the additional restriction that the sets of goods which arrive on each day are identical, then a slightly stronger guarantee can be achieved overall.

\begin{theorem}\label{thm:iddays-SDPROP1-overall}
    For any temporal fair division instance with identical days, it is possible to find an allocation that is SD-EF1 per day and SD-PROP1 overall in polynomial time.
\end{theorem}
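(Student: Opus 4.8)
The plan is to reuse the three-stage structure of \Cref{alg:one}---per-day identical-ordering transformation, a cardinality-respecting subroutine, then a per-day picking sequence---but to swap out the middle subroutine so as to upgrade the overall guarantee from PROP1 to SD-PROP1. The leverage provided by the identical-days restriction is that, \emph{after} the per-day transformation, the auxiliary instance has identical orderings. Writing $u_i(1) \ge \dots \ge u_i(m)$ for agent $i$'s sorted day-values (the same on every day by the identical-days assumption, with $m=|M_t|$), the transformed good $g'_{t,j}$ is worth exactly $u_i(j)$ to agent $i$ regardless of $t$. Hence every agent ranks slot-$j$ goods above slot-$(j{+}1)$ goods, and fixing a global tie-break that respects this slot order makes the function $T(\cdot,r)$ common to all agents on $M' = \cup_t M'_t$ and on each $M'_t$, even though the cardinal values still differ across agents.

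I would then run \Cref{thm:idpref-sdef1per-sdef1over} on $M'$ (its proof needs only this common $T$) to obtain an allocation $A'$ that is SD-EF1 per day and SD-EF1 overall on $M'$; by the necessity part of \Cref{prop:sd-ef1} this yields $\card{A'_i \cap T(M',r)} \ge \floor{r/n}$ for all $i$ and $r$, together with the analogous per-day bound. I form the final allocation $A$ on $M$ exactly as in \Cref{alg:one}, via the per-day serial-dictatorship picking sequence induced by $A'$. The per-day guarantee is then immediate: the argument of \Cref{lem:gen-alg1-sdef1} shows $A$ is SD-EF1 per day verbatim, since it used only that each agent receives $\floor{r/n}$ or $\ceil{r/n}$ of the top $r$ transformed goods on each day.

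For SD-PROP1 overall I would first record the observation underlying \Cref{lem:alg1-prop1}: when agent $i$ picks the good corresponding to slot $j$ on day $t$, at most $j-1$ goods of $M_t$ are gone, so the picked good is worth at least $u_i(j) = v'_i(g'_{t,j})$; summing over days, the value multiset of $A_i$ stochastically dominates that of $A'_i$. Because days are identical, $M$ and $M'$ present agent $i$ with the \emph{same} value multiset---each $u_i(\ell)$ with multiplicity $k$---so the top-set $H_i(M,g')$ changes only at full-slot boundaries, where it is tie-free and $\card{H_i(M,g')}$ is a multiple of $k$. For such a good $g'$ with $\card{H_i(M,g')} = ks$, domination together with the $\floor{\cdot}$ bound gives $\card{A_i \cap H_i(M,g')} \ge \floor{ks/n}$ (matching $g'$ to its slot-counterpart in $M'$).

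The main obstacle is the last step, closing the gap of at most one between $\floor{ks/n}$ and $\ceil{ks/n}$ with a single good, which is delicate precisely because PROP-type conditions are not monotone under adding value (the very issue flagged after \Cref{lem:alg1-prop1}). I would resolve it by taking $g^\star$ to be agent $i$'s most preferred good in $M \setminus A_i$. Every good $i$ strictly prefers to $g^\star$ then lies in $A_i$, so for any $g'$ with $g' \succ_i g^\star$ we get $H_i(M,g') \subseteq A_i$ and the bound $\card{A_i \cap H_i(M,g')} = ks \ge \ceil{ks/n}$ holds trivially; for any $g'$ with $g^\star \succeq_i g'$ we have $g^\star \in H_i(M,g')$, so adding $g^\star$ lifts the count from $\floor{ks/n}$ to $\floor{ks/n}+1 \ge \ceil{ks/n}$. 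Thus $A_i \cup \set{g^\star}$ meets every full-slot threshold, and since identical days force $\card{H_i(M,g')}$ to jump only at multiples of $k$, checking these thresholds suffices to certify SD-PROP1 overall. This establishes the theorem through the modified version of \Cref{alg:one}, with all steps running in polynomial time.
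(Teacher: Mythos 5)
Your proposal is correct and follows essentially the same approach as the paper: the identical-ordering transformation yielding a common ordering over all of $M'$ (the paper's \Cref{lem:iddays-reduction-sameorder}), an invocation of \Cref{thm:idpref-sdef1per-sdef1over} on the transformed instance, the per-day picking sequence of \Cref{alg:one}, and the per-day argument of \Cref{lem:gen-alg1-sdef1}. The only deviation is in certifying SD-PROP1 overall: where the paper's \Cref{lem:iddays-alg-gets-sdprop1} transfers the SD-PROP1 certificate of $A'$ through the extended bijection of \Cref{lem:prop1-helper} (including the added good on both sides), you transfer only the floor bounds $\card{A_i \cap H_i(M,g')} \ge \floor{\card{H_i(M,g')}/n}$ through the plain pick-to-slot bijection and then close the floor-to-ceiling gap directly in $M$ via the case analysis on the most preferred unallocated good $g^\star$ --- a valid, essentially equivalent argument (it mirrors the paper's proof that these round-robin-share bounds imply SD-PROP1, in \Cref{prop:sdef1impliessdprop1}).
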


To achieve this guarantee, we use an algorithm that is almost identical to \Cref{alg:one}, the algorithm which was used to achieve SD-EF1 per day and PROP1 overall in the general case, but with one major change.

\Cref{alg:one} first finds an EF1 overall allocation in a reduced version of the problem where agents have identical orderings over all the goods on any fixed day, and uses that to construct an allocation in the original instance that is PROP1 overall. The key insight that can be leveraged to get stronger guarantees is that when we have identical days, we know that the output of the ``Identical Ordering Transformation'' from \Cref{alg:one} will result in an instance where all agents have identical orderings over the entire set of goods $M$, not just over the goods from each individual day. This stronger guarantee from the Identical Ordering Transformation allows us to use \Cref{thm:idpref-sdef1per-sdef1over} to find an SD-EF1 per day and SD-EF1 overall allocation over the reduced instance (rather than the algorithm of \citep{biswas2018fair} which only guarantees EF1 overall). We can then use this allocation as the basis for the picking order that \Cref{alg:one} uses to construct the final allocation over the original instance. In \Cref{lem:alg1-prop1}, we showed that if \Cref{alg:one} produced an allocation over the Identical Ordering Transformation instance that was EF1, then running the picking order procedure on that instance would generate a final allocation that is PROP1. We can also show that if the allocation over the reduced instance is SD-EF1, then the allocation produced by the picking order procedure will be SD-PROP1. Below, we provide the proofs of these claims.

We do not include a formal statement of this algorithm, as it is nearly identical to \Cref{alg:one}. One can imagine \Cref{alg:one} with line 11, where the ``BarmanBiswasCC'' subroutine is used, replaced with the subroutine introduced in \Cref{thm:idpref-sdef1per-sdef1over} running on the Identical Ordering Transformation instance $(N,M',v')$.

\begin{lemma}\label{lem:iddays-reduction-sameorder}
    For any instance with identical days, the identical ordering transformation described in \Cref{alg:one} will produce an instance where all agents have identical orderings over the full set of goods.
\end{lemma}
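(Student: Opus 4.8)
The plan is to reduce the statement to a single fact about order statistics: in the transformed instance, each agent's value for the good $g'_{t,j}$ depends only on its within-day rank $j$, not on the day $t$. Once this is established, the common ordering $g'_{\cdot,1} \succeq g'_{\cdot,2} \succeq \cdots$ that the transformation builds into each individual day automatically extends to a single ordering over all of $M' = \cup_{t} M'_t$, which is exactly what the lemma asserts.

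First I would recall precisely what the transformation in \Cref{alg:one} produces. By construction $v'_i(g'_{t,j})$ equals the $j$-th largest value that agent $i$ assigns to a good of $M_t$ (the $j$-th order statistic of the multiset $\set{v_i(g) : g \in M_t}$), and these values are non-increasing in $j$ for every fixed agent $i$ and day $t$. The key step then invokes the identical-days assumption: for any two days $t,t'$ the bijection $f_{t,t'} : M_t \to M_{t'}$ satisfies $v_i(g) = v_i(f_{t,t'}(g))$ for every agent $i$ and good $g \in M_t$, so it is a value-preserving bijection. Consequently the multisets $\set{v_i(g) : g \in M_t}$ and $\set{v_i(g) : g \in M_{t'}}$ coincide for every agent, hence their $j$-th order statistics agree. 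This gives $v'_i(g'_{t,j}) = v'_i(g'_{t',j})$ for all agents $i$, days $t,t'$, and indices $j$; I would write this common value as $w_i(j)$, which is non-increasing in $j$.

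Finally I would assemble the common ordering. Since each $w_i$ is non-increasing, for any ranks $j < j'$ every agent weakly prefers a rank-$j$ good to a rank-$j'$ good, while any two goods sharing a rank index have identical value $w_i(j)$ to every agent $i$. Thus the ordering of $M'$ by rank index (with equal ranks tied) is a single weak ordering respected by all agents, establishing identical orderings as claimed; this is the hypothesis that later lets the argument apply \Cref{thm:idpref-sdef1per-sdef1over} to $(N,M',v')$.

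I expect the main obstacle to be the careful treatment of ties rather than the order-statistic computation itself. An agent may be indifferent between consecutive ranks (i.e.\ $w_i(j) = w_i(j+1)$) where another agent is not, so the agents' induced weak orderings need not be literally identical as set relations. The fix is to observe that the rank-index ordering refines every agent's weak ordering and is consistent with all valuations, so fixing a global tie-break that respects the rank index yields one common strict ordering, and hence a single function $T(M',r)$ shared by all agents. This is the precise form in which ``identical orderings'' is actually used when invoking the sufficiency condition of \Cref{prop:sd-ef1} inside \Cref{thm:idpref-sdef1per-sdef1over}.
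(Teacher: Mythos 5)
Your proof is correct and follows essentially the same route as the paper's: the paper composes the identical-days bijections $f_{t,t'}$ with the per-day sorting bijections $o_{i,t}$ to conclude that the rank-$j$ good on every day has the same value to every agent, which is exactly your order-statistics identity $v'_i(g'_{t,j}) = v'_i(g'_{t',j}) = w_i(j)$, and then lifts the common per-day ordering to all of $M'$. Your explicit treatment of ties --- choosing the global tie-break to respect the rank index so that all agents share a single $T$ function --- is a welcome extra degree of care: the paper's proof simply asserts that the lifted ordering ``holds for each agent,'' glossing over the fact you correctly identify, namely that the agents' weak orderings need not literally coincide as relations when one agent has ties another lacks.
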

\begin{proof}
    Given an identical days instance over a set of goods $M$, let $M'$ represent the full set of goods in the identical ordering transformation instance, and for each agent $i$, let $v'_i$ be $i$'s valuations over $M'$. For all $t \in [k]$, let $M'_t = \set{g'_{t,1},\dots,g'_{t,|M'_t|}}$ such that each agent has the identical weak ordering $g'_{t,1} \succeq \dots \succeq g'_{t,|M'_t|}$.

    Formally, we say that since the original instance has identical days, then for every $t,t' \in [k]$, there will be a bijection $f_{t,t'}: M_t \rightarrow M_{t'}$ such that for each $g \in M_t$ and each $i \in N$, $v_i(g) = v_i(f_{t,t'}(g))$. From the identical orderings transformation, we know that for each $t \in [k]$ and $i \in N$, there is a bijection $o_{i,t}: M_t \rightarrow M'_t$ such that $v_i(g) = v'_i(o_{i,t}(g))$ for each $g \in M_t$. Using these bijections, it must be the case that $v_i(g) = v_i(f_{t,t'}(g)) = v'_i(o_{i,t}(g)) = v'_i(o_{i,t}(f_{t,t'}(g)))$. Therefore, for each agent $i$, and for any days $t,t'$, there must exist a bijection $f'_{i,t,t'}: M'_t \rightarrow M'_{t'}$ such that $v'_i(g) = v'_i(f_{i,t,t'}(g))$ for all $g \in M'_t$. Since all agents have identical orderings over each day, then without loss of generality, we can assume that this bijection maps $g_{t,r}$ to $g_{t',r}$ for each $r \in |[M'_t]|$. This means that $f'_{i,t,t'}$ is identical for all $i$, and therefore the identical orderings transformation instance must also have identical days. We will call the bijection defining this property $f'_{t,t'}$.

    Using the above reasoning, we can say that for every $g,g' \in M'_1$, if $g \succeq g'$, then $f'_{1,t}(g) \succeq f'_{1,t}(g')$ for each $t \in [k]$. Since $\set{f'_{1,t}(g'_{1,r}) : t \in [k], r \in [|M'_1|]} = \cup_{t \in [k]}{M'_t} = M'$, this, along with the fact that $g \succeq f'_{1,t}(g)$ and $f'_{1,t}(g) \succeq g$ for each day $t$, induces a total weak preference ordering over $M'$. This ordering holds for each agent since all agents have the same preference ordering over $M'_1$.
\end{proof}

\begin{lemma}\label{lem:iddays-alg-gets-sdprop1}
    When provided with an allocation over an Identical Ordering Transformation instance that is SD-EF1 overall, the picking order procedure from \Cref{alg:one} will produce an allocation that is SD-PROP1 overall.
\end{lemma}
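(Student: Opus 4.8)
The plan is to transfer the ordinal guarantees of $A'$ on the transformed instance $(N,M',v')$ to the final allocation $A$ on the original instance, exploiting the structure forced by identical days. Fix an agent $i$. The first thing I would record is that, because the days are identical, for every value $\theta$ the weak top-set $H_i(M,g')=\set{g \in M : v_i(g) \ge \theta}$ (with $\theta = v_i(g')$) contains exactly $c_\theta$ goods from each day, where $c_\theta = \card{\set{g \in M_1 : v_i(g) \ge \theta}}$; hence $\card{H_i(M,g')} = c_\theta k$ is always a multiple of $k$. Consequently the SD-PROP1 conditions for agent $i$ need only be checked at the threshold sets $T^\theta_i \triangleq \set{g \in M : v_i(g) \ge \theta}$, each of size $c_\theta k$, and by \Cref{lem:iddays-reduction-sameorder} the transformed instance inherits the identical-days structure, so the analogous sets $T(M',c_\theta k)$ are clean boundaries of size $c_\theta k$ there as well.

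The heart of the argument is a per-day domination claim: for every day $t$ and threshold $\theta$,
\[
\card{A_i \cap W^i_t(\theta)} \ge \card{A'_i \cap \set{g'_{t,\ell} : \ell \le c_\theta}},
\]
where $W^i_t(\theta)=\set{g \in M_t : v_i(g) \ge \theta}$ is the top $c_\theta$ goods of day $t$ for $i$. To prove it I would follow the picking procedure of \Cref{alg:one} restricted to day $t$: picks occur in the order $j=1,\dots,\card{M_t}$, and the owner of $g'_{t,j}$ in $A'$ takes their favourite remaining good of $M_t$. When it is agent $i$'s turn at step $j$, only $j-1$ goods are gone, so $i$'s pick lies among their top $j$ goods of day $t$; in particular every slot $g'_{t,j}$ with $j \le c_\theta$ that $i$ owns yields a \emph{distinct} picked good lying in $W^i_t(\theta)$. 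Summing the displayed inequality over all days, and using that the top $c_\theta k$ goods of $M$ (resp.\ $M'$) are exactly the top $c_\theta$ per day, gives $\card{A_i \cap T^\theta_i} \ge \card{A'_i \cap T(M',c_\theta k)}$.

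With this in hand I would invoke the necessity direction of \Cref{prop:sd-ef1} on the identical-ordering instance: since $A'$ is SD-EF1 overall, $\card{A'_i \cap T(M',r)} \ge \floor{r/n}$ at every clean boundary $r$, in particular at $r=c_\theta k$. Combining yields $\card{A_i \cap T^\theta_i} \ge \floor{c_\theta k / n}$ for every $\theta$. Finally I would upgrade the floor to a ceiling using a single good: let $g$ be agent $i$'s favourite good in $M \setminus A_i$. If $v_i(g) < \theta$, then every good of $T^\theta_i$ is strictly preferred to $g$ and hence already in $A_i$, so $\card{A_i \cap T^\theta_i} = c_\theta k \ge \ceil{c_\theta k/n}$; otherwise $g \in T^\theta_i$ and $\card{(A_i \cup \set{g}) \cap T^\theta_i} \ge \floor{c_\theta k/n}+1 \ge \ceil{c_\theta k/n}$. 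As these threshold sets are precisely the sets $H_i(M,g')$ in the definition of SD-PROP1, this single $g$ witnesses SD-PROP1 overall for agent $i$, and since $i$ was arbitrary, for $A$.

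The step I expect to be most delicate is the per-day domination claim together with its aggregation: one must argue carefully that the picking order never lets agent $i$'s pick fall below the within-day rank of the slot that triggered it, and that the weak top-sets (with possible within-day ties) line up correctly across the transformation so that summing per-day counts genuinely reconstructs $T^\theta_i$ and $T(M',c_\theta k)$. Ensuring that every relevant SD-PROP1 boundary sits at a multiple of $k$ and coincides with a clean boundary on both $M$ and $M'$ is exactly where the identical-days hypothesis does the real work, and is the part I would write out most carefully.
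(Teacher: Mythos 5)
Your proof is correct, but it takes a genuinely different route from the paper's. The paper's proof is bijection-based: it reuses the machinery of \Cref{lem:prop1-helper} and \Cref{lem:alg1-prop1} to get a map $z_i : A_i \cup \set{g_i} \to A'_i \cup \set{g'_i}$ with pointwise value domination $v_i(g) \ge v'_i(z_i(g))$, observes that the transformation bijection $o_i$ maps each top-set $H_i(M,g)$ onto $H_i(M',o_i(g))$, and then transfers the \emph{ceiling} bound directly from $A'$, using that SD-EF1 implies SD-PROP1 (\Cref{prop:sdef1impliessdprop1}). You instead run a direct counting argument: per-day domination of picks over threshold sets, aggregation across days, the \emph{necessity} direction of \Cref{prop:sd-ef1} at clean boundaries on $M'$ (yielding only the floor bound $\floor{\cdot/n}$), and then a floor-to-ceiling upgrade by case analysis on the single favourite unallocated good. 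Both arguments hinge on the same core fact about the picking procedure (the $j$-th pick of a day is at least as valuable as the $j$-th ranked good of that day), but the assemblies are different, and your ceiling-upgrade step is a nice elementary substitute for invoking SD-PROP1 of $A'$. One scope caveat: you frame the identical-days hypothesis as essential (``every boundary sits at a multiple of $k$''), but the lemma as stated — and the paper's proof — does not assume identical days; identical days enters the paper only via \Cref{lem:iddays-reduction-sameorder}, to guarantee that an SD-EF1 overall allocation $A'$ exists at all. In fact your own argument does not need identical days either: the transformation preserves each agent's value multiset day by day, so for \emph{any} instance every threshold set $\set{g \in M : v_i(g) \ge \theta}$ has the same cardinality as $\set{g' \in M' : v'_i(g') \ge \theta}$, the latter is automatically a clean boundary of $M'$ (strict preference across it), and your per-day domination and summation go through verbatim with day-dependent counts $c_{\theta,t}$ in place of a common $c_\theta$. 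The ``multiple of $k$'' observation is a red herring; dropping it makes your proof cover the lemma in the same generality as the paper's.
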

\begin{proof}
    Let $A'$ be the SD-EF1 allocation over the identical ordering transformation instance, defined by the set of goods $M'$ and the valuation functions $v'_i$ for all $i \in N$, and let $A$ be the allocation produced by $A'$ and the picking order sequence over the original instance, defined by the set of goods $M$ and the valuation functions $v_i$ for each $i \in N$. 

    This proof involves reasoning with two separate classes of bijections. The first is $o_i: M \rightarrow M'$, which is the bijection formed for each $i \in N$ by creating the identical orderings transformation in \Cref{alg:one}. In these bijections, we have that $v_i(g) = v'_i(o_i(g))$ for each $i \in N$ and $g \in M$, which implies that $v_i(M) = v'_i(M')$. The second class of bijections are $z_i: A_i \cup \set{g_i} \rightarrow A'_i \cup \set{g'_i}$, where $g_i, g'_i$ are agent $i$'s favourite good that they are not allocated in the original instance and the identical ordering transformation instance respectively. In these bijections, we will have that $v_i(g) \ge v'_i(z_i(g))$ for all $i \in N$ and $g \in A_i \cup \set{g_i}$, implying that $v_i(A_i \cup \set{g_i}) \ge v'_i(A'_i \cup \set{g'_i})$. The existence of this class of bijections are guaranteed by \Cref{lem:prop1-helper} and \Cref{lem:alg1-prop1}.

    Since SD-EF1 implies EF1, and $A'$ will be SD-EF1 overall, it is easy to see that we can invoke the same logic from \Cref{lem:prop1-helper} and \Cref{lem:alg1-prop1} to show that $A$ is PROP1 overall. To show that the allocation is in fact SD-PROP1 overall, we can do a more detailed analysis on the proof technique from \Cref{lem:alg1-prop1}. Consider the set of goods $M^* = M \cup M'$ and each agent $i$'s ordering over $M^*$ that is induced by their valuation functions $v_i$ and $v'_i$. We note that when considering the set $M^*$, the bijection $z_i$ not only shows that $v_i(A_i \cup \set{g_i}) \geq v'_i(A_i' \cup \set{g'_i})$, but also allows us to conclude the stronger claim that $A_i \cup \set{g_i} \succeq_i^\sd A_i' \cup \set{g'_i}$. This is because for each good $g \in A_i \cup \set{g_i}$, we have that $v_i(g) \geq v'_i(z_i(g))$, and therefore also have $g \succeq_i z_i(g)$.

    Next, note that for each agent $i \in N$, and each $g \in M$, we must have that $H_i(M',o_i(g)) = \set{o_i(g') : g' \in H_i(M,g)}$, since clearly, $H_i(M',o_i(g))$ is the set of all goods $g' \in M'$ such that $g' \succeq_i o_i(g)$, and similarly, $H_i(M,g)$ is the set of goods $g' \in M$ such that $g' \succeq_i g$. From our definition of $o_i$, we have that for any $g' \in M$, $v_i(g') \geq v_i(g)$ if and only if $v'_i(o_i(g')) \geq v'_i(o_i(g))$.

    With this, we can conclude that $\card{H_i(M,g)} = \card{H_i(M',o_i(g))}$, and also that $\card{(A_i \cup \set{g_i}) \cap H_i(M,g)} \geq \card{(A'_i \cup \set{g'_i}) \cap H_i(M',o_i(g))}$ for all goods $g \in M$. The second statement follows from the fact that for any $g^* \in A_i \cup \set{g_i}$, if $z_i(g^*) \in H_i(M',o_i(g))$, then $g^* \in H_i(M,g)$ must be true as well, since we have that $v_i(g^*) \ge v'_i(z_i(g^*)) \ge v'_i(o_i(g)) = v_i(g)$.

    We know that $A'$ is SD-EF1, implying it is SD-PROP1.  Therefore, for all $g \in M$, we have that $\card{(A_i \cup \set{g_i}) \cap H_i(M,g)} \ge \card{(A'_i \cup \set{g'_i} \cap H_i(M',o_i(g))} \ge \ceil{\nicefrac{\card{H_i(M',o_i(g))}}{n}} = \ceil{\nicefrac{\card{H_i(M,g)}}{n}}$, allowing us to conclude that $A$ is SD-PROP1 overall.
\end{proof}

With these lemmas in mind, the proof of \Cref{thm:iddays-SDPROP1-overall} follows easily.

\begin{proof}[Proof of \Cref{thm:iddays-SDPROP1-overall}]
    From \Cref{lem:iddays-reduction-sameorder}, it is known that the identical days reduction will construct an instance where each agent has the same overall ordering for the full set of goods $M'$. This means that by the results of \Cref{thm:idpref-sdef1per-sdef1over}, an allocation $A'$ over $M'$ can be found that is SD-EF1 per day and SD-EF1 overall in polynomial time.

    Using identical logic from \Cref{lem:gen-alg1-sdef1}, constructing the final allocation $A$ by running the picking order construction procedure from \Cref{alg:one} on $A'$ will result in an allocation that is SD-EF1 per day, since $A'$ is SD-EF1 per day.

    Finally, from \Cref{lem:iddays-alg-gets-sdprop1}, we know that $A$ will also be SD-PROP1 overall, completing the proof.
\end{proof}


\section{Beyond Temporal Fair Division}\label{sec:laminar}

Consider the following problem. A school district has received a new shipment of supplies and must distribute them between the schools under their jurisdiction. Each item is used for a different subject (there are microscopes for the biology lab, instruments for music class, easels and paint for art class, etc.), and each school has different preferences over the items. The district wants to find an allocation that is fair among the schools, but also wants to make sure that the allocations are fair with respect to each individual subject. For example, the district may be able to construct an EF1 allocation by giving all the biology supplies to one school, and all the art supplies to another school, but that would be extremely unfair to the individual departments within those schools. Interestingly, this problem is identical to finding an allocation over a temporal instance that is fair per day and fair overall, since both are in essence looking at a pairwise-disjoint partition of some set of items $M$, and finding an allocation that is fair with respect to $M$, and remains fair when only looking at any of the sets in the partition.

Up until this point, the best way to solve such a problem would be existing algorithms for constrained fair division, such as the cardinality constraint algorithm of \citep{biswas2018fair}. This algorithm would guarantee an EF1 overall allocation, while also guaranteeing that each school got a balanced number of items from each subject, but it gives no guarantee that the goods from each subject will be allocated fairly according to the valuation functions of each of the schools. While our solution, \Cref{alg:one}, only achieves PROP1 overall, it achieves the very strong SD-EF1 fairness guarantee for each subject, making it arguably a more desirable algorithm for use cases such as this. When there are $2$ agents, or when all agents' orderings are identical, then we get strictly better guarantees, since we can still achieve SD-EF1 per day (which implies balancedness), while also achieving EF1 overall.

While the ``per day'' and ``overall'' desiderata translate very well into this broader interpretation of the temporal fairness model, our other definition, ``up to each day'' does not. This is because ``up to each day'' implicitly assumes that there is some ordering over the sets of goods in the partition. One intuitive way to extend this concept is with laminar set families. We give a detailed discussion on laminar set families in \Cref{app:sec-7}, but the basic definition is that for any set of goods $M$, a collection $\mathcal{L}$ of subsets of $M$ is a laminar set family over $M$ if for every pair of subsets $S,T \in \mathcal{L}$, we have that either $S \cap T = \emptyset$, $S \subset T$, or $T \subset S$. Intuitively, they can be thought of as collection of subsets over $M$ that form a ``hierarchy structure''. In the school district example, the sets of items where fairness is required could be something like $\{\text{Biology},\text{Chemistry},\text{Drawing},\text{Music},\text{Science},\text{Arts}\}$, where Biology and Chemistry are subsets of Science and Drawing and Music are subsets of Arts. 

With this in mind, we can introduce a new concept of fairness, which generalizes ``up to each day'' and is more compatible with this abstract view of temporal fair division.

\begin{definition}[Laminar Fairness]
    For desideratum $X$, allocation $A$ is Laminar $X$ with respect to some laminar set family $\mathcal{L}$ if $A_{S}$ satisfies $X$ for all $S \in \mathcal{L}$.
\end{definition}

It can be seen that achieving both ``up to each day'' and ``per day'' fairness is a special case of Laminar Fairness. \Cref{fig:uptoeachdaylaminar} shows the laminar family induced by these constraints.

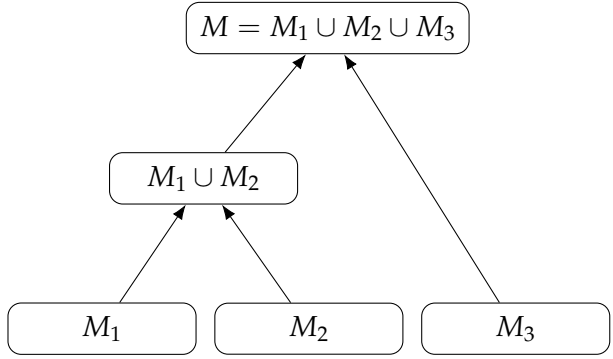
\begin{figure}[htb]
    \centering
        \begin{tikzpicture}
        \tikzset{
            mynode/.style={
                draw,
                rectangle,
                rounded corners=5pt,
                inner sep=5pt,
                minimum width=2.5cm
            },
            myarrow/.style={
                ->,
                -{Latex[length=2mm]}
            }
        }
        \node[mynode] (Y1) at (0,0) {$M_1$};
        \node[mynode] (Y2) at (5.5,0) {$M_3$};
        \node[mynode] (Y3) at (2.75,0) {$M_2$};
        \node[mynode] (Y4) at (1.35,2) {$M_1 \cup M_2$};
        \node[mynode] (Y5) at (3,4) {$M = M_1 \cup M_2 \cup M_3$};
        \draw[myarrow] (Y1) -- (Y4);
        \draw[myarrow] (Y3) -- (Y4);
        \draw[myarrow] (Y4) -- (Y5);
        \draw[myarrow] (Y2) -- (Y5);
        
        \end{tikzpicture}
    \caption{Representation of the ``up to each day'' and ``per day'' constraints for three days as a laminar set family. An allocation that is EF1 up to each day and EF1 per day would be an allocation that is EF1 with respect to every set in this family. To just represent the ``up to each day'' constraints, only the sets $M_1$, $M_1 \cup M_2$ and $M_1 \cup M_2 \cup M_3$ would be required.}
    \label{fig:uptoeachdaylaminar}
\end{figure}

We present an upgrade to our two agent Envy-Balancing algorithm from \Cref{lem:envy-balancing} that allows us to find EF1 allocations with respect to any laminar set family, a strictly stronger guarantee than simply finding allocations that are EF1 up to each day.

\begin{restatable}{theorem}{laminarenvybalancing}\label{lem:laminar-envy-balancing}
    Given $2$ agents, a set of goods $M$, and a laminar set family $\mathcal{L}$ over $M$, it is possible to find an allocation to those agents that is Laminar EF1 with respect to $\mathcal{L}$.
\end{restatable}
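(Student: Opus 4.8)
The plan is to induct on the tree structure of the laminar family, processing it bottom-up, and to \emph{strengthen} the inductive hypothesis so that at every node we produce not a single allocation but a \emph{canceling pair} of Laminar-EF1 allocations --- exactly the kind of input that the Envy-Balancing Lemma (\Cref{lem:envy-balancing}) consumes. Concretely, I would first add $M$ itself to $\mathcal{L}$ (this only strengthens the requirement and keeps it achievable), so that $\mathcal{L}$ becomes a rooted tree in which the children of a set $S$ are its maximal proper subsets in $\mathcal{L}$. For each $S \in \mathcal{L}$ let $\mathcal{L}_S = \{T \in \mathcal{L} : T \subseteq S\}$. I would then prove, by strong induction from the leaves up, the claim: for every $S \in \mathcal{L}$ there exist two allocations $B_S, B_S'$ of $S$ that \emph{cancel out} and are each Laminar EF1 with respect to $\mathcal{L}_S$. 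Taking $S = M$ and returning either allocation proves the theorem.

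For the base case (a leaf $S$), I would apply \Cref{lem:2-opposite-sdef1} to $S$ to obtain a partition whose two opposite assignments $B_S = (B_{S,1},B_{S,2})$ and $B_S' = (B_{S,2},B_{S,1})$ are both SD-EF1 (hence EF1) on $S$; they trivially cancel out, and since $S$ is minimal in $\mathcal{L}$, EF1 on $S$ is all that $\mathcal{L}_S$ demands. For the inductive step, let $S$ have children $S_1,\ldots,S_m$ and let $S_0 = S \setminus (S_1 \cup \cdots \cup S_m)$ be its ``loose'' goods. By laminarity, every $T \in \mathcal{L}$ with $T \subsetneq S$ lies entirely inside some child $S_j$. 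By the inductive hypothesis each child supplies a canceling pair $(B_{S_j},B_{S_j}')$ that is Laminar EF1 on $\mathcal{L}_{S_j}$, while \Cref{lem:2-opposite-sdef1} supplies a canceling EF1 pair for $S_0$. I would run \Cref{lem:envy-balancing}, treating each of these $m+1$ blocks as a ``day'' (relabeling each pair to meet the algorithm's labeling convention). Because the algorithm uses exactly one allocation from each day's pair, the resulting allocation of $S$ agrees with either $B_{S_j}$ or $B_{S_j}'$ on each child $S_j$; as both are Laminar EF1 on $\mathcal{L}_{S_j}$, the descendant constraints hold automatically, irrespective of the algorithm's choices, and the ``up to each day'' guarantee applied to the final block gives EF1 on the full union $S$ itself.

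The main obstacle --- and the reason I must reach into the \emph{proof} of \Cref{lem:envy-balancing} rather than only its statement --- is that the Envy-Balancing Lemma returns a single allocation, whereas the induction needs a canceling \emph{pair}. Here I would invoke the invariant established inside that proof: after the final iteration the algorithm holds a ``locked'' set $F$ with $A_F$ envy-free, together with a leftover swap set $S^*$ for which $A_{S^*}$ and $A_{\text{SWAP}(S^*)}$ are both EF1 and cancel out. I would then set $B_S = A_{F \cup S^*}$ and $B_S' = A_{F \cup \text{SWAP}(S^*)}$. Both are EF1 on $S$ by the invariant, and they cancel out because, for each agent $i$, the sum of the two allocations' utilities splits into twice the (envy-free) $F$-part plus the canceling $S^*$-part; adding $2v_i(A_{F,i}) \ge 2v_i(A_{F,3-i})$ to the cancelation inequality for the swap set yields $v_i((B_S)_i)+v_i((B_S')_i) \ge v_i((B_S)_{3-i})+v_i((B_S')_{3-i})$.

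This furnishes precisely the canceling Laminar-EF1 pair needed to continue the induction up the tree, and at the root $S = M$ either member of the pair is the desired allocation. Since each node of the tree is processed exactly once using the polynomial-time subroutines of \Cref{lem:2-opposite-sdef1} and \Cref{lem:envy-balancing}, the construction runs in polynomial time. The two technical points I expect to need care with are (i) verifying that the descendant laminar sets are each contained in a single child, so that the per-child choice made by the envy-balancing run can never violate a lower-level constraint, and (ii) checking the degenerate cases where $S_0 = \emptyset$ or where the final swap set $S^*$ is empty (in which case $B_S = B_S' = A_F$ is envy-free and trivially self-canceling).
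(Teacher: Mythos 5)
Your proposal is correct and follows essentially the same route as the paper: a bottom-up induction over the laminar tree with the strengthened hypothesis of producing a \emph{canceling pair} of Laminar-EF1 allocations at each node, obtained by reopening the proof of \Cref{lem:envy-balancing} to output both $A_{F \cup S^*}$ and $A_{F \cup \mathrm{SWAP}(S^*)}$ (the paper packages this as its ``EnvyBalancing++'' subroutine), with \Cref{lem:2-opposite-sdef1} at the leaves. The only cosmetic difference is that you handle the loose goods $S_0$ as an extra block in the envy-balancing run, whereas the paper preprocesses $\mathcal{L}$ into a ``complete'' laminar family by adding such remainder sets as nodes; these are functionally identical.
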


We give the proof of this lemma in \Cref{app:sec-7}.

Unfortunately, for the general case, laminar fairness will not always be possible. Since the ``up to each day'' constraint can be modeled as a laminar set family, all of the impossibility results for ``up to each day'' also hold for laminar fairness.


\section{Discussion}\label{sec:discussion}

In this work, we are able to find possibility and impossibility results that give a picture of what can be achieved in the temporal fair division model. This picture is quite clear when focusing on special cases such as two agents or identical orderings. However, we still leave many questions open for future work, especially when looking at the most general instances of the model. The most tantalizing question left open in the general model is the following. 

\begin{open}
    In temporal fair division, does an allocation that is SD-EF1 (or EF1) per-day and EF1 overall always exist?
\end{open}

For the special case of identical days, studied by \citet{igarashi2024repeated}, several stronger guarantees are also open:

\begin{open}
    In temporal fair division with identical days, is it possible to achieve SD-EF1 per day and SD-EF1 overall?
\end{open}

In addition, while we obtain many results for the ``overall'' and ``per day'' temporal fairness properties, the existence of the ``up to each day'' property in many settings is still a mystery.

\begin{open}
    In temporal fair division with identical days or identical orderings, is it possible to achieve EF1 up to each day?
\end{open}

It would also be an interesting further direction to take a more abstract view of the temporal fair division model. We briefly explored this in \Cref{sec:laminar}, introducing a generalized model of temporal fair division called laminar fairness, and showing that it is obtainable for two agents. However, this still leaves several open questions regarding when laminar fairness is guaranteed to exist.

\begin{open}
    Do Laminar EF1 allocations exist when all agent have identical preferences? Do allocations that satisfy some weakening of Laminar EF1 (such as Laminar PROP1) exist in more general settings?
\end{open}

Finally, it would be beneficial to study any connections and implications of our model to the broader literature of temporal fairness beyond fair division, such as the general temporal fairness model of \citet{AKCM24}.

\section*{Acknowledgements}
This research was partially supported by an NSERC Discovery grant.

\bibliography{abb, nisarg, ultimate}

\newpage
\appendix
\section*{\centering Appendix}


\section{Omitted from \Cref{sec:prelim}}\label{app:sec-2}

\subsection{Properties of SD-EF1 Allocations}
    We will begin by noting some useful properties of SD-EF1 allocations. Specifically, in \Cref{sec:prelim}, we establish the function $T_i(S,r)$, which returns agent $i$'s top $r$ ordered goods from a set $S$, breaking ties according to some arbitrary rule consistent across all agents. In contrast, the definitions for SD-EF1 and SD-PROP1 are based around sets of the form $H_i(S,g) = \set{g' \in S : g' \succeq_i g}$, which returns all goods from $S$ that are weakly preferred to $g$. As can be seen in many of this paper's proofs, it is often very useful to be able to look at a set of exactly size $r$ of some agent's top goods, which the sets $H_i(S,g)$ do not allow for. Below, we will show relations between the $T_i(S,r)$ function, and the SD-EF1 definition, that allows us to often use it without loss of generality when proving statements about SD-EF1 and SD-PROP1, vastly simplifying many of our proofs.

    \begin{observation}\label{obs:SDEF1}
        For any agent $i \in N$ and good $g \in S$, if $\card{H_i(S,g)} = r$, then $T_i(S,r) = H_i(S,g)$, regardless of the arbitrary tie-breaking order dictated by $T_i$.
    \end{observation}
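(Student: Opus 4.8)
The plan is to unpack the definition of $T_i(S,r)$ and exploit the strict value gap at the boundary of $H_i(S,g)$. Recall that $T_i(S,r)$ is computed by first refining the weak ordering $\succeq_i$ into a strict ordering using the fixed global tie-break, and then returning the top $r$ goods in that strict ordering. So the content of the observation is really that this tie-breaking is immaterial whenever we cut at a threshold of the form $\card{H_i(S,g)} = r$.

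First I would partition $S$ into $H_i(S,g)$ and its complement $S \setminus H_i(S,g)$, and record the defining value inequalities: every $g' \in H_i(S,g)$ satisfies $v_i(g') \ge v_i(g)$ by definition of the top-set, while every $g'' \in S \setminus H_i(S,g)$ satisfies $g'' \prec_i g$, i.e., $v_i(g'') < v_i(g)$ (strictly, since $g''$ is excluded precisely because it is not weakly preferred to $g$). Chaining these gives $v_i(g') \ge v_i(g) > v_i(g'')$ for every such pair, hence $v_i(g') > v_i(g'')$ strictly.

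The key step is then to observe that this strict value separation forces every element of $H_i(S,g)$ above every element of the complement in the refined strict order, \emph{regardless} of how ties are broken: the tie-break only ever reorders goods of equal value, and no good inside $H_i(S,g)$ shares a value with any good outside it. Consequently the top $\card{H_i(S,g)} = r$ positions of the strict order are occupied by exactly the goods of $H_i(S,g)$, so $T_i(S,r) = H_i(S,g)$.

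The only point requiring care — and the place where the hypothesis $\card{H_i(S,g)} = r$ is essential — is that $r$ must equal the cardinality of the whole top-set, not some smaller count; if one cut strictly inside a tie class the tie-break would genuinely matter. Here, because the cut falls exactly at the $H_i$ boundary where a strict value drop occurs, no tie-break decision can push a good across it, which is precisely why the conclusion holds for any consistent tie-breaking rule. This is a short argument, and I do not anticipate a substantive obstacle beyond stating the boundary-gap fact cleanly.
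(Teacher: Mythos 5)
Your proposal is correct and follows essentially the same route as the paper's proof: both establish that every good in $H_i(S,g)$ is strictly preferred to every good outside it (you chain the value inequalities $v_i(g') \ge v_i(g) > v_i(g'')$, while the paper phrases this as transitivity of $\succeq_i$, which is equivalent since $\succeq_i$ is induced by $v_i$), and then conclude that the cut at position $r$ falls on a strict preference gap, so the tie-break cannot move any good across it. No gap to report.
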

    \begin{proof}
        This follows from the fact that agent $i$'s ordering $\succeq_i$ over the goods in $S$ will be transitive. $\card{H_i(S,g)} = r$ means that there are \emph{exactly} $r$ goods that $i$ weakly prefers to $g$. It must be the case that for any goods $g^+ \in H_i(S,g)$, $g^- \in S \setminus H_i(S,g)$, we must have that $g^+ \succ_i g^-$. If there were some goods $g^+ \in H_i(S,g)$, $g^- \in S \setminus H_i(S,g)$ such that $g^- \succeq_i g^+$, then by the transitivity of $i$'s ordering, we know that $g^- \succeq_i g^+ \succeq_i g$ is true, contradicting that fact that $g^- \in  S \setminus H_i(S,g)$.
        
        Therefore, when $\card{H_i(S,g)} = r$, we know that there are $r$ goods in $S$ that agent $i$ strictly prefers to all other goods in $S$. It is clear that $T_i(S,r)$ will return exactly those goods, and will not need to use its tie-breaking order in this case.
    \end{proof}

    With this observation in mind, we can now list several necessary and sufficient conditions for an allocation to be SD-EF1, which relate it directly to the $T_i$ function.

    \sdefprop*
    \begin{proof}
         Below are the proofs for both the sufficiency and necessity conditions:
        \begin{itemize}
            \item (Sufficiency) 
            
            \paragraph{General Case} Assume that some agent $i$ has the ordering $g_1 \succeq_i g_2 \dots \succeq_i g_s$ over the set of goods $S$, where some preferences may be strict. For contradiction, assume that for all $r \in [|S|]$ and $j \in N$, $\card{A_i \cap T_i(S,r)} \ge \card{A_j \cap T_i(S,r)} - 1$, but for some $g \in S$ and $j \in N$, we have $\card{A_i \cap H_i(S,g)} < \card{(A_j \setminus \set{g^j}) \cap H_i(S,g)}$, where $g^j$ is agent $i$'s most preferred good from $A_j$.

            Let $\card{H_i(S,g)} = r'$. Then from \Cref{obs:SDEF1}, we know that $T_i(S,r') = H_i(S,g)$, and from our assumption, we know that $\card{A_i \cap T_i(S,r')} \ge \card{A_j \cap T_i(S,r')} - 1$. Note that this is equivalent to saying that $\card{A_i \cap H_i(S,g)} \ge \card{A_j \cap H_i(S,g)} - 1$.

            To get the contradiction, we just need to notice that $\card{(A_j \setminus \set{g^*}) \cap H_i(S,g)} \ge \card{A_j \cap H_i(S,g)} - 1$ for all $g^* \in A_j$, since if $g^* \in H_i(S,g)$, then we have $\card{(A_j \setminus \set{g^*}) \cap H_i(S,g)} = \card{A_j \cap H_i(S,g)} - 1$, and otherwise we have $\card{(A_j \setminus \set{g^*}) \cap H_i(S,g)} = \card{A_j \cap H_i(S,g)}$.
            
            \paragraph{2 Agents} For the case of $2$ agents (Agent $i$ and Agent $j$), it is sufficient to notice that $\card{A_i \cap T_i(S,r)} \ge \floor{r/2}$ implies that $\card{A_j \cap T_i(S,r)} \le \ceil{r/2}$, since $A_j = S \setminus A_i$ when there are two agents. Since $\floor{r/2} \ge \ceil{r/2}-1$, this gives us that $\card{A_i \cap T_i(S,r)} \ge \card{A_j \cap T_i(S,r)} - 1$ for all $i,j \in N$ and $r \in [|S|]$, which we know implies SD-EF1.
            
            \paragraph{Identical Orderings} Finally, in this case, we can see that for all $i \in N$, $r \in [|S|]$, $\card{A_i \cap T_i(S,r)} \in \set{\floor{r/n},\ceil{r/n}}$ implies that $\card{A_i \cap T_i(S,r)} \ge \card{A_j \cap T_i(S,r)} - 1$. This is due to the fact that since $T_i(S,r) = T_j(S,r)$ for all $i,j$, we must have that $\card{A_i \cap T_i(S,r)} \ge \floor{r/n}$ and $\card{A_j \cap T_i(S,r)} \le \ceil{r/n}$.
    
            \item (Necessity) 
            
            \paragraph{General Case} For contradiction, assume this is false. There is some allocation $A$ over a set of goods $S$, some agent $i$ and some $r \in [|S|]$ such that $A$ is SD-EF1 and $\card{A_i \cap T_i(S,r)} < \floor{r/n}$, and for all goods $g \in T_i(S,r)$, $g' \in S \setminus T_i(S,r)$, $g \succ_i g'$. Assume that agent $i$ has the following order over the goods in $S$, $g_1 \succeq_i g_2 \dots \succeq_i g_s$ where some of the preferences may be strict.

            Let $g^*$ be the good in $T_i(S,r)$ that is not strictly preferred to any other good in $T_i(S,r)$. Since agent $i$ strictly prefers all goods in $T_i(S,r)$ to all goods that are not, we must have that $\card{H_i(S,g^*)} = r$, and thus by \Cref{obs:SDEF1}, $T_i(S,r) = H_i(S,g^*)$. Therefore, we have that $\card{A_i \cap H_i(S,g^*)} < \floor{r/n}$. 
            
            When $\card{A_i \cap H_i(S,g^*)} < \floor{r/n}$, note that by the fact that every good must be allocated to one of the $n$ agents, there must exist some $j \in N$ such that $\card{A_j \cap H_i(S,g^*)} \ge \floor{r/n} + 1$. This means that for this $j$, we have $\card{A_i \cap H_i(S,g^*)} < \card{A_j \cap H_i(S,g^*)} - 1$. But that means that for any $g^j \in A_j$, we must have $\card{A_i \cap H_i(S,g^*)} < \card{(A_j \setminus \set{g^j}) \cap H_i(S,g^*)}$, contradicting the fact that $A$ is SD-EF1.
    
            \paragraph{Identical Orderings} In the case where agents have identical orderings over the goods, it can be seen that $\card{A_i \cap T_i(S,r)} \in \set{\floor{r/n},\ceil{r/n}}$ follows as a consequence of the necessity statement in the general case. When every agent gets at least items $\floor{r/n}$ of some set of size $r$, there must be some agent $i$ who gets exactly $\floor{r/n}$ items from the set. When $n$ divides $r$, then each agent getting at least $\floor{r/n}$ goods implies that each agent gets exactly $r/n = \ceil{r/n}$ goods. Otherwise, we will have $\floor{r/n} = \ceil{r/n} - 1$, and therefore, there can be no other agent $j$ such that $\card{A_j \cap T(S,r)} > \ceil{r/n}$, or else it would immediately follow that $\card{A_i \cap H_i(S,g^*)} < \card{(A_j \setminus \set{g^j}) \cap H_i(S,g^*)}$ for some $g^*$, and all $g^j \in A_j$. 
        \end{itemize}
    \end{proof}

    \subsection{Properties of SD-PROP1}

    Here we formalize the fact that SD-EF1 implies SD-PROP1, and that SD-PROP1 implies PROP1, which completes the hereditary relationships of desiderata shown in \Cref{fig:hierarchy}.
    
    \begin{proposition}\label{prop:sdef1impliessdprop1}
        If an allocation $A$ over a set of goods $S$ is SD-EF1, then it will also be SD-PROP1.
    \end{proposition}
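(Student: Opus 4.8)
The plan is to exhibit, for each agent $i$ with $A_i \neq S$, a single good that witnesses SD-PROP1 simultaneously against every threshold. The natural candidate is $g^*$, agent $i$'s most preferred good in $S \setminus A_i$ (which exists precisely because $A_i \neq S$). I would then fix an arbitrary $g' \in S$ and establish $\card{(A_i \cup \set{g^*}) \cap H_i(S,g')} \ge \ceil{\nicefrac{\card{H_i(S,g')}}{n}}$; proving this for every $g'$ with the \emph{same} witness $g^*$ is exactly the definition of SD-PROP1.

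The first step is to convert the top-set $H_i(S,g')$ into a $T_i$-style statement so that \Cref{prop:sd-ef1} can be invoked. Setting $r = \card{H_i(S,g')}$, \Cref{obs:SDEF1} gives $H_i(S,g') = T_i(S,r)$ and, more importantly, that every good in $H_i(S,g')$ is \emph{strictly} preferred to every good outside it. That strict boundary is precisely the hypothesis required by the necessity direction of \Cref{prop:sd-ef1}, so from the assumption that $A$ is SD-EF1 I obtain $\card{A_i \cap H_i(S,g')} \ge \floor{\nicefrac{r}{n}}$.

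The remaining step is a case split on divisibility. If $n \mid r$ then $\floor{\nicefrac{r}{n}} = \ceil{\nicefrac{r}{n}}$ and the bound already holds even before adding a good; likewise, if $A_i$ happens to contain $\ceil{\nicefrac{r}{n}}$ goods of $H_i(S,g')$ there is nothing to do. The only nontrivial case is $\card{A_i \cap H_i(S,g')} = \floor{\nicefrac{r}{n}} = \ceil{\nicefrac{r}{n}} - 1$, where I must show that adding $g^*$ raises the count by one, i.e.\ that $g^* \in H_i(S,g')$. This is the main obstacle, and I would settle it as follows: since $A_i$ holds only $\floor{\nicefrac{r}{n}} < r$ of the $r$ goods of $H_i(S,g')$ (using $n \ge 2$; the case $n=1$ forces $A_i = S$ and is vacuous), some good $h \in H_i(S,g') \setminus A_i$ remains unallocated to $i$. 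As $g^*$ is $i$'s favorite good in $S \setminus A_i$, we get $g^* \succeq_i h \succeq_i g'$, whence $g^* \in H_i(S,g')$. Because $g^* \notin A_i$, adding it increases the intersection to $\floor{\nicefrac{r}{n}} + 1 = \ceil{\nicefrac{r}{n}}$, which closes the argument and establishes SD-PROP1.
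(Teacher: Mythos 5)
Your proof is correct and takes essentially the same route as the paper's: the same witness (agent $i$'s most preferred good in $S \setminus A_i$), the same intermediate floor bound $\card{A_i \cap H_i(S,g')} \ge \floor{\nicefrac{\card{H_i(S,g')}}{n}}$ extracted from SD-EF1, and the same final step showing the witness must lie in any top-set whose count is still deficient. The only cosmetic difference is that you obtain the floor bound by citing the necessity direction of \Cref{prop:sd-ef1} together with \Cref{obs:SDEF1}, whereas the paper re-derives it inline via the same pigeonhole argument and packages the whole proof as a contradiction rather than a direct argument.
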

    \begin{proof}
        Assume that this is false, and for some allocation $A$ over a set of goods $S$, we have that $A$ is SD-EF1, but for some $i \in N$ and some $g^* \in S$, we have that $\card{(A_i \cup \set{g^i}) \cap H_i(M,g^*)} < \ceil{\nicefrac{\card{H_i(S,g^*)}}{n}}$, where $g^i$ is agent $i$'s most preferred good that is not in $A_i$.
        
        First, note that since $A$ is SD-EF1, it must be true that $\card{A_i \cap H_i(S,g^*)} \geq \floor{\nicefrac{\card{H_i(S,g^*)}}{n}}$. This is due to the fact that every good must be allocated to some agent, so if $\card{A_i \cap H_i(S,g^*)} < \floor{\nicefrac{\card{H_i(S,g^*)}}{n}}$ were true, then there must be some other agent $j \in N$ such that $\card{A_j \cap H_i(S,g^*)} \ge \floor{\nicefrac{\card{H_i(S,g^*)}}{n}} + 1$. This would directly imply that $\card{A_i \cap H_i(S,g^*)} < \card{A_j \cap H_i(S,g^*)} - 1$, so clearly $\card{A_i \cap H_i(S,g^*)} < \card{(A_j \setminus \set{g^j}) \cap H_i(S,g^*)}$ for any $g^j \in A_j$. 
        
        To see the contradiction, consider the inequality $\card{A_i \cap H_i(S,g^*)} \ge \floor{\nicefrac{\card{H_i(S,g^*)}}{n}}$. It implies that $\card{(A_i \cup \set{g^i}) \cap H_i(S,g^*)} \ge \ceil{\nicefrac{\card{H_i(S,g^*)}}{n}}$. This is because if $\card{A_i \cap H_i(S,g^*)} \le \ceil{\nicefrac{\card{H_i(S,g^*)}}{n}}$ were true, then there would need to be some good $g'$ such that $g' \in H_i(S,g^*)$ and $g' \not\in A_i$. Since $g^i$ is $i$'s most preferred good that is not in $A_i$, then we know that $g^i \succeq_i g'$, and thus $g^i \in H_i(S,g^*)$. Therefore, we would have that $\card{(A_i \cup \set{g^i}) \cap H_i(S,g^*)} \ge \floor{\nicefrac{\card{H_i(S,g^*)}}{n}} + 1 \ge \ceil{\nicefrac{\card{H_i(S,g^*)}}{n}}$
    \end{proof}

    It is worth noting that the above proof actually proves that SD-EF1 implies something slightly stronger than SD-PROP1, namely that for each $i \in N$ and $g \in S$, $\card{A_i \cap H_i(S,g)} \ge \floor{\nicefrac{\card{H_i(S,g)}}{n}}$ must hold. This can intuitively be thought of as an agent's ``round-robin share'' \citep{CFS17}, i.e., their bundle must be better than the worst possible bundle they could receive if they were to pick last in a round-robin procedure over $S$.

    \begin{proposition}
        If an allocation $A$ over a set of goods $S$ is SD-PROP1, then it will also be PROP1.
    \end{proposition}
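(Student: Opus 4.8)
The plan is to fix an arbitrary agent $i \in N$ with $A_i \neq S$, invoke the SD-PROP1 guarantee to obtain a good $g \in S \setminus A_i$ satisfying $\card{(A_i \cup \set{g}) \cap H_i(S,g')} \ge \ceil{\nicefrac{\card{H_i(S,g')}}{n}}$ for every $g' \in S$, and write $B = A_i \cup \set{g}$. Since PROP1 requires exactly a good $g \in S \setminus A_i$ with $v_i(A_i \cup \set{g}) \ge \frac{1}{n} v_i(S)$, it suffices to prove $v_i(B) \ge \frac{1}{n} v_i(S)$ for this particular $B$. The core idea is a summation-by-parts (``layer-cake'') decomposition of additive value into contributions from top-sets, which is precisely what converts the ordinal counting guarantee of SD-PROP1 into the cardinal bound of PROP1.

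First I would order the goods of $S$ as $g_1,\ldots,g_m$ (with $m = \card{S}$) according to the strict refinement of $\succeq_i$ used by the $T_i$ function, so that $T_i(S,r) = \set{g_1,\ldots,g_r}$ for every $r$, and set $w_r = v_i(g_r)$ with the convention $w_{m+1} = 0$; note $w_1 \ge \cdots \ge w_m \ge 0$ since the ordering refines the value order. Abel summation then gives the identity
\[
v_i(B) = \sum_{r=1}^m (w_r - w_{r+1}) \cdot \card{B \cap T_i(S,r)},
\qquad
v_i(S) = \sum_{r=1}^m (w_r - w_{r+1}) \cdot r,
\]
the second equality following because $T_i(S,r) \subseteq S$ has exactly $r$ goods. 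As every coefficient $w_r - w_{r+1}$ is non-negative, it is enough to establish the per-layer bound $\card{B \cap T_i(S,r)} \ge \nicefrac{r}{n}$ at each index $r$ with $w_r > w_{r+1}$; weighting these against the coefficients and comparing the two sums term by term then yields $v_i(B) \ge \frac{1}{n} v_i(S)$ (indices with $w_r = w_{r+1}$ contribute nothing and can be discarded).

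The only delicate step, caused by possible ties, is to extract this per-layer bound from the SD-PROP1 hypothesis, which is phrased through the sets $H_i(S,g')$ rather than $T_i(S,r)$. Here I would observe that whenever $w_r > w_{r+1}$, the good $g_r$ is the least-preferred good of its value, so every good ranked below it is strictly less valuable; hence $\card{H_i(S,g_r)} = r$, and \Cref{obs:SDEF1} guarantees $H_i(S,g_r) = T_i(S,r)$ irrespective of the tie-breaking. Instantiating the SD-PROP1 condition at $g' = g_r$ then gives $\card{B \cap T_i(S,r)} = \card{B \cap H_i(S,g_r)} \ge \ceil{\nicefrac{r}{n}} \ge \nicefrac{r}{n}$, exactly at the layers of positive weight. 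This closes the argument; the role of \Cref{obs:SDEF1} is precisely to absorb the tie handling that separates the $H_i$-based statement of SD-PROP1 from the $T_i$-based summation.
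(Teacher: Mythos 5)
Your proof is correct, and it takes a genuinely different route from the paper's. The paper argues via an explicit comparison bundle: it forms $P_i = \set{g_1, g_{n+1}, g_{2n+1}, \dots}$ (the best good from each consecutive block of $n$ goods in agent $i$'s ranking), shows $v_i(P_i) \ge \frac{1}{n}v_i(S)$ block by block, proves $A_i \cup \set{g^*} \succeq_i^{\sd} P_i$ by contradiction using \Cref{obs:SDEF1} (where $g^*$ is $i$'s favourite unallocated good), and finally invokes the fact that SD-dominance implies value dominance for every consistent additive valuation. Your Abel summation dispenses with the auxiliary bundle altogether: it re-derives that ordinal-to-cardinal principle inline, reducing PROP1 to the per-layer bound $\card{B \cap T_i(S,r)} \ge \nicefrac{r}{n}$ at the strict-drop indices, where $\card{H_i(S,g_r)} = r$ makes \Cref{obs:SDEF1} applicable and SD-PROP1 can be instantiated directly at $g' = g_r$. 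The ingredients are thus the same (tie handling via \Cref{obs:SDEF1}, instantiation of the SD-PROP1 quantifier), but the packaging differs, and each version has a merit. Yours is more self-contained, and it also sidesteps a small step the paper leaves implicit: the paper asserts the SD-PROP1 counting condition for the most-preferred unallocated good $g^*$, whereas the definition only provides \emph{some} witness $g \in S \setminus A_i$, so strictly one needs the (easy, unstated) observation that upgrading the witness to a weakly preferred good preserves the condition; you avoid this by working with the witness $g$ itself. The paper's proof, in exchange, is more modular---it reuses the SD-implies-value fact from \Cref{sec:prelim} rather than reproving it, and it exhibits a concrete share-like bundle that the allocation dominates, matching the ``round-robin share'' intuition the paper emphasizes in \Cref{app:sec-2}.
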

    \begin{proof}
        Assume that some agent $i$ has the ordering $g_1 \succeq_i g_2 \dots \succeq_i g_{|S|}$ over the set of goods $S$, where some preferences may be strict, and ties are broken based on the the tie-breaking ordering of the function $T_i$. 

        Define the bundle $P_i = \set{g_1,g_{n+1},g_{2n+1},\dots}$. It must be the case that $v_i(P_i) \ge \frac{1}{n}v_i(S)$. To see this, partition $S$ into disjoint subsets of size $n$ in the form of $C_1 = \set{g_1,\dots,g_n},C_2 = \set{g_{n+1},g_{2n}},C_3 = \set{g_{2n+1},g_{3n}}$, and so on. Notice that in each subset $C$, there is a single good in $g \in P_i \cap C$, and that good is weakly preferred to all other goods in $C$. Thus for each $C$, we must have that $v_i(P_i \cap C) \ge \frac{1}{n}v_i(C)$. Summing over all subsets we get our desired inequality.

        To complete the proof, we will show that $A_i \cup \set{g^*} \succeq^\sd_i P_i$, where $g^*$ is agent $i$'s most preferred good from $S \setminus A_i$. For contradiction, assume this were false, and that for some $g \in S$, $\card{(A_i \cup \set{g^*}) \cap H_i(S,g)} < \card{P_i \cap H_i(S,g)}$. Let $\card{H_i(S,g)} = r$. By \Cref{obs:SDEF1}, we must have that $H_i(S,g) = T_i(S,r) = \set{g_1,\dots,g_r}$. Because of the way that we constructed $P_i$, we know that there cannot be more than $\ceil{\nicefrac{r}{n}} = \ceil{\nicefrac{\card{H_i(S,g)}}{n}}$ goods from $P_i$ in $\set{g_1,\dots,g_r}$. However, since we know that $A$ is SD-PROP1, it must be the case that $\card{(A_i \cup \set{g^*}) \cap H_i(S,g)} \ge \ceil{\nicefrac{\card{H_i(S,g)}}{n}}$, giving a contradiction, and showing that $A_i \cup \set{g^*} \succeq^\sd_i P_i$.
        
        $A_i \cup \set{g^*} \succeq^\sd_i P_i$ directly implies that $v_i(A_i \cup \set{g^*}) \ge v_i(P_i) \ge \frac{1}{n}v_i(S)$. Showing that $A$ is PROP1.
    \end{proof}

    Note that since SD-PROP1 only considers agents' orderings over the goods in $M$, the above proposition implies that an allocation that is SD-PROP1 for a set of orderings, is guaranteed to be PROP1 on any set of valuation function that induce those orderings.

\section{Omitted from \Cref{sec:two}}\label{app:sec-4}

\subsection{Other Fairness Desiderata that can be guaranteed by the Envy-Balancing Lemma}
In the main body of the paper, we remarked that if one can always find two allocations over each day $M_t$ that cancel out, and satisfy some fairness desiderata that implies EF1, then one can find an allocation over the entire set of goods that satisfies that desiderata per day, along with EF1 up to each day. We showed that it is always possible to find such a pair of allocations that satisfies SD-EF1. Below, we will show the same for two other interesting strengthenings of EF1.

\begin{definition}[Envy-Freeness Up to Any Good (EFX)]
    An allocation $A$ of a set of goods $S$ is \emph{envy-free up to any good} (EFX) if for all $i,j \in N$ and $g \in A_j$, $v_i(A_i) \geq v_i(A_j\setminus\set{g})$.
\end{definition}

\begin{theorem}\label{thm:2-efxper-ef1upto}
    For temporal fair division with $n=2$ agents, an allocation that is EFX per day and EF1 up to each day exists. 
\end{theorem}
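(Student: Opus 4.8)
The plan is to mirror the strategy used for \Cref{thm:2-sdef1per-ef1each}: it suffices to show that on each day $t$ we can find two \emph{EFX} allocations of $M_t$ that cancel out, since EFX implies EF1 and the Envy-Balancing Lemma (\Cref{lem:envy-balancing}) uses exactly one of the two input allocations on each day while guaranteeing EF1 up to each day; thus any per-day property stronger than EF1 --- in particular EFX --- is preserved. So the whole theorem reduces to the following claim: for two agents and any set of goods $S$, there exist two EFX allocations of $S$ that cancel out.

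Unlike the SD-EF1 case, a single partition used in both orientations will not suffice, because the ``swap'' of an EFX allocation need not be EFX. Instead, first I would prove a one-agent lemma: if $\{X,Y\}$ is a partition of $S$ minimizing $\lvert v_i(X)-v_i(Y)\rvert$ (a \emph{most-balanced} partition for agent $i$), with all goods of value $0$ to $i$ placed in the smaller part, then every good in the larger part has $v_i$-value at least the gap $D_i = \lvert v_i(X)-v_i(Y)\rvert$. The proof is a one-line exchange argument: moving any good $g$ from the larger to the smaller part changes the gap to $\lvert D_i - 2v_i(g)\rvert$, which by minimality is $\ge D_i$, forcing $v_i(g) \ge D_i$. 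An immediate consequence is that agent $i$ is EFX-satisfied no matter which of the two parts they hold, since removing any good from the larger part brings its value down to at most that of the smaller part.

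With this lemma, I would construct the two allocations by a double cut-and-choose. For $B$, let agent $1$ propose their most-balanced partition of $M_t$ and let agent $2$ take their preferred part; for $B'$, let agent $2$ propose their most-balanced partition and let agent $1$ take their preferred part. Each allocation is EFX: the chooser has no envy at all, and the proposer is EFX-satisfied by the one-agent lemma regardless of which part they are left with. It then remains to verify the cancelling-out inequalities $v_i(B_i)+v_i(B'_i)\ge v_i(B_{3-i})+v_i(B'_{3-i})$. For agent $1$, the contribution from $B$ is at least $-D_1$ (agent $1$ holds one side of their own balanced partition, whose gap is exactly $D_1$), while the contribution from $B'$ equals agent $1$'s gap on agent $2$'s partition, which is $\ge D_1$ since $D_1$ is the minimum possible gap for agent $1$; these sum to a nonnegative number. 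The inequality for agent $2$ is symmetric.

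The main obstacle is getting these two inequalities to line up, and the key structural insight that makes them work is precisely the minimality of each proposer's gap: any partition has gap at least the proposer's minimum, so the surplus the chooser side extracts always dominates the deficit the proposer might incur on the other allocation. A secondary subtlety worth flagging is the treatment of zero-valued goods in the EFX definition, which I handle by pushing such goods into the proposer's smaller part so that every good in the larger part is strictly beneficial. Finally, note that the statement only claims existence: computing a most-balanced partition is NP-hard in general, so unlike \Cref{thm:2-sdef1per-ef1each} this construction is not claimed to run in polynomial time.
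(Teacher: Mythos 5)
Your proposal is correct and follows essentially the same route as the paper: it reduces via the Envy-Balancing Lemma (\Cref{lem:envy-balancing}) to producing, for each day, two cancelling EFX allocations, and obtains these by a double cut-and-choose in which each agent in turn proposes their most-balanced partition (zero-valued goods pushed to the lighter side) while the other agent chooses, with cancellation following exactly as in the paper from the minimality of the proposer's gap. The only cosmetic difference is that you verify the EFX property of cut-and-choose directly via the exchange argument, whereas the paper outsources that step to Theorem 4.3 of Plaut and Roughgarden.
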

\begin{proof}
    Due to \Cref{lem:envy-balancing}, it is sufficient to prove that for any day $t$, there exist EFX allocations $B_t$ and $B'_t$ of the goods in $M_t$ that cancel out. 

    We use the \cspp algorithm of \citet[Algorithm 4.2]{PR18} to produce the required two EFX allocations and show that they cancel out. Because their algorithm has a simpler description for additive valuations (which we focus on), we explicitly describe the construction here. Allocation $B_t$ is constructed as follows. 
    \begin{enumerate}
        \item Find a partition $(P,Q)$ of $M_t$ that minimizes $|v_1(P) - v_1(Q)|$. With loss of generality, assume that $v_1(P) \geq v_1(Q)$. 
        \item If there are any goods $g \in P$ such that $v_1(g) = 0$, move them to $Q$. Note that this does not change $v_1(P)$ or $v_1(Q)$. 
        \item Allow agent $2$ to pick their preferred bundle, and assign the other bundle to agent $1$.
    \end{enumerate}
    Allocation $B'_t$ is computed similarly, but reversing the roles of agents $1$ (who picks) and agent $2$ (who cuts). 
    
    \citet[Theorem 4.3]{PR18} show that the resulting allocations, $B_t$ and $B'_t$ are EFX. It remains to show that they cancel out. Due to symmetry, we simply need to argue the cancellation for agent $1$, i.e.,
    \[
    v_1(B_{t,1}) + v_1(B'_{t,1}) \ge v_1(B_{t,2}) + v_1(B'_{t,2}) \Leftrightarrow v_1(B'_{t,1})-v_1(B'_{t,2}) \ge v_1(B_{t,2})-v_1(B_{t,1}).
    \]
    Note that $v_1(B'_{t,1})-v_1(B'_{t,2}) \ge 0$ because agent $1$ picks their favorite out of the two bundles in $B'_t$. Further, $|v_1(B'_{t,1})-v_1(B'_{t,2})| \ge |v_1(B_{t,1})-v_1(B_{t,2})|$ because $(B_{t,1},B_{t,2})$ is the partition that minimizes the difference between agent $1$'s value for the two bundles. Putting the two together, we get the desired result. 
\end{proof}

Next, we derive the same result for EF1+PO. Interestingly, we use the two cancelling EFX allocations produced for the previous result in order to show the existence of two cancelling EF1+PO allocations. This requires adding a minor tie-breaking rule to the procedure for computing these EFX allocations (the \cspp algorithm due to \citet{PR18}): when the chooser (agent $2$ in $B_t$) is indifferent between the two bundles but the cutter (agent $1$ in $B_t$) is not, the chooser must pick the bundle the cutter values less. An interested reader can note that the allocations would have remained EFX even if we had introduced this tie-breaking in the proof of \Cref{thm:2-efxper-ef1upto}, but it was not needed there. 

\begin{definition}[Pareto Optimality (PO)]
    An allocation $A$ of a set of goods $S$ is \emph{Pareto optimal} (PO) if there is no allocation $A'$ such that $v_i(A'_i) \geq v_i(A_i)$ for all $i \in N$ and at least one inequality is strict.
\end{definition}

\begin{theorem}\label{thm:2-ef1poper-ef1upto}
    For temporal fair division with $n=2$ agents, an allocation that is EF1+PO per day and EF1 up to each day exists. 
\end{theorem}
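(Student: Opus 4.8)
The plan is to reduce to the Envy-Balancing Lemma (\Cref{lem:envy-balancing}) exactly as in the preceding result: since that lemma turns any collection of per-day EF1 allocations that pairwise cancel out into an allocation that is EF1 up to each day while using \emph{exactly one} of the two supplied allocations on each day, it suffices to exhibit, for every day $t\in[k]$, two allocations of $M_t$ that are each EF1+PO and that cancel out. I would reuse the two EFX allocations $B_t,B'_t$ of $M_t$ already produced in the proof of \Cref{thm:2-efxper-ef1upto} by the \cspp procedure of \citet{PR18} (now run with the added chooser tie-breaking rule), which we have shown are EFX---hence EF1---and cancel out, with $B_t$ favoring agent $2$ (who chooses) and $B'_t$ favoring agent $1$.

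The core step is to upgrade each of these EFX allocations to EF1+PO without destroying the cancellation structure. For a fixed day, let $\hat B_t$ be a Pareto-maximal allocation of $M_t$ that weakly Pareto-dominates $B_t$; such an allocation exists because there are finitely many allocations and any maximal element lying above $B_t$ is PO overall. The key lemma to prove is that $\hat B_t$ is still EF1. Here I would invoke the threshold structure of Pareto optimality for two agents: a PO allocation assigns goods according to the ratio $v_1(g)/v_2(g)$, agent $1$ receiving the high-ratio goods and agent $2$ the low-ratio ones, and the chooser tie-breaking rule we added to \cspp is precisely what fixes the assignment of the boundary (equal-ratio) goods in the direction that keeps each agent's residual envy removable by a single good. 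Combining this structure with the EFX guarantee of $B_t$ (which bounds the envy before the improvement) should yield EF1 of $\hat B_t$, and $\hat B'_t$ is defined analogously from $B'_t$. This EF1-preservation under Pareto improvement is the main obstacle, because weak Pareto domination does \emph{not} preserve EF1 for general allocations---the single enviable good can migrate out of the other agent's bundle---so the argument genuinely needs both the EFX starting point and the two-agent PO threshold structure pinned down by the tie-break.

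Finally I would verify that $\hat B_t$ and $\hat B'_t$ still cancel out, which is a short monotonicity computation. Writing $\delta_i(C)=v_i(C_i)-v_i(C_{3-i})=2v_i(C_i)-v_i(M_t)$ for an allocation $C$ of $M_t$, a weak Pareto improvement raises $v_i(C_i)$ for both $i$ and therefore raises $\delta_i(C)$ for both $i$. Hence passing from $B_t,B'_t$ to $\hat B_t,\hat B'_t$ only increases all four quantities $\delta_1(\hat B_t),\delta_1(\hat B'_t),\delta_2(\hat B_t),\delta_2(\hat B'_t)$, so the cancellation inequalities $\delta_1(\hat B_t)+\delta_1(\hat B'_t)\ge 0$ and $\delta_2(\hat B_t)+\delta_2(\hat B'_t)\ge 0$ follow from those already established for $B_t,B'_t$; the same monotonicity keeps $\hat B_t$ agent-$2$-favored and $\hat B'_t$ agent-$1$-favored, meeting the labeling hypothesis of \Cref{lem:envy-balancing}. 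Feeding the pairs $(\hat B_t,\hat B'_t)$ into the Envy-Balancing Lemma then produces an allocation that is EF1+PO per day (each day uses one of the two EF1+PO allocations) and EF1 up to each day, which is exactly the claimed guarantee.
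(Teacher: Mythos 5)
Your overall skeleton matches the paper's proof exactly: reduce via \Cref{lem:envy-balancing} to finding, per day, two EF1+PO allocations that cancel out; start from the two EFX allocations of \Cref{thm:2-efxper-ef1upto} (with the chooser tie-break); Pareto-improve each; and preserve cancellation by monotonicity of $\delta_i$. Your cancellation and label-preservation arguments are correct and are essentially the paper's closing remark. The problem is the core step, which you correctly identify as the main obstacle but do not actually prove: you propose to establish that $\hat B_t$ is still EF1 by invoking ``the threshold structure of Pareto optimality for two agents,'' i.e., that a PO allocation assigns goods by the ratio $v_1(g)/v_2(g)$ with a cutoff. This premise is false for indivisible goods: the ratio-threshold characterization holds for \emph{fractional} Pareto optimality (fPO), not for PO. Concretely, take two goods $a,b$ with $v_1(a)=2$, $v_1(b)=1$, $v_2(a)=3$, $v_2(b)=1$. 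The allocation giving $a$ to agent $1$ and $b$ to agent $2$ is PO (no integral allocation dominates it), yet agent $1$ holds the \emph{low}-ratio good and agent $2$ the high-ratio one, so it has no threshold structure; moreover no integral fPO allocation weakly dominates it, so you cannot repair the argument by choosing $\hat B_t$ to be fPO instead of merely Pareto-maximal. The tie-break also cannot play the role you assign it: it is a bundle-level choice rule (the chooser takes the bundle the cutter values less when indifferent), not a rule about boundary goods in a ratio ordering.

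The paper closes this gap with a direct accounting argument that needs no structural characterization of PO. Write the dominating PO allocation as $((B_{t,1}\setminus P)\cup Q,\ (B_{t,2}\setminus Q)\cup P)$ with $P \subseteq B_{t,1}$, $Q \subseteq B_{t,2}$. The chooser (agent $2$) had $v_2(B_{t,2}) \ge v_2(M_t)/2$, and their utility only increases, so with two agents they remain envy-free. For the cutter (agent $1$), EFX of $B_t$ gives $v_1(g') \ge v_1(M_t) - 2v_1(B_{t,1})$ for \emph{every} $g' \in B_{t,2}$; if $Q \subsetneq B_{t,2}$, some such good $g$ remains in agent $2$'s new bundle, and since agent $1$'s utility only increased, removing $g$ certifies EF1. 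The tie-break is used only for the degenerate case $Q = B_{t,2}$: there one deduces $v_2(B_{t,1}) = v_2(B_{t,2})$, the tie-break then forces $v_1(B_{t,1}) \ge v_1(B_{t,2})$, and the dominating allocation is in fact EF+PO (two copies of which trivially cancel out). You would need to replace your threshold argument with something of this kind; as written, the claim that the combination ``should yield EF1 of $\hat B_t$'' is precisely the statement that needs proof, and the route you sketch for it rests on a property that PO allocations of indivisible goods do not have.
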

\begin{proof}
    Due to \Cref{lem:envy-balancing}, it is sufficient to prove that for any day $t$, there exist EF1+PO allocations $B_t$ and $B'_t$ of the goods in $M_t$ that cancel out. 

    We claim that for any day $t$, the existence of $2$ EFX allocations that cancel out implies the existence of $2$ EF1+PO allocations that cancel out. For contradiction, assume this were false, and for some instance there are not $2$ EF1+PO allocations that cancel out.

    Let the allocation $B_t$ be the allocation that was constructed by the process in \Cref{thm:2-efxper-ef1upto}, where the bundles are chosen according to agent $1$'s valuations, and agent $2$ picks their preferred bundle. We know that in this allocation, $v_2(B_{t,2}) \geq v_2(B_{t,1})$, and for all $g \in B_{t,2}, v_1(B_{t,1}) \geq v_1(B_{t,2}\setminus \{g\})$. If $B_t$ is a PO allocation, then we do not have to deal with it further, otherwise, we know that there exists some $P \subseteq B_{t,1}, Q \subseteq B_{t,2}$, such that the allocation $((B_{t,1}\setminus P)\cup Q, (B_{t,2}\setminus Q)\cup P)$ is PO, and that $v_1((B_{t,1}\setminus P)\cup Q) \geq v_1(B_{t,1})$ and $v_2((B_{t,2}\setminus Q)\cup P) \geq v_2(B_{t,2})$.
    
    First, note that in this PO allocation $Q$ must be a strict subset of $B_{t,2}$. If this were not true, then $v_2((B_{t,2}\setminus Q)\cup P) \geq v_2(B_{t,2})$ would tell us that $v_2(B_{t,1}) \ge v_2(P) \ge v_2(B_{t,2})$, with the first inequality being due to the fact that $P \subseteq B_{t,1}$. By the process used to construct $B_t$, we already know that $v_2(B_{t,2}) \ge v_2(B_{t,1})$, meaning $v_2(B_{t,2}) = v_2(B_{t,1})$, and by the tie-breaking mechanism in the cut-and-choose algorithm, this would mean that $((B_{t,1}\setminus P)\cup Q, (B_{t,2}\setminus Q)\cup P)$ must be an EF+PO allocation. This would lead to a contradiction, since $2$ copies of any EF allocation clearly cancel out.
    
    Therefore, we must have that $Q \subset B_{t,2}$. Due to this, we know that there will be some good $g$ such that $g \in (B_{t,2}\setminus Q) \cup P$ and $g \in B_{t,2}$. After the reallocation, we know that no agent was made worse off, so it will be the case that Agent $2$ will not feel envy in the new PO allocation. We know that $((B_{t,1}\setminus P)\cup Q, (B_{t,2}\setminus Q)\cup P)$ cannot be an EF allocation (by the logic from the above paragraph this would lead to contradiction), so we know that Agent $1$ does feel envy in the PO allocation. Since $B_t$ is EFX, we know that for all $g' \in B_{t,2}$, we have that $v_1(B_{t,1}) \ge v_1(B_{t,2}) - v_1(g')$. Since we only have $2$ agents, we have that $v_1(B_{t,2}) = v_1(M_t) - v_1(B_{t,1})$, and thus $v_1(g') \ge v_1(M_t) - 2v_1(B_{t,1})$ for all $g' \in B_{t,2}$. Therefore, due to the fact that $v_1((B_{t,1}\setminus P)\cup Q) \geq v_1(B_{t,1})$, we have that $v_1(g) \ge v_1(M') - 2v_1((B_{t,1}\setminus P)\cup Q)$, which can be rearranged back into $v_1((B_{t,1}\setminus P)\cup Q) \ge v_1((B_{t,2}\setminus Q)\cup P) - v_1(g)$, showing that this PO allocation is also EF1.
    
    We can symmetrically repeat this procedure with the other EFX allocation $B'_t$ where the bundles are selected according to Agent $2$'s valuations and Agent $1$ chooses their preferred bundle. Since our PO reallocation can only increase the utility of both agents (and thus lessen their envy), we can conclude that if the original EFX allocations cancel out, the corresponding EF1+PO allocations also cancel out, giving a contradiction.
\end{proof}

We note that while we provided a polynomial-time algorithm in \Cref{thm:2-sdef1per-ef1each} for achieving SD-EF1 per day and EF1 up to each day result, the constructions in \Cref{thm:2-efxper-ef1upto,thm:2-ef1poper-ef1upto} for achieving EFX or EF1+PO per day are not efficient because they rely on partitioning a set of numbers into two subsets with near-equal sum. This is NP-hard because PARTITION (which requires exactly equal sum) can be trivially reduced to it. This raises the following interesting open question:

\begin{open}
    For temporal fair division with $n=2$ agents, can EFX or EF1+PO per day and EF1 up to each day be achieved in polynomial time?
\end{open}

\subsection{Two Agents and Identical Days}\label{sec:two-days}
In this section, we briefly discuss the implications of our results for the special case where we have two agents and identical days. Recall that \citet{igarashi2024repeated} focus only on the case of identical days and many of their results hold for only two agents. In particular, they show that an allocation that is EF1 per day and (exact) EF overall exists and can be computed in polynomial time when the number of days $k$ is even. 

We prove a slightly stronger result via a much simpler technique, albeit only for allocating goods while their result is for allocating a mixture of goods and chores.

\begin{proof}[Proof of \Cref{2-iddays-SDEF1}]
    Consider the set of goods $M_1$ on day $1$ (each day has a set of goods identical to this). Consider the two allocations $B = (B_1,B_2)$ and $B' = (B_2,B_1)$ produced in \Cref{lem:2-opposite-sdef1} such that both $B$ and $B'$ are SD-EF1 allocations of $M_1$. The desired allocation is one that uses $B$ on every odd day and $B'$ on every even day. Clearly, SD-EF1 per day is satisfied. Since the allocations completely cancel out after every even day (each agent has exactly the same number of copies of each good), we get SD-EF1 up to each day and SD-EF up to each even day.
\end{proof}

\section{Omitted From \Cref{sec:laminar}}\label{app:sec-7}

\subsection{Laminar Set Families}

For any set of goods $M$, a collection $\mathcal{L}$ of subsets of $M$ is a laminar set family over $M$ if for every pair of subsets $S,T \in \mathcal{L}$, we have that either $S \cap T = \emptyset$, $S \subset T$, or $T \subset S$.
    
For each set $S \in \mathcal{L}$, let $D(S): \mathcal{L} \rightarrow 2^\mathcal{L}$ be a function returning each set $S' \in \mathcal{L}$ such that $S' \subset S$. Let $C(S): \mathcal{L} \rightarrow 2^\mathcal{L}$ be the function returning every maximal set of $D(S)$ (every set in $D(S)$ that is not contained in some other set from $D(S)$).

We can say that a laminar set family $\mathcal{L}$ over $M$ is \emph{Complete} if and only if the following conditions hold:
\begin{itemize}
    \item The set $M$ is an element of $\mathcal{L}$.
    \item For every set $S \in \mathcal{L}$, either $D(S) = \emptyset$, or $\cup_{S' \in D(S)} = S$.
\end{itemize}

We will assume that all laminar sets families we deal with in our setting are complete. This can be assumed without loss of generality, as any laminar set family $\mathcal{L}$ over a set of goods $M$ that is not complete can be ``completed'' through the following simple procedure:

\begin{itemize}
    \item If $M \not\in \mathcal{L}$, add $M$ to $\mathcal{L}$.
    \item For every set $S \in \mathcal{L}$, if $0 < |\cup_{S' \in D(S)}|< |S|$, then create a new set $S^* = S \setminus \cup_{S' \in D(S)}$ and add it to $\mathcal{L}$.
\end{itemize}

After the above steps, $\mathcal{L}$ will remain a laminar set family. $\mathcal{L}$ will clearly remain laminar after the addition of $M$ since every other set in $\mathcal{L}$ will be a strict subset of $M$. $\mathcal{L}$ will also remain laminar after the addition of each of the $S^*$ sets, since the definition of laminar families ensures that if none of the sets in $\mathcal{L}$ that are subsets of a set $S$ contain some good $g$, then the only other sets in $\mathcal{L}$ that can contain $g$ are strict supersets of $S$. Therefore, the $S^*$ that the completion process adds to $\mathcal{L}$ will be a strict subset of $S$, thus a strict subset of all supersets of $S$, and will have no overlap with any other sets.

Note that in any complete laminar set family $\mathcal{L}$, and any $S \in \mathcal{L}$, $C(S)$ will either be empty, or will form a complete disjoint partitioning of $S$. If this were not true, then we would have that $\cup_{S' \in C(S)} \subset S$ and $\cup_{S' \in D(S)} = S$. Clearly, every good $g \in \cup_{S' \in D(S)}$ must appear in some maximal set from $D(S)$, giving a contradiction. The fact that all the sets in $C(S)$ will be disjoint follows immediately from the definition of a laminar set family and from the fact that each set $S' \in C(S)$ is maximal in $D(S)$.

One can think of the structure of a complete laminar set family $\mathcal{L}$ as a tree where each subset is a node. $M$ is the root node of the tree. For any set $S \in \mathcal{L}$, $D(S)$ are the \textbf{descendants} on $S$, and $C(S)$ are the \textbf{children} of $S$. The \textbf{Leaf sets} of $\mathcal{L}$ are any sets $S \in \mathcal{L}$ who have no descendants (i.e. $D(S) = \emptyset$). Thinking of complete Laminar Set Families in this way will allow us to topologically sort the sets. Particularly, thinking of $\mathcal{L}$ as a directed graph where there is a directed edge going from each child to its parent, then a topological sorting of $\mathcal{L}$ will result in no set appearing in the ordering before any of its descendants.

\subsection{Laminar EF1}

With the definitions of laminar set families in mind, we can now prove \Cref{lem:laminar-envy-balancing}.

\begin{algorithm}
    \caption{EnvyBalancing++}
    \textbf{Input} {A fair division instance $(N,M,v)$; A partition $C = \set{C_1,\dots,C_k}$ over the goods in $M$; For $t \in [k]$, a pair of allocations $(B^1_t,B^2_t)$ of the set of goods $C_t$ that cancel out, with labels assigned in such a way that if neither of the allocations are Envy-Free, then $v_1(B^1_{t,1}) - v_1(B^1_{t,2}) \geq 0$ and $v_2(B^2_{t,2}) - v_2(B^2_{t,1}) \geq 0$.}
    
	\textbf{Output} {An allocation $A$ of the set of goods $M$}
    \begin{algorithmic}[1]
        \IF{$C = \emptyset$}
            \STATE $(A,A') \gets $ Two EF1 allocations of $S$ that cancel out (chosen by any subroutine)
        \ELSE
            \STATE $F \gets \emptyset$, $S \gets \emptyset$, $e_1 \gets 0$, $e_2 \gets 0$
            \FOR{$t \in [k]$}
                \IF{$v_1(B^1_{t,1}) \geq v_1(B^1_{t,2}) \land v_2(B^1_{t,2}) \geq v_2(B^1_{t,1})$}
                    \STATE $F \leftarrow F \cup \{B_t\}$
                \ELSIF{$v_1(B^2_{t,1}) \geq v_1(B^2_{t,2}) \land v_2(B^2_{t,2}) \geq v_2(B^2_{t,1})$}
                    \STATE $F \leftarrow F \cup \{B'_t\}$
                \ELSE
                    \IF{$e_1 \leq 0$}
                        \STATE $S \leftarrow S \cup \{B^1_t\}$
                        \STATE $e_1 \leftarrow e_1 + (v_1(B^1_{t,1}) - v_1(B^1_{t,2}))$
                        \STATE $e_2 \leftarrow e_2 + (v_2(B^1_{t,2}) - v_2(B^1_{t,1}))$
                    \ELSE
                        \STATE $S \leftarrow S \cup \{B^2_t\}$
                        \STATE $e_1 \leftarrow e_1 + (v_1(B^2_{t,1}) - v_1(B^2_{t,2}))$
                        \STATE $e_2 \leftarrow e_2 + (v_2(B^2_{t,2}) - v_2(B^2_{t,1}))$
                    \ENDIF
                \ENDIF
    
                \IF{$e_1 \geq 0 \land e_2 \geq 0$}
                    \STATE $F \leftarrow F \cup S$
                    \STATE $S \leftarrow \emptyset$, $e_1 \leftarrow 0$, $e_2 \leftarrow 0$
                \ELSIF{$e_1 \leq 0 \land e_2 \leq 0$}
                    \STATE $F \leftarrow F \cup \text{SWAP}(S)$
                    \STATE $S \leftarrow \emptyset$, $e_1 \leftarrow 0$, $e_2 \leftarrow 0$
                \ENDIF
            \ENDFOR
            \STATE $A \gets$ allocation in which each $C_t$ is allocated according to the allocation of $C_t$ in $F \cup S$, for each $t \in [k]$
            \STATE $A' \gets$ allocation in which $C_t$ is allocated according to the allocation of $C_t$ in $F \cup \text{SWAP}(S)$, for each $t \in [k]$ 
        \ENDIF
        \RETURN $(A,A')$
    \end{algorithmic}
    \label{alg:envy-balancing++}
\end{algorithm}

\begin{algorithm}
    \caption{Laminar Envy-Balancing Algorithm}
    \textbf{Input} {A fair division instance $(N,M,v)$; A laminar set $\mathcal{L}$ over $M$}
    
	\textbf{Output} {An allocation $A$ of the set of goods $M$}
    \begin{algorithmic}[1]
        \STATE $T \gets (S_1,S_2,\dots,S_{|\mathcal{L}|})$ (A topological ordering of the sets in $\mathcal{L}$)
        \STATE $\forall S \in \mathcal{L}, B_S \gets \emptyset$
        \FOR{$t \in [|\mathcal{L}|]$}
            \STATE $(B^1_S, B^2_S) \gets $ EnvyBalancing++($(N,S_t,v)$, $C(S_t)$, $\set{B_{S'} : S' \in C(S_t)}$)
            \STATE $B_{S_t} = (B^1_S, B^2_S)$
        \ENDFOR
        \RETURN $B_{S_{|\mathcal{L}|}}$
    \end{algorithmic}
    \label{alg:laminar-envy-balancing}
\end{algorithm}

\laminarenvybalancing*
\begin{proof}
    We begin by introducing a slight adaptation of the Envy-Balancing algorithm from \Cref{lem:envy-balancing}, which will serve as the basis for our algorithm to give the stronger guarantee of laminar fairness. We will use the notation introduced in the proof of \Cref{lem:envy-balancing} to describe the behaviour of the algorithm.

    At the end of the \Cref{alg:envy-balancing}, the allocation $A$ induced by $F \cup S$ is returned as the final allocation. $A$ is an EF1 allocation since we know that $A_F$ is an EF allocation and $A_S$ is EF1. Note that $A_{F \cup \text{SWAP}(S)}$ would also be an EF1 allocation for the same reason (The proof of \Cref{lem:envy-balancing} shows that at every point in \Cref{alg:envy-balancing}, both $A_S$ and $A_{\text{SWAP}(S)}$ are EF1 allocations). Also note that the two allocations $A_{F \cup S}$ and $A_{F \cup \text{SWAP}(S)}$ will cancel out. This is because we know that neither agent feels envy in $A_F$, and we also know that $A_S$ and $A_{\text{SWAP}(S)}$ will cancel out.

    This was shown in the proof for \Cref{lem:envy-balancing}, for completeness we will show it here as well. For every day $t$, let the inputs to the Envy-Balancing algorithm for day $t$ be $(B^1_t,B^2_t)$, where if neither is EF, $B^1_t$ is agent $1$'s preferred allocation, and $B^2_t$ is agent $2$'s preferred allocation. Let $D \subseteq [k]$ be the set of days such that an allocation for day $t$ appears in $S$. From the fact that we know the pair of allocations $(B^i_t,B^{3-i}_t)$ on each day cancels out, we have:

    \[\sum_{t \in D}{(v_i(B^i_{t,i}) + v_i(B^{3-i}_{t,i}))} \geq \sum_{t \in D}{(v_i(B^i_{t,3-i}) + v_i(B^{3-i}_{t,3-i}))}\]

    Then note that we have $v_i(A_{S,i}) + v_i(A_{\text{SWAP}(S),i}) = \sum_{t \in D}{(v_i(B^i_{t,i}) + v_i(B^{3-i}_{t,i}))}$, and $v_i(A_{S,3-i}) + v_i(A_{\text{SWAP}(S),3-i}) = \sum_{t \in D}{(v_i(B^i_{t,3-i}) + v_i(B^{3-i}_{t,3-i}))}$. This is because for each $t \in D$, both $B^i_t$ and $B^{3-i}_t$ are contained in exactly one of $S$ or SWAP$(S)$. This implies that $A_S$ and $A_{\text{SWAP(S)}}$ cancel out.

    We will use this fact in order to construct a new algorithm, \Cref{alg:envy-balancing++}, named EnvyBalancing++. This algorithm takes as input a set of goods $S$ and a partitioning $C$ over the goods in $S$. If $C = \emptyset$, then EnvyBalancing++ simply returns $2$ EF1 allocations over $S$ that cancel out, using any arbitrary method to do so (such as the one introduced in \Cref{lem:2-opposite-sdef1}). Otherwise, if $C$ is a complete and disjoint partition of $S$, then EnvyBalancing++ runs \Cref{alg:envy-balancing} on the input, but returns both $A_{F \cup S}$ and $A_{F \cup \text{SWAP}(S)}$ as the final allocations.
    
    \Cref{alg:laminar-envy-balancing} uses EnvyBalancing++ as a subroutine, and is the algorithm that will find the desired laminar fair allocation. It takes in a laminar set family $\mathcal{L}$, and finds an allocation that is laminar EF1 with respect to $\mathcal{L}$ by the following process:

    \begin{itemize}
        \item Sort the sets from $\mathcal{L}$ topologically such that no set appears in the order before any of its descendants.
        \item Run EnvyBalancing++ on each set $S \in \mathcal{L}$ in the topological order, with the partitioning of $S$ being given by $C(S)$, and the pairs of allocations for each subset in $C(S)$ being the allocations the algorithm has already found for each of the children of $S$. If $S$ is a leaf set, then $C(S) = \emptyset$, and the algorithm will return two arbitrary EF1 allocations over $S$ that cancel out. Otherwise, two allocations generated by the Envy-Balancing algorithm will be returned.
        \item As the final output, the algorithm returns the $2$ allocations over $M$, which will be the last set from $\mathcal{L}$ ordered topologically. Each of these allocations are guaranteed to be EF1 with respect to every set $S \in \mathcal{L}$.
    \end{itemize}

    We will prove the following inductive statement: For any set $S \in \mathcal{L}$, if the algorithm found $2$ allocations for each child $S' \in C(S)$ that are EF1 with respect to $S'$ and all the descendants of $S'$, then the algorithm will find $2$ allocations over $S$ that are EF1 with respect to $S$ and all descendants of $S$.

    We will first start with the case where $S$ is not a leaf set. Since the algorithm visits each descendant of $S$ prior to visiting $S$, if the hypothesis holds, then it will have already found $2$ allocations for every set $S' \in C(S)$ which will cancel out, will be EF1 with respect to $S'$, and will be EF1 with respect to every descendent of $S'$. The algorithm will use these allocations as input to EnvyBalancing++. Since we are assuming that $\mathcal{L}$ is a complete laminar set family, the children of $S$ will make a complete and disjoint partition of $S$. We know that EnvyBalancing++ will output two allocations over $S$ that cancel out and are EF1 over $S$. Further, we know that both allocations will be EF1 with respect to all descendants of $S$ due to the fact that the fact that the final allocation outputted by EnvyBalancing++ will be induced by picking one of the inputted allocations from each of child of $S$, which are all known to be EF1 with respect to the child and all its descendants from the hypothesis.

    As a base case, if $S$ is a Leaf set, the algorithm uses any arbitrary method to find the two allocations for $S$. These allocations will be EF1 with respect to $S$, and since leaf sets do not have any descendants, they will vacuously be EF1 with respect to all descendants as well. Thus, both of the final allocations returned by the algorithm will be EF1 with respect to $M$ and all of its descendants. Since $D(M) = \mathcal{L} \setminus M$, these allocations with be Laminar EF1 with respect to $\mathcal{L}$.

    Finally, topological sorting of the sets will take polynomial-time in the number of elements in the tree it is traversing, and it is well-known that every laminar set family with a ground set $M$ can have at most $2|M|-1$ members. This together with the fact that EnvyBalancing++ clearly runs in polynomial time allows us to conclude that the entire procedure will be polynomial.
\end{proof}
\end{document}